\newenvironment{proof}{{\indent \indent \it Proof:}}{\hfill $\blacksquare$\par}
\newtheorem{proposition}{\hspace{1em}Proposition}}
\newtheorem{corollary}{\hspace{1em}Corollary}}
\newtheorem{theorem}{\hspace{1em}Theorem}}
\newtheorem{lemma}{\hspace{1em}Lemma}}
\newtheorem{remark}{\hspace{1em}Remark}}
\newcommand{\Rmnum}[1]{\expandafter\@slowromancap\romannumeral #1@}
\def\BibTeX{{\rm B\kern-.05em{\sc i\kern-.025em b}\kern-.08em
    T\kern-.1667em\lower.7ex\hbox{E}\kern-.125emX}}
\newcommand\notsotiny{\@setfontsize\notsotiny\@vipt\@viipt}
\titlespacing*{\section}{0pt}{1.2ex plus .0ex minus .0ex}{.3ex plus .0ex}
\titlespacing*{\subsection}{0pt}{1.2ex plus .0ex minus .0ex}{.3ex plus .0ex}
\begin{document}
\title{Secure Wireless-Powered zeRIS Communications
}

\author{Jingyu Chen, Kunrui Cao, Panagiotis D. Diamantoulakis,~\IEEEmembership{Senior Member,~IEEE}, \\Lu Lv, Liang Yang, Haolian Chi, and Haiyang Ding
\thanks{
    Jingyu Chen, Kunrui Cao, Haolian Chi, and Haiyang Ding are with the School of Information and Communications, National University of Defense Technology, Wuhan 430035, China (e-mail: chenjingyu@nudt.edu.cn; krcao@nudt.edu.cn; chihaolian20@nudt.edu.cn; dinghy2003@nudt.edu.cn).
    
    Panagiotis D. Diamantoulakis is with the Department of Electrical and Computer Engineering, Aristotle University of Thessaloniki, Thessaloniki 54124, Greece (padiaman@auth.gr).

    Lu Lv is with the State Key Laboratory of Integrated Services Networks, Xidian University, Xi'an 710071, China (e-mail: lulv@xidian.edu.cn).

    Liang Yang is with the College of Computer Science and Electronic Engineering, Hunan University, Changsha 410082, China (e-mail: liangy@hnu.edu.cn).
    }
    \vspace{-0.8cm}
}
\maketitle

\begin{abstract}
    This paper introduces the concept of wireless-powered zero-energy reconfigurable intelligent surface (zeRIS), and investigates a wireless-powered zeRIS aided communication system in terms of security, reliability and energy efficiency.
    In particular, we propose three new wireless-powered zeRIS modes:
    1) in mode-\Rmnum{1}, $N$ reconfigurable reflecting elements are adjusted to the optimal phase shift design of information user to maximize the reliability of the system;
    2) in mode-\Rmnum{2}, $N$ reconfigurable reflecting elements are adjusted to the optimal phase shift design of cooperative jamming user to maximize the security of the system;
    3) in mode-\Rmnum{3}, $N_1$ and $N_2$ $(N_1+N_2=N)$ reconfigurable reflecting elements are respectively adjusted to the optimal phase shift designs of information user and cooperative jamming user to balance the reliability and security of the system.
    Then, we propose three new metrics, i.e., joint outage probability (JOP), joint intercept probability (JIP), and secrecy energy efficiency (SEE), and analyze their closed-form expressions in three modes, respectively.
    The results show that under high transmission power, all the diversity gains of three modes are 1, and the JOPs of mode-\Rmnum{1}, mode-\Rmnum{2} and mode-\Rmnum{3} are improved by increasing the number of zeRIS elements, which are related to $N^2$, $N$, and $N_1^2$, respectively.
    In addition, mode-\Rmnum{1} achieves the best JOP, while mode-\Rmnum{2} achieves the best JIP among three modes.
    We exploit two security-reliability trade-off (SRT) metrics, i.e., JOP versus JIP, and normalized joint intercept and outage probability (JIOP), to reveal the SRT performance of the proposed three modes.
    It is obtained that mode-\Rmnum{2} outperforms the other two modes in the JOP versus JIP, while mode-\Rmnum{3} and mode-\Rmnum{2} achieve the best performance of normalized JIOP at low and high transmission power, respectively.
\end{abstract}

\begin{IEEEkeywords}
    Wireless-powered communication, zero-energy reconfigurable intelligent surface, security-reliability trade-off, physical layer security.
\end{IEEEkeywords}

\section{Introduction}
The use of reconfigurable intelligent surfaces (RISs), as a groundbreaking paradigm, is expected to realize reliable and secure communications with massive connectivity, higher network coverage, greater network capacity and higher energy efficiency in the era of 6G, thanks to a large number of low-power reflecting elements \cite{9424177}.
Moreover, RIS can enhance the desired signal or suppress detrimental interference by manipulating the amplitude and phase of signal to superimpose or cancel signal, thus effectively controlling the wireless propagation environment \cite{9326394}.

Due to the broadcast nature of wireless channels, wireless communications are extremely vulnerable to malicious eavesdropping attacks \cite{10605790}.
Physical layer security (PLS) is an appealing approach to strengthen the information security, which exploits the intrinsic stochastic nature of wireless channels to guarantee that an eavesdropper gleans no useful information at the physical layer about the legitimate information \cite{10044975}.
It is noteworthy that RIS can effectively improve the PLS, since it controls the wireless propagation environment to enhance the transmissions of legitimate information, and attenuate the transmission of eavesdropping signals.
Among the various techniques employed in the context of PLS, RIS aided cooperative jamming stands out as a prominent method \cite{10188924}. 
However, additional artificial noise leads to an increase of energy cost at cooperative nodes.
This is mitigated by energy harvesting (EH) technology.

Wireless-powered communication (WPC), combining the technologies of microwave wireless power transfer and radio frequency EH, exploits a harvest-then-transmit protocol to divide up the downlink energy transfer (ET) and uplink information transmission (IT) \cite{10414117}.

As the number of RIS elements increases, the energy consumption of RIS-aided WPC systems will be comparable to that of the radio frequency power, which needs to be considered into the overall communication design. 
Although the RIS can be powered by battery or grid, there are some drawbacks: 1) Powering the RIS through conventional power line networks not only increases the implementation cost, but also decreases the RIS deployment flexibility; 
2) Batteries typically have limited energy storage, thus significantly limiting the service life of the RIS;
3) Due to the complexity of the environment, manually replacing RIS batteries is costly, or even impossible.
To this end, some recent works have delved into the innovative concept of transforming RIS into zero-energy devices (ZEDs) which have emerged as a prominent design for 6G green communication architecture \cite{9679387,10098556}.
This transformation is seen as a pivotal step towards achieving enhanced performance with a focus on energy efficiency \cite{9895266}. 
Subsequently, the concept of zero-energy RIS (zeRIS) is introduced by equipping RIS elements with EH modules, which is completely self-sufficient in terms of energy by relying on ambient electromagnetic waves for its energy \cite{10348506}. 
In contrast to traditional RIS, zeRIS with EH adheres to the fundamental principle of sustainability by eliminating the reliance on dedicated power sources \cite{10521637}.

Driven by the recent advances in zeRIS, we in this paper introduce the concept of wireless-powered zeRIS which achieves a self-sustainable zeRIS and its information transmission via WPC. 
Recalling PLS of information transmission, as for secure wireless-powered zeRIS communications, two fundamental questions arise:
\textit{1) During ET stage, can the harvested energy at zeRIS support its stable operation?
2) If so, during IT stage, is the use of wireless-powered zeRIS helpful to improve the reliability, security and energy efficiency of communications?}
To our best knowledge, these two questions have not been clearly articulated in the literature.
Although initial works \cite{9679387,9703335,10098556,9895266,10348506,10521637,10083178,10113228,10370741} focused on the integration of WPC and RIS, their results are not applicable to secure wireless-powered zeRIS.
This is because the energy outage of zeRIS in the ET stage is not considered in the previous literature, and if the energy transfer is insufficient, there will occur an energy outage event that zeRIS can not operate stably in the IT stage.
Moreover, the previous literature did not consider the trade-off between security, reliability and energy efficiency, and there is no standardized metric to measure them.
In addition, the study on enhancing the transmissions of both legitimate information and artificial noise by segmenting zeRIS into different regions has never been presented.
This paper is motivated by the aforementioned considerations, and aims to bridge this gap.
The contributions of this paper are summarized as follows.

\begin{itemize}
    \item We introduce the concept of wireless-powered zeRIS, and investigate a wireless-powered zeRIS communication system in terms of security, reliability and energy efficiency.
    In addition to information user, an idle user in the system acts as a friendly jamming user to emit artificial noise with the energy harvested from power station.
    By configuring different numbers of reflecting elements at zeRIS, we propose three new wireless-powered zeRIS modes:
    1) in mode-\Rmnum{1}, all reconfigurable reflecting elements are adjusted to the optimal phase shift design of information user to maximize the reliability of the system;
    2) in mode-\Rmnum{2}, all reconfigurable reflecting elements are adjusted to the optimal phase shift design of friendly jamming user to maximize the security of the system;
    3) in mode-\Rmnum{3}, reconfigurable reflecting elements are divided into two parts that are adjusted to the optimal phase shift designs of information user and friendly jamming user, respectively, to balance the reliability and security of the system.
        
    \item To comprehensively evaluate the performance of proposed three new wireless-powered zeRIS modes in terms of reliability, security, and energy consumption, we propose three new metrics, i.e., joint outage probability (JOP), joint intercept probability (JIP), and secrecy energy efficiency (SEE).
    Moreover, we derive their closed-form expressions, which match well with the simulation results, verifying the correctness of our theoretical analysis.
    To analyze the security-reliability trade-off (SRT) performance of the proposed three new wireless-powered zeRIS modes, we exploit two SRT metrics, i.e., JOP versus JIP, which refers to treating JOP as a function of JIP, and normalized joint intercept and outage probability (JIOP), which is the mean of the sum of JOP and JIP.
  
    \item Theoretical and numerical results reveal the following conclusions:
    1) Under high transmission power, all the diversity gains of three modes are 1, and the JOPs of mode-\Rmnum{1}, mode-\Rmnum{2} and mode-\Rmnum{3} are improved by increasing the number of zeRIS elements, which are related to $N^2$, $N$, and $N_1^2$ ($N_1$ is the number of optimal phase shift elements of information user), respectively;
    2) From the perspective of reliability, mode-\Rmnum{1} achieves the best JOP, while mode-\Rmnum{2} achieves the best JIP from the perspective of security;
    3) From the holistic perspective of reliability and security, mode-\Rmnum{2} outperforms the other two modes in the analysis of JOP versus JIP, while mode-\Rmnum{3} and mode-\Rmnum{2} achieve the best performance of normalized JIOP at low and high transmission power, respectively;
    4) Interestingly, mode-\Rmnum{3} achieves the lowest normalized JIOP with increasing time allocation factor, and the highest SEE with increasing transmission power among three modes.
\end{itemize}

\section{System Model}

\begin{figure}[t]
    \centering
    \includegraphics[width=0.8\linewidth]{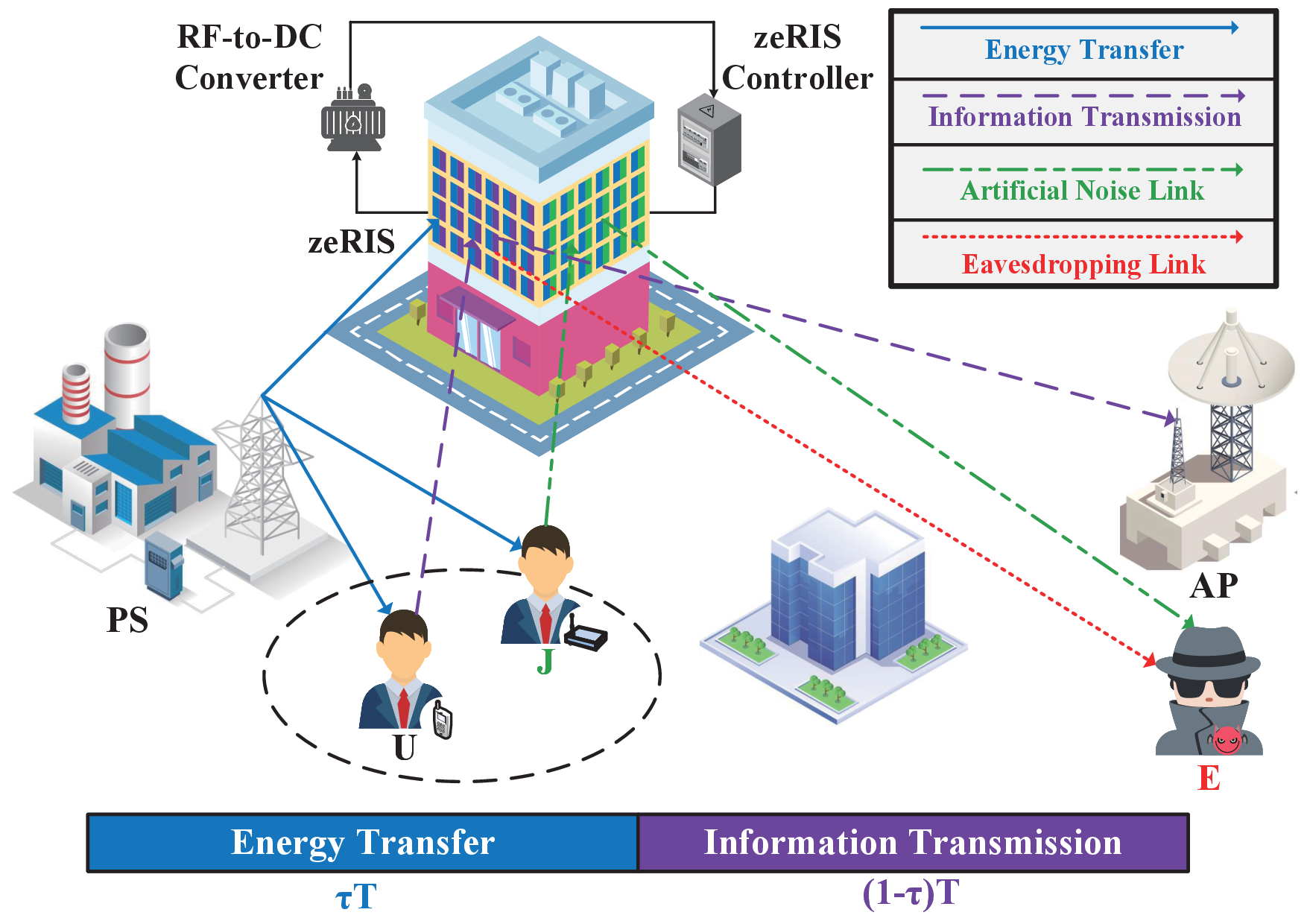}
    \captionsetup{font={footnotesize}} 
    \caption{System model}
    \label{fig1}
\end{figure}

As depicted in Fig. \ref{fig1}, we consider a wireless-powered zeRIS communication system consisting of a power station (PS), an access point (AP), an energy-constraint user (U), a malicious eavesdropper (E), and a friendly jamming user (J).
Each node in the system is equipped with a single antenna, and operates in a half-duplex mode.
The zeRIS is configured by $N$ reflecting elements with tunable impedance and a controller that takes charge of the configuration of the reflecting elements.
Specifically, each zeRIS element includes an RF-to-DC converter, which can convert a predefined fraction of the incident wave power into a DC voltage.
This voltage is stored in a capacitor as short-term energy that meets the initial energy requirement for zeRIS operation \cite{10348506}. 
Within a time frame $T$, the communication process is divided into two stages, i.e., ET stage and IT stage.
In the ET stage, PS transfers radio frequency (RF) energy signals to U, J and zeRIS.
In the IT stage, U uses the harvested RF energy to send information signal to AP, while J simultaneously emits the artificial noise to mask the transmission of U. 
The zeRIS reflects the incident signals of U and J to enhance the received power at AP and deteriorate the reception of E, respectively.
In the process, the zeRIS achieves self-sufficient modulation of controllable reflecting elements by using harvested energy.
It is assumed that the direct communication links between U/J and AP/E are hindered by physical obstructions.

The channel coefficients between PS and U/J are denoted by $h_{pu}$ and $h_{pj}$, respectively.
The channel vectors between PS/U/J and zeRIS are denoted by ${\bf{h}}_{pr} \in \mathbb{C}^{N\times 1}$, ${\bf{h}}_{ur}\in \mathbb{C}^{N\times 1}$ and ${\bf{h}}_{jr}\in \mathbb{C}^{N\times 1}$, respectively.
Besides, the channel vectors between zeRIS and AP/E are denoted by ${\bf{h}}_{ra}\in \mathbb{C}^{N\times 1}$ and ${\bf{h}}_{re}\in \mathbb{C}^{N\times 1}$, respectively.
Each entry of ${\bf{h}}_{pr}$, ${\bf{h}}_{ur}$, ${\bf{h}}_{jr}$, ${\bf{h}}_{ra}$ and ${\bf{h}}_{re}$, as well as $h_{pu}$ and $h_{pj}$ is reciprocal, and is assumed to be independently and identically complex Gaussian distribution with zero mean and unit variance.

During the ET stage $\tau T \ (0<\tau<1)$, PS transfers energy directly to U, J and zeRIS.
Then, by employing the linear EH model, the harvested energy of U, J and zeRIS can be respectively expressed as

\begin{align}
    \label{Eu}
    E_u = {\eta}{\tau}{T}{P_s} \left\lvert h_{pu} \right\rvert^{2}{\beta }_{pu},
\\
    \label{Ej}
    E_j = {\eta}{\tau}{T}{P_s} \left\lvert h_{pj} \right\rvert^{2}{\beta }_{pj},
\\
    \label{Eris}
    E_{ris} = {\eta}{\tau}{T}{P_s} \left\lvert\sum_{n = 1}^{N}{h}_{pr,n} \right\rvert^{2}{\beta }_{pr},
\end{align}
where $\eta$ denotes energy conversion efficiency that indicates the inherent inefficiencies and losses in the EH circuitry, $\tau$ denotes time allocation factor between ET and IT, and $P_s$ denotes transmission power at PS.
Besides, $\beta_{\kappa_{1}}= d_{\kappa_{1}}^{-a_{0}} (\kappa_1 \in \{ pu, pj, pr \})$ denotes the path loss from PS to U/J/zeRIS, where $d_{\kappa}$ denotes the corresponding distance between PS and U/J/zeRIS, and $a_0$ denotes the path loss exponent. 

To guarantee the normal operation of the zeRIS, it is crucial to determine its power consumption, which is caused by the configuration of reflecting elements, and the operation of the controller.
Therefore, considering that all reflecting elements are configured for information transmission at IT stage, the required energy for zeRIS is given by
\begin{equation}
    \label{Q}
    Q=(1-\tau)T(NP_{e}+P_{c}),
\end{equation}
where $P_e$ denotes the power consumption of each reflecting element, and $P_c$ denotes the power consumption of the zeRIS controller accountable for the setting of phase shift of each element.

During the IT stage $(1-\tau)T$, U exploits the harvested energy to transmit its confidential information to AP via zeRIS, while E would intercept this information which is nearby AP.
To address this problem, J is exploited to emit the artificial noise via zeRIS to combat against E.
The transmission power at U and J can be respectively written as
\begin{align}
    \label{Pu}
    {P_u} = \frac{E_u}{(1-\tau )T}  ={P_t}\left\lvert h_{pu} \right\rvert^{2}{\beta }_{pu},
\\
    \label{Pj}
    {P_j} = \frac{E_j}{(1-\tau )T}  ={P_t}\left\lvert h_{pj} \right\rvert^{2}{\beta }_{pj},
\end{align}  
where ${P_t}=\frac{ \tau }{1-\tau }P_s\eta $. 

Note that our work assumes that J can generate the artificial noise by a random Gaussian noise-like signal, whose codebook is pre-shared to AP but unknown to E. 
So that AP can subtract the artificial noise from its received signals, however, the eavesdropper has to be subject to the artificial noise \cite{9193903}. 
Hence, the received signals at AP and E can be expressed as
\begin{equation}
    \label{ya}
    {y_a}=\sqrt{P_u{\beta}_{ura}}{\bf{h}}_{ra}^{T}{\bf{\Theta }}{\bf{h}}_{{ur}}x_s + n_a,
\end{equation}
\begin{equation}
    \label{ye}
    {y_e}=\sqrt{P_u{\beta}_{ure}}{\bf{h}}_{re}^{T}{\bf{\Theta }}{\bf{h}}_{{ur}}x_s + \sqrt{P_j{\beta}_{jre}}{\bf{h}}_{re}^{T}{\bf{\Theta }}{\bf{h}}_{{jr}}x_j + n_e,
\end{equation}
where ${\bf{\Theta }} = {\rm diag} \{e^{j\phi_{1}},\cdots, e^{j\phi_{n}},\cdots , e^{j\phi_{N}}\} $ denotes $N \times N$ diagonal reflecting matrix of zeRIS, the amplitude of zeRIS is assumed to be equal to one in the case of full reflection, and $\phi_{n}$ denotes the phase shift of the $n$th reflecting element.
Moreover, $\beta_{ura}= (d_{ur}d_{ra})^{-a_{0}}$, $\beta_{ure}= (d_{ur}d_{re})^{-a_{0}}$ and $\beta_{jre}= (d_{jr}d_{re})^{-a_{0}}$ denote the path losses from U to AP/E and from J to E via zeRIS, respectively, where $d_{\kappa_{2}} (\kappa_2 \in \{ ur, jr, ra, re \})$ denotes the corresponding distance between zeRIS and U/J/AP/E.
$x_s$ denotes the intended signal from U to AP via zeRIS, while $x_j$ denotes the interference signal from J to E via zeRIS.
Besides, $n_a$ and $n_e$ denote the additive white Gaussian noise (AWGN) at AP and E, respectively.

As a result, the signal-to-noise ratio (SNR) of the received signal at AP and E can be respectively given by
\begin{align}
    \label{SNRa}
    {\rm{SNR}}_{a}={\rho }_t{\beta}_{pu}{\beta}_{ura}{\left\lvert h_{pu} \right\rvert }^{2}{\left\lvert {\bf{h}}_{ra}^{T}{\bf{\Theta }}{\bf{h}}_{{ur}} \right\rvert }^{2},
    \\
    \label{SNRe}
    {\rm{SNR}}_{e}=\frac{{\rho }_t{\beta}_{pu}{\beta}_{ure}{\left\lvert h_{pu} \right\rvert }^{2}{\left\lvert {\bf{h}}_{re}^{T}{\bf{\Theta }}{\bf{h}}_{{ur}} \right\rvert }^{2}}{{\rho }_t{\beta}_{pj}{\beta}_{jre}{\left\lvert h_{pj} \right\rvert }^{2}{\left\lvert {\bf{h}}_{re}^{T}{\bf{\Theta }}{\bf{h}}_{{jr}} \right\rvert }^{2} + 1},
\end{align}
where ${\rho }_t=\frac{P_t}{{\sigma}^2} $, and ${{\sigma}^2}$ is the variance of the AWGN.

\section{Proposed Modes and Metrics}
In this section, we first propose three new wireless-powered zeRIS modes. 
Then, three new metrics applicable to wireless-powered zeRIS are defined, which would be used for performance analysis of wireless-powered zeRIS communication systems in the next section.
\subsection{Proposed Modes}
In this subsection, we consider the inherent beamforming gain of zeRIS, which plays a pivotal role in enhancing the reliability and security of the system.
By configuring different numbers of reflecting elements at zeRIS, we propose three new wireless-powered zeRIS modes:
1) in mode-\Rmnum{1}, all (i.e., $N$) reconfigurable reflecting elements are adjusted to the optimal phase shift design of U to maximize the reliability of the system;
2) in mode-\Rmnum{2}, $N$ reconfigurable reflecting elements are adjusted to the optimal phase shift design of J to maximize the security of the system;
3) in mode-\Rmnum{3}, $N_1$ and $N_2$ $(N_1+N_2=N)$ reconfigurable reflecting elements are respectively adjusted to the optimal phase shift designs of U and J to balance the reliability and security of the system.
The specific phase shift designs on zeRIS in three modes are given as follows.

\subsubsection{Mode-\Rmnum{1}}
In mode-\Rmnum{1}, $N$ reconfigurable reflecting elements are adjusted to the optimal phase shift design of U.
We assume that U knows the artificial noise (pseudo-random noise) and perfectly cancels this.
Thus, ${\rm{SNR}}_{a}^{\Rmnum{1}}$ and ${\rm{SNR}}_{e}^{\Rmnum{1}}$ can be respectively rewritten as 
\begin{align}
    \label{SNRa1}
    {\rm{SNR}}_{a}^{\Rmnum{1}}={\rho }_t{\beta}_{pu}{\beta}_{ura}{\left\lvert h_{pu} \right\rvert }^{2}{\left\lvert \sum_{n = 1}^{N} h_{ur,n} h_{ra,n} e^{j\phi_{\Rmnum{1},n}}  \right\rvert }^{2},
\end{align}
\begin{align}
    \label{SNRe1}
    {\rm{SNR}}_{e}^{\Rmnum{1}}=\frac{{\rho }_t{\beta}_{pu}{\beta}_{ure}{\left\lvert h_{pu} \right\rvert }^{2}{\left\lvert \sum_{n = 1}^{N} h_{ur,n} h_{re,n} e^{j\phi_{\Rmnum{1},n}}  \right\rvert }^{2}}{{\rho }_t{\beta}_{pj}{\beta}_{jre}{\left\lvert h_{pj} \right\rvert }^{2}{\left\lvert \sum_{n = 1}^{N} h_{jr,n} h_{re,n} e^{j\phi_{\Rmnum{1},n}}  \right\rvert }^{2} + 1},
\end{align}
where the channel coefficient can be expressed in the product of the amplitude and the phase, thus, we have $h_{ur,n}=\left\lvert h_{ur,n} \right\rvert e^{j\phi_{h_{ur,n}}}$, and $h_{ra,n}=\left\lvert h_{ra,n} \right\rvert e^{j\phi_{h_{ra,n}}}$.
To maximize the information transmission of U, the optimal phase shifts $\phi_{\Rmnum{1},n}^{opt}$ are configured to match well with the phases of the cascaded channels from U to AP via zeRIS $h_{ur,n}$ and $h_{ra,n}$, i.e., $\phi_{\Rmnum{1},n}^{opt}=-(\phi_{h_{ur,n}}+\phi_{h_{ra,n}})$ \cite{10130095}.
Hence, we have
\begin{equation}
    \label{SNRa11}
    {\rm{SNR}}_{a}^{\Rmnum{1}}={\rho }_t{\beta}_{pu}{\beta}_{ura}{\left\lvert h_{pu} \right\rvert }^{2} \bigg ( \sum_{n = 1}^{N} \left\lvert h_{ur,n} \right\rvert \left\lvert h_{ra,n} \right\rvert   \bigg ) ^{2}.
\end{equation}

\subsubsection{Mode-\Rmnum{2}}
In mode-\Rmnum{2}, $N$ reconfigurable reflecting elements are adjusted to the optimal phase shift design of J.
${\rm{SNR}}_{a}^{\Rmnum{2}}$ and ${\rm{SNR}}_{e}^{\Rmnum{2}}$ can be respectively rewritten as 
\begin{align}
    \label{SNRa2}
    {\rm{SNR}}_{a}^{\Rmnum{2}}={\rho }_t{\beta}_{pu}{\beta}_{ura}{\left\lvert h_{pu} \right\rvert }^{2}{\left\lvert \sum_{n = 1}^{N} h_{ur,n} h_{ra,n} e^{j\phi_{\Rmnum{2},n}}  \right\rvert }^{2},
\end{align}
\begin{align}
    \label{SNRe2}
    {\rm{SNR}}_{e}^{\Rmnum{2}}=\frac{{\rho }_t{\beta}_{pu}{\beta}_{ure}{\left\lvert h_{pu} \right\rvert }^{2}{\left\lvert \sum_{n = 1}^{N} h_{ur,n} h_{re,n} e^{j\phi_{\Rmnum{2},n}}  \right\rvert }^{2}}{{\rho }_t{\beta}_{pj}{\beta}_{jre}{\left\lvert h_{pj} \right\rvert }^{2}{\left\lvert \sum_{n = 1}^{N} h_{jr,n} h_{re,n} e^{j\phi_{\Rmnum{2},n}}  \right\rvert }^{2} + 1},
\end{align}
where $h_{jr,n}=\left\lvert h_{jr,n} \right\rvert e^{j\phi_{h_{jr,n}}}$, and $h_{re,n}=\left\lvert h_{re,n} \right\rvert e^{j\phi_{h_{re,n}}}$.
To minimize the signal reception of E, the optimal phase shifts $\phi_{\Rmnum{2},n}^{opt}$ should be adjusted to perfectly match with the phases of the cascaded channels from J to E via zeRIS $h_{jr,n}$ and $h_{re,n}$, i.e., $\phi_{\Rmnum{2},n}^{opt}=-(\phi_{h_{jr,n}}+\phi_{h_{re,n}})$.
As such, ${\rm{SNR}}_{e}^{\Rmnum{2}}$ is yielded by
\begin{equation}
    \label{SNRe22}
    {\rm{SNR}}_{e}^{\Rmnum{2}}=\frac{{\rho }_t{\beta}_{pu}{\beta}_{ure}{\left\lvert h_{pu} \right\rvert }^{2}{\left\lvert {\bf{h}}_{re}^{T}{\bf{\Theta }}{\bf{h}}_{{ur}} \right\rvert }^{2}}{{\rho }_t{\beta}_{pj}{\beta}_{jre}{\left\lvert h_{pj} \right\rvert }^{2}\bigg ( \sum_{n = 1}^{N} \left\lvert h_{jr,n} \right\rvert \left\lvert h_{re,n} \right\rvert   \bigg ) ^{2} + 1}.
\end{equation}

\subsubsection{Mode-\Rmnum{3}}
In mode-\Rmnum{3}, $N_1$ and $N_2$ reconfigurable reflecting elements are respectively adjusted to the optimal phase shift designs of U and J.
To be specific, mode-\Rmnum{3} combines mode-\Rmnum{1} and mode-\Rmnum{2}, which develops an analysis of the segmented zeRIS.
The optimal phase shifts on $N_1$ elements $\phi_{\Rmnum{3},n_1}^{opt}$ are designed as $\phi_{\Rmnum{3},n_1}^{opt}=-(\phi_{h_{ur,n_1}}+\phi_{h_{ra,n_1}})$, while the optimal phase shifts on $N_2$ elements $\phi_{\Rmnum{3},n_2}^{opt}$ are designed as $\phi_{\Rmnum{3},n_2}^{opt}=-(\phi_{h_{jr,n_2}}+\phi_{h_{re,n_2}})$.
Consequently, ${\rm{SNR}}_{a}^{\Rmnum{3}}$ and ${\rm{SNR}}_{e}^{\Rmnum{3}}$ are given by
\begin{equation}
    \label{SNRa3}
    {\rm{SNR}}_{a}^{\Rmnum{3}}={\rho }_t{\beta}_{pu}{\beta}_{ura}{\left\lvert h_{pu} \right\rvert }^{2}\Delta_1,
\end{equation}
\begin{equation}
    \label{SNRe3}
    {\rm{SNR}}_{e}^{\Rmnum{3}}=\frac{{\rho }_t{\beta}_{pu}{\beta}_{ure}{\left\lvert h_{pu} \right\rvert }^{2}{\left\lvert {\bf{h}}_{re}^{T}{\bf{\Theta }}{\bf{h}}_{{ur}} \right\rvert }^{2}}{{\rho }_t{\beta}_{pj}{\beta}_{jre}{\left\lvert h_{pj} \right\rvert }^{2}\Delta_2  + 1},
\end{equation}
where
\begin{align}
    &\Delta_1 = {\left\lvert \sum_{n_1 = 1}^{N_1}  \left\lvert h_{ur,n_1} \right\rvert  \left\lvert h_{ra,n_1} \right\rvert + \hspace{-1em} \sum_{n_2 = N_1+1}^{N} \hspace{-1em} \left\lvert h_{ur,n_2} \right\rvert \left\lvert h_{ra,n_2} \right\rvert    e^{j\omega_{n_2}} \right\rvert }^{2}, \nonumber
    \\
    &\Delta_2 ={\left\lvert \sum_{n_1 = 1}^{N_1} \left\lvert h_{jr,n_1} \right\rvert  \left\lvert h_{re,n_1} \right\rvert   e^{j\omega_{n_1}} + \hspace{-1em} \sum_{n_2 = N_1+1}^{N} \hspace{-1em} \left\lvert h_{jr,n_2} \right\rvert  \left\lvert h_{re,n_2} \right\rvert   \right\rvert }^{2}, \nonumber
    \\
    &\omega_{n_1}=\phi_{h_{jr,n_1}}+\phi_{h_{re,n_1}} -(\phi_{h_{ur,n_1}}+\phi_{h_{ra,n_1}}) , \nonumber
    \\
    &\omega_{n_2}=\phi_{h_{ur,n_2}}+\phi_{h_{ra,n_2}} -(\phi_{h_{jr,n_2}}+\phi_{h_{re,n_2}}) . \nonumber
\end{align}
Since the channels satisfy the complex Gaussian distribution with zero mean and unit variance, the phases of the corresponding channels follow the uniform distribution $U(-\pi, \pi)$.
Moreover, the sum of the phase of two independent and identical channels still follows the uniform distribution $U(-\pi, \pi)$, and it is further deduced that $\omega_{n_1}$ and $\omega_{n_2}$ also follow the uniform distribution $U(-\pi, \pi)$, i.e., $\omega_{n_1}, \omega_{n_2}\thicksim U(-\pi, \pi)$.

\subsection{Proposed Metrics}
In this subsection, considering energy outage probability \cite{10348506}, data outage probability, and data intercept probability \cite{7152959}, we propose three new metrics for performance analysis applicable to wireless-powered zeRIS, called joint outage probability (JOP), joint intercept probability (JIP), and secrecy energy efficiency (SEE).

\subsubsection{Joint Outage Probability}
The JOP of the system is defined as the union of energy outage event and data outage event.
The energy outage event indicates that the zeRIS does not harvest a required amount of energy to maintain the zeRIS controller's stable operation.
Data outage event occurs when the main channel capacity is lower than a predefined data rate.
Then, JOP is mathematically expressed as
\begin{align}\label{JOPi1}
    P_{jop}^i={\rm Pr}[E_{ris} < Q \  \cup \  C_{a}^i<R],
\end{align}
where $E_{ris}$ and $Q$ are shown in (\ref{Eris}) and (\ref{Q}), $C_{a}^i=(1-\tau)\log_2(1+{\rm{SNR}}_{a}^i)$ and $R$ denote the main channel capacity and predefined data rate, respectively, and $i\in \{\Rmnum{1},\Rmnum{2},\Rmnum{3}\}$.

\subsubsection{Joint Intercept Probability}
The JIP of the system is defined as the intersection of energy sufficiency event and data interception event.
The energy sufficiency event indicates that the zeRIS harvests a sufficient amount of energy to maintain the zeRIS controller's stable operation.
Data interception event occurs when wiretap channel capacity is higher than a predefined data rate.
Then, JIP is denoted mathematically as
\begin{align}\label{JIPi1}
    P_{jip}^i={\rm Pr}[E_{ris} \geq  Q \  \cap  \  C_{e}^i \geq R],
\end{align}
where $E_{ris}$ and $Q$ are shown in (\ref{Eris}) and (\ref{Q}), and $C_{e}^i=(1-\tau)\log_2(1+{\rm{SNR}}_{e}^i)$ denotes the channel capacity achieved at the eavesdropper.

\subsubsection{Secrecy Energy Efficiency}
Based on the definition of energy efficiency in \cite{10348506}, we first define SEE in the context of secure wireless-powered zeRIS communication systems.
The SEE of the system is defined as the difference between legitimate energy efficiency and illegitimate energy efficiency.
Then, SEE is denoted mathematically as
\begin{align}\label{SEE}
    \varepsilon^{i}&=\{\varepsilon^{i}_{a}-\varepsilon^{i}_{e} \}^{+} \nonumber
    \\
    &=\frac{R}{P_s} \Big \{ 1- P^{i}_{jop}-P^{i}_{jip} \Big \}^{+},
\end{align}
where $\{x\}^{+}=\max\{x,0\}$.

\section{Performance Analysis}
In this section, we analyze the reliability, security and energy efficiency of three proposed modes for the wireless-powered zeRIS communication system.
To obtain further insights, we derive the closed-form expressions of JOP, JIP, and SEE for the proposed three modes, respectively.

Before launching into specific investigation, the cascaded channel statistics for segmented zeRIS are derived first to furnish a theoretical foundation for the subsequent performance analysis.
\subsection{Cascaded Channel Statistics for Segmented zeRIS}
Define $\xi_{n_1}=\left\lvert h_{ur,n_1} \right\rvert  \left\lvert h_{ra,n_1} \right\rvert$, $\xi_{n_2}=\left\lvert h_{ur,n_2} \right\rvert \left\lvert h_{ra,n_2} \right\rvert$, $\zeta_{n_1}=\left\lvert h_{jr,n_1} \right\rvert  \left\lvert h_{re,n_1} \right\rvert$, and $\zeta_{n_2}=\left\lvert h_{jr,n_2} \right\rvert \left\lvert h_{re,n_2} \right\rvert$.
Thus, $\Delta_{1}={\left\lvert \sum_{n_1 = 1}^{N_1}  \xi_{n_1} + \sum_{n_2 = N_1+1}^{N} \xi_{n_2}    e^{j\omega_{n_2}} \right\rvert }^{2}$, and $\Delta_{2}={\left\lvert \sum_{n_1 = 1}^{N_1} \zeta_{n_1}   e^{j\omega_{n_1}} + \sum_{n_2 = N_1+1}^{N} \zeta_{n_2}   \right\rvert }^{2}$.
Since the distinction between $\Delta_{1}$ and $\Delta_{2}$ is related to the reflecting numbers $N_1$ and $N_2$, the statistics of $\Delta_{1}$ and $\Delta_{2}$ are given in the following lemma.
\begin{lemma}
    \label{lemma1}
The first and second moments of the random variable (RV) $\Delta_{j}$ are denoted by $u_{\Delta_{j}}=\mathbb{E} \{ \Delta_{j}\}$ and $u_{\Delta_{j}}^{(2)}=\mathbb{E} \{ \Delta_{j}^{2}\}$, respectively, where $j \in\{1,2\}$.
The distribution of $\Delta_{j}$ can be regarded as a Gamma distribution, whose shape parameter $k_{j}$ and scale parameter $\theta_{j}$ can be given by \cite{9599656}
\begin{equation}
    \label{k1theta1}
    k_{j}=\frac{u_{\Delta_{j}}^2}{u_{\Delta _{j}}^{(2)}-u_{\Delta_{j}}^2}, \ \theta_{j}=\frac{u_{\Delta_{j}}^{(2)}-u_{\Delta_{j}}^2}{u_{\Delta_{j}}}.
\end{equation}
The PDF and CDF of $\Delta_{j}$ can be respectively expressed as
\begin{align}
    \label{PDF1}
    f_{\Delta_{j}}(z)&=\frac{z^{k_{j}-1}}{\Gamma(k_j)\theta_{j}^{k_j}}e^{-\frac{z}{\theta_{j}} }, 
    \\
    \label{CDF1}
    F_{\Delta_{j}}(z)&=\frac{1}{\Gamma(k_j)}\gamma \bigg(k_{j},\frac{z}{\theta_{j}} \bigg).
\end{align}
\end{lemma}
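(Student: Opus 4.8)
The plan is to obtain the lemma by the \emph{method of moments}: since each $\Delta_j$ is a nonnegative quantity built from a coherent (phase-aligned) sum plus an incoherent (randomly-phased) sum of products of independent Rayleigh magnitudes, its law is well approximated by a Gamma distribution, and the two free parameters are fixed by matching the first two moments (as in \cite{9599656}). For a Gamma$(k,\theta)$ variable the mean is $k\theta$ and the second moment is $k(k+1)\theta^2$; solving these two equations for $k$ and $\theta$ returns precisely $k_j=u_{\Delta_j}^2/(u_{\Delta_j}^{(2)}-u_{\Delta_j}^2)$ and $\theta_j=(u_{\Delta_j}^{(2)}-u_{\Delta_j}^2)/u_{\Delta_j}$, i.e.\ (\ref{k1theta1}), after which the density (\ref{PDF1}) and its CDF (\ref{CDF1}) are just the textbook Gamma expressions. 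Hence the whole lemma collapses to evaluating $u_{\Delta_j}=\mathbb{E}\{\Delta_j\}$ and $u_{\Delta_j}^{(2)}=\mathbb{E}\{\Delta_j^2\}$ in closed form.

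First I would assemble the elementary ingredients. Each $|h|$ is Rayleigh with $\mathbb{E}\{|h|^2\}=1$, so the products $\xi=|h_{ur}||h_{ra}|$ and $\zeta=|h_{jr}||h_{re}|$ are identically distributed; I would compute $\mathbb{E}\{\xi\},\dots,\mathbb{E}\{\xi^4\}$ by factoring over the two independent magnitudes and invoking the Rayleigh moments, and record the phase averages $\mathbb{E}\{\cos\omega\}=0$, $\mathbb{E}\{\cos^2\omega\}=\tfrac12$, $\mathbb{E}\{\cos^4\omega\}=\tfrac38$, valid because $\omega_{n_1},\omega_{n_2}\sim U(-\pi,\pi)$ as argued just above the lemma. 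Because $\xi$ and $\zeta$ share the same distribution, a single calculation suffices: the result for $\Delta_2$ is recovered from that for $\Delta_1$ by interchanging the roles of $N_1$ and $N_2$.

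For the first moment I would write $\Delta_1=|A+B|^2$ with the coherent real part $A=\sum_{n_1=1}^{N_1}\xi_{n_1}$ and the incoherent part $B=\sum_{n_2=N_1+1}^{N}\xi_{n_2}e^{j\omega_{n_2}}$, and expand $|A+B|^2=A^2+2A\,\mathrm{Re}(B)+|B|^2$. On averaging, the cross term vanishes since $\mathbb{E}\{e^{j\omega}\}=0$, the $|B|^2$ term collapses to its diagonal for the same reason, while $\mathbb{E}\{A^2\}$ retains both its diagonal and the $N_1(N_1-1)$ off-diagonal pairs weighted by $(\mathbb{E}\{\xi\})^2$. This yields $u_{\Delta_1}$ in closed form and exhibits the $N_1^2$ coherent beamforming gain noted in the introduction.

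The second moment is the crux and the main obstacle. I would expand $|A+B|^4=(A^2+2A\,\mathrm{Re}(B)+|B|^2)^2$ and, before any arithmetic, discard every monomial in which some $\omega$ appears an unbalanced (odd) number of times, as each such term averages to zero; what survives are $\mathbb{E}\{A^4\}$, $\mathbb{E}\{|B|^4\}$, and the products $\mathbb{E}\{A^2\}\mathbb{E}\{|B|^2\}$ and $\mathbb{E}\{A^2\}\mathbb{E}\{\mathrm{Re}(B)^2\}$. The hard part is the fourth-order bookkeeping in $\mathbb{E}\{A^4\}$ and $\mathbb{E}\{|B|^4\}$: one must correctly enumerate the index-coincidence patterns (all distinct, one pair, two pairs, triple, quadruple) and weight each by the matching product of $\xi$-moments and cosine averages. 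Collecting these into closed form gives $u_{\Delta_1}^{(2)}$; substituting $u_{\Delta_j}$ and $u_{\Delta_j}^{(2)}$ into the inversion formulas above then completes the proof, the Gamma fit being the sole approximation and its accuracy confirmed later against simulation.
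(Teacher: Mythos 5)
Your proposal follows essentially the same route as the paper's Appendix A: moment-matching to a Gamma law, splitting $\Delta_1=|A+B|^2$ into coherent, cross, and incoherent parts (the paper's $\Lambda_1,\dots,\Lambda_5$), killing the odd-phase terms via $\mathbb{E}\{\cos\omega_{n}\}=0$, carrying out the fourth-order index-coincidence bookkeeping for $\mathbb{E}\{\Delta_1^2\}$ using the Rayleigh product moments, and recovering $\Delta_2$ by interchanging $N_1$ and $N_2$. The only difference is that the paper executes the enumeration to explicit closed forms for $u_{\Delta_1}$ and $u_{\Delta_1}^{(2)}$, which your outline correctly identifies as the remaining, purely mechanical step (and your $\mathbb{E}\{\cos^4\omega\}$ ingredient is never actually needed).
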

\begin{proof}
    See Appendix \ref{AppendixA}.
\end{proof}

\subsection{JOP Analysis}
It can be observed from (\ref{Eris}), (\ref{SNRa11}) and (\ref{SNRa2}) that the random variables (RVs) in energy outage event and data outage event are independent of each other.
Hence, the definition of JOP in (\ref{JOPi1}) can be rewritten as
\begin{align}\label{JOPi2}
    P_{jop}^i=A+B^{i}-AB^{i},   
\end{align}
where $A={\rm Pr}[E_{ris} < Q]$ and $B^{i}={\rm Pr}[C_{a}^i<R]$ denote the probability of energy outage event and data outage event, respectively.

\subsubsection{Accurate Analysis}
\begin{theorem}
    \label{theorem1}
    The JOP of mode-\Rmnum{1} is denoted as
    \begin{align}
        \label{JOP1}
        P_{jop}^{\Rmnum{1}}=&1- \frac{\pi^2}{4L\Gamma(v)\varphi^v} e^{-\frac{NP_{e}+P_{c}}{P_t N\beta_{pr}}} \sum_{l = 1}^{L} \sqrt{1-\varpi_l^2}\sec^2u_l  \nonumber
        \\
        &\times  (\tan u_l)^{v-1} \exp \bigg( -\frac{\varsigma}{\rho_t \tan^2u_l}-\frac{\tan u_l}{\varphi}  \bigg),
    \end{align}
where $\epsilon=2^{\frac{R}{1-\tau} }-1 $, $u_l=\frac{(\varpi  _l+1)}{4}\pi $, $\varpi_l=\cos(\frac{2l-1}{2L}\pi )$, $\varsigma =\frac{\epsilon }{\beta_{pu} \beta_{ura}}$, and $L$ is the accuracy versus complexity parameter.
\end{theorem}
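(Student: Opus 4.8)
The plan is to exploit the independence of the energy-outage and data-outage events (already invoked in (\ref{JOPi2})) to factor the union probability, and then evaluate each factor separately. First I would rewrite the decomposition $P_{jop}^{\Rmnum{1}}=A+B^{\Rmnum{1}}-AB^{\Rmnum{1}}$ as $P_{jop}^{\Rmnum{1}}=1-(1-A)(1-B^{\Rmnum{1}})$, so that it suffices to compute the two survival probabilities $1-A$ and $1-B^{\Rmnum{1}}$ and multiply them.

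For the energy factor, note that $\sum_{n=1}^{N}h_{pr,n}$ is a sum of i.i.d. zero-mean unit-variance complex Gaussians, hence itself complex Gaussian with variance $N$; consequently $\left\lvert\sum_{n=1}^{N}h_{pr,n}\right\rvert^{2}$ is exponentially distributed with mean $N$. Substituting $E_{ris}$ from (\ref{Eris}) and $Q$ from (\ref{Q}) into $A=\Pr[E_{ris}<Q]$ and using $P_t=\frac{\tau}{1-\tau}P_s\eta$ to cancel the common $T$, $\tau$, $\eta$ factors, the threshold collapses to $\frac{NP_e+P_c}{P_t\beta_{pr}}$, giving the clean form $1-A=\exp(-\frac{NP_e+P_c}{NP_t\beta_{pr}})$, which is exactly the exponential prefactor in (\ref{JOP1}).

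For the data factor, $C_a^{\Rmnum{1}}<R$ is equivalent to $\mathrm{SNR}_a^{\Rmnum{1}}<\epsilon$ with $\epsilon=2^{R/(1-\tau)}-1$, so $1-B^{\Rmnum{1}}=\Pr[\mathrm{SNR}_a^{\Rmnum{1}}\geq\epsilon]$. Writing $\mathrm{SNR}_a^{\Rmnum{1}}$ from (\ref{SNRa11}) in terms of $Y=\left\lvert h_{pu}\right\rvert^{2}$ (unit-mean exponential) and $W=\sum_{n=1}^{N}\left\lvert h_{ur,n}\right\rvert\left\lvert h_{ra,n}\right\rvert$, the event becomes $YW^{2}\geq\varsigma/\rho_t$ with $\varsigma=\epsilon/(\beta_{pu}\beta_{ura})$. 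I would treat $W$, a sum of $N$ i.i.d. double-Rayleigh products, as Gamma-distributed by the same moment-matching argument underlying Lemma \ref{lemma1} (the $N_1=N$ specialization), matching $\mathbb{E}\{W\}$ and $\mathrm{Var}(W)$ to obtain the shape $v$ and scale $\varphi$. Conditioning on $W=w$ and integrating the exponential tail of $Y$ yields $1-B^{\Rmnum{1}}=\frac{1}{\Gamma(v)\varphi^{v}}\int_{0}^{\infty}w^{v-1}\exp(-\frac{\varsigma}{\rho_t w^{2}}-\frac{w}{\varphi})\,dw$.

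The main obstacle is that this last integral admits no elementary closed form for non-integer $v$ (it is a Bessel-type integral), so I would render it numerically tractable rather than leave it implicit. The substitution $w=\tan u$ maps $(0,\infty)$ onto $(0,\pi/2)$ and produces precisely the integrand $(\tan u)^{v-1}\sec^{2}u\,\exp(-\frac{\varsigma}{\rho_t\tan^{2}u}-\frac{\tan u}{\varphi})$; this is the step that creates the $\tan^{2}u$-versus-$\tan u$ asymmetry in the two exponent terms of (\ref{JOP1}), and checking it is the delicate point. A final affine change $u=\frac{\pi}{4}(x+1)$ onto $[-1,1]$ followed by Gauss-Chebyshev quadrature with nodes $\varpi_l=\cos(\frac{2l-1}{2L}\pi)$ introduces the weight $\sqrt{1-\varpi_l^{2}}$ and the factor $\pi^{2}/(4L)$, and fixes $u_l=\frac{\pi}{4}(\varpi_l+1)$. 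Assembling $1-A$ and the quadrature approximation of $1-B^{\Rmnum{1}}$ into $1-(1-A)(1-B^{\Rmnum{1}})$ then reproduces (\ref{JOP1}) exactly.
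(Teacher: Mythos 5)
Your proposal is correct and follows essentially the same route as the paper: the factorization $1-(1-A)(1-B^{\Rmnum{1}})$ is just the paper's $A+B^{\Rmnum{1}}-AB^{\Rmnum{1}}$ rewritten, the exponential-mean-$N$ argument for the energy term is identical to the paper's derivation of $A$ in (\ref{A}), and the Gamma moment-matching of $W$ (yielding exactly $v=\frac{N\pi^2}{16-\pi^2}$, $\varphi=\frac{16-\pi^2}{4\pi}$, which the paper imports from a cited lemma), followed by conditioning on $W$, the $w=\tan u$ substitution, and Gauss--Chebyshev quadrature, reproduces the paper's derivation of $B^{\Rmnum{1}}$ in (\ref{B1}) step for step.
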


\begin{proof}
    See Appendix \ref{AppendixB}-\Rmnum{1}.
\end{proof}

\begin{theorem}
    \label{theorem2}
    The JOP of mode-\Rmnum{2} is denoted as
    \begin{align}
        \label{JOP2}
        P_{jop}^{\Rmnum{2}}=1 - e^{-\frac{NP_{e}+P_{c}}{P_t N\beta_{pr}}} \sqrt{\frac{4\varsigma}{\rho_t N } } K_1 \Bigg(\sqrt{\frac{4\varsigma}{\rho_t N } } \Bigg),
\end{align}
where $K_1(\cdot)$ is the first order modified Bessel function of the second kind \cite[Eq. (8.432)]{10.1115/1.3138251}.
\end{theorem}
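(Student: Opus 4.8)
The plan is to build on the product decomposition of the JOP already established in (\ref{JOPi2}): writing $P_{jop}^{\Rmnum{2}}=A+B^{\Rmnum{2}}-AB^{\Rmnum{2}}=1-(1-A)(1-B^{\Rmnum{2}})$, I would evaluate the energy-sufficiency probability $1-A={\rm Pr}[E_{ris}\geq Q]$ and the successful-reception probability $1-B^{\Rmnum{2}}={\rm Pr}[C_{a}^{\Rmnum{2}}\geq R]$ separately, which is legitimate because the RVs governing the energy event and the data event are independent. The target is to show that these two factors equal the exponential prefactor and the Bessel term in (\ref{JOP2}), respectively.

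First I would handle the energy factor. Since $\sum_{n=1}^{N}h_{pr,n}$ is a sum of $N$ i.i.d. zero-mean unit-variance complex Gaussians, it is itself zero-mean complex Gaussian with variance $N$, so $\lvert\sum_{n}h_{pr,n}\rvert^{2}$ is exponentially distributed with mean $N$. Substituting (\ref{Eris}) and (\ref{Q}) into $E_{ris}\geq Q$ and using $P_t=\frac{\tau}{1-\tau}P_s\eta$ to absorb the time/efficiency constants reduces the condition to $\lvert\sum_{n}h_{pr,n}\rvert^{2}\geq (NP_e+P_c)/(P_t\beta_{pr})$, which immediately gives $1-A=\exp(-\frac{NP_e+P_c}{P_t N\beta_{pr}})$, matching the prefactor in (\ref{JOP2}).

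Next I would treat the data factor. The key observation is that in mode-\Rmnum{2} the phase shifts are matched to J's cascaded channel, $\phi_{\Rmnum{2},n}^{opt}=-(\phi_{h_{jr,n}}+\phi_{h_{re,n}})$, hence are independent of the U-to-AP cascaded channel in (\ref{SNRa2}). I would argue that each summand $h_{ur,n}h_{ra,n}e^{j\phi_{\Rmnum{2},n}}$ has a uniform phase on $(-\pi,\pi)$ independent of its amplitude, so it is zero-mean with unit second moment $\mathbb{E}[\lvert h_{ur,n}\rvert^{2}]\mathbb{E}[\lvert h_{ra,n}\rvert^{2}]=1$; by the central limit theorem over the $N$ independent elements, the aggregate $\sum_{n}h_{ur,n}h_{ra,n}e^{j\phi_{\Rmnum{2},n}}$ is approximated as zero-mean complex Gaussian with variance $N$, so its squared modulus $X$ is exponential with mean $N$. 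With $Y=\lvert h_{pu}\rvert^{2}$ exponential of unit mean and independent of $X$, the SNR in (\ref{SNRa2}) is $\rho_t\beta_{pu}\beta_{ura}XY$, and translating the rate threshold via $\epsilon=2^{R/(1-\tau)}-1$ and folding the path losses into $\varsigma$ yields $1-B^{\Rmnum{2}}={\rm Pr}[XY\geq \varsigma/\rho_t]=\int_{0}^{\infty}e^{-(\varsigma/\rho_t)/x}f_X(x)\,dx$.

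The remaining step is to evaluate this integral with the standard identity $\int_{0}^{\infty}\exp(-\frac{a}{x}-bx)\,dx=2\sqrt{a/b}\,K_1(2\sqrt{ab})$, taking $a=\varsigma/\rho_t$ and $b=1/N$, which produces the $\sqrt{4\varsigma/(\rho_t N)}\,K_1(\sqrt{4\varsigma/(\rho_t N)})$ term; multiplying by the energy factor then gives (\ref{JOP2}). The main obstacle I anticipate is justifying the complex-Gaussian/exponential law for the phase-mismatched aggregate $X$: in contrast to mode-\Rmnum{1}, where phase alignment turns the reflected sum into a coherent sum of amplitudes that requires the Gamma fit of Lemma \ref{lemma1} (and hence the quadrature in Theorem \ref{theorem1}), the mismatch here makes the derivation hinge on the uniformity and mutual independence of the residual phases and on a CLT approximation, whose fidelity for moderate $N$ is precisely what must be corroborated against simulation.
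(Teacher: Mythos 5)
Your proposal matches the paper's proof essentially step for step: the same decomposition $P_{jop}^{\Rmnum{2}}=1-(1-A)(1-B^{\Rmnum{2}})$, the same exponential law for $\beta_{pr}\lvert\sum_{n}h_{pr,n}\rvert^{2}$ giving the prefactor $e^{-\frac{NP_e+P_c}{P_tN\beta_{pr}}}$, the same zero-mean complex-Gaussian (variance $N$) model for the phase-mismatched cascade $\sum_{n}h_{ur,n}h_{ra,n}e^{j\phi_{\Rmnum{2},n}}$, and the same Bessel identity \cite[Eq.~(3.324.1)]{10.1115/1.3138251} for the product-of-exponentials tail. The only cosmetic difference is that the paper imports the Gaussian claim from \cite[Lemma~2]{9964281} whereas you reconstruct it via a CLT argument and correctly flag it as the one approximation requiring simulation validation.
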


\begin{proof}
    See Appendix \ref{AppendixB}-\Rmnum{2}.
\end{proof}

\begin{theorem}
    \label{theorem3}
    The JOP of mode-\Rmnum{3} is denoted as
    \begin{align}
        \label{JOP3}
        P_{jop}^{\Rmnum{3}} =& 1 - \frac{2}{\Gamma(k_1)} e^{-\frac{NP_{e}+P_{c}}{P_t N\beta_{pr}}} \bigg(\frac{\varsigma}{\rho_t \theta_1} \bigg)^{\frac{k_1}{2} } K_{k_1}\bigg(2\sqrt{\frac{\varsigma}{\rho_t \theta_1}}\bigg),
\end{align}
where $K_{i}(\cdot)$ is the $i$th order modified Bessel function of the second kind \cite[Eq. (8.432)]{10.1115/1.3138251}.
\end{theorem}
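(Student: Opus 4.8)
The plan is to build on the independence decomposition (\ref{JOPi2}), writing $P_{jop}^{\Rmnum{3}} = A + B^{\Rmnum{3}} - AB^{\Rmnum{3}} = 1 - (1-A)(1-B^{\Rmnum{3}})$, so that the energy-sufficiency factor $1-A$ and the main-link reliability factor $1-B^{\Rmnum{3}}$ can be computed separately and then multiplied.

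First I would evaluate $A = {\rm Pr}[E_{ris} < Q]$. Substituting (\ref{Eris}) and (\ref{Q}) and using ${P_t}=\frac{\tau}{1-\tau}P_s\eta$, the energy-outage event collapses to $\big|\sum_{n=1}^{N} h_{pr,n}\big|^2 < \frac{NP_e + P_c}{P_t \beta_{pr}}$. Because the entries $h_{pr,n}$ are i.i.d. $\mathcal{CN}(0,1)$, their sum is $\mathcal{CN}(0,N)$ and hence $\big|\sum_n h_{pr,n}\big|^2$ is exponential with mean $N$; evaluating its CDF at this threshold gives $1-A = \exp\!\big(-\frac{NP_e+P_c}{P_t N \beta_{pr}}\big)$, which is exactly the exponential prefactor in (\ref{JOP3}).

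Next I turn to $B^{\Rmnum{3}} = {\rm Pr}[C_a^{\Rmnum{3}} < R] = {\rm Pr}[{\rm SNR}_a^{\Rmnum{3}} < \epsilon]$ with $\epsilon = 2^{\frac{R}{1-\tau}} - 1$. Using (\ref{SNRa3}) this becomes ${\rm Pr}\big[|h_{pu}|^2 \Delta_1 < \frac{\varsigma}{\rho_t}\big]$, where $\varsigma = \frac{\epsilon}{\beta_{pu}\beta_{ura}}$. By Lemma \ref{lemma1}, $\Delta_1$ is Gamma-distributed with shape $k_1$ and scale $\theta_1$, while $|h_{pu}|^2$ is an independent unit-mean exponential RV. Conditioning on $\Delta_1 = z$ and taking the complement yields $1 - B^{\Rmnum{3}} = \int_0^\infty e^{-\frac{\varsigma}{\rho_t z}} f_{\Delta_1}(z)\,dz$, which after inserting (\ref{PDF1}) is the single integral $\frac{1}{\Gamma(k_1)\theta_1^{k_1}}\int_0^\infty z^{k_1-1}\exp\!\big(-\frac{\varsigma}{\rho_t z}-\frac{z}{\theta_1}\big)\,dz$.

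The core step is closing this integral. I would apply the identity $\int_0^\infty x^{\nu-1} e^{-\frac{a}{x}-bx}\,dx = 2\big(\frac{a}{b}\big)^{\nu/2} K_\nu(2\sqrt{ab})$ \cite[Eq. (3.471.9)]{10.1115/1.3138251} with $\nu = k_1$, $a = \frac{\varsigma}{\rho_t}$ and $b = \frac{1}{\theta_1}$, so that $1-B^{\Rmnum{3}} = \frac{2}{\Gamma(k_1)}\big(\frac{\varsigma}{\rho_t\theta_1}\big)^{k_1/2} K_{k_1}\!\big(2\sqrt{\frac{\varsigma}{\rho_t\theta_1}}\big)$; multiplying by $1-A$ and subtracting from one reproduces (\ref{JOP3}). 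The main obstacle is bookkeeping the powers of $\theta_1$: the $\theta_1^{-k_1}$ coming from the Gamma normalization must combine with the $(\frac{a}{b})^{k_1/2}=(\frac{\varsigma\theta_1}{\rho_t})^{k_1/2}$ emerging from the integral to leave precisely $(\frac{\varsigma}{\rho_t\theta_1})^{k_1/2}$, after which the remaining algebra is routine and mirrors the mode-\Rmnum{2} derivation of Theorem \ref{theorem2}.
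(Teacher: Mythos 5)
Your proposal is correct and follows essentially the same route as the paper's Appendix B-III: the same exponential law for the energy-outage factor, the same Gamma model for $\Delta_1$ from Lemma \ref{lemma1}, and the same closure of the conditional integral via \cite[Eq. (3.471.9)]{10.1115/1.3138251}, with your $1-(1-A)(1-B^{\Rmnum{3}})$ form being just an algebraic restatement of (\ref{JOPi2}).
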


\begin{proof}
    See Appendix \ref{AppendixB}-\Rmnum{3}.
\end{proof}

\begin{remark}
    \label{remark1}
    It is observed from (\ref{B1}), (\ref{B2}) and (\ref{B3}) that the probability of data outage event in mode-\Rmnum{3} is lower than that in mode-\Rmnum{2} but larger than that in mode-\Rmnum{1}, such that mode-\Rmnum{1} achieves the best JOP, while mode-\Rmnum{2} achieves the worst JOP.
    The reason behind this conclusion is that the energy transfer process is the same in three modes, and the number of zeRIS elements that adopt an optimal phase shift design with respect to U is $N$, 0, and $N_1$ in the three modes, respectively.
    Hence, among the three modes, the data transmission capability of mode-\Rmnum{1} is the strongest, while that of mode-\Rmnum{2} is the worst, and that of mode-\Rmnum{3} strikes a reliability trade-off between mode-\Rmnum{1} and mode-\Rmnum{2}.
    Besides, this enhanced effect can be further improved by increasing the number of zeRIS elements.
\end{remark}

\subsubsection{Asymptotic Analysis}
\begin{proposition}
    \label{proposition1}
    The asymptotic JOP of mode-\Rmnum{1} can be given by
    \begin{align}
        \label{JOP1_asy}
        P_{jop,asy}^{\Rmnum{1}} =&  \frac{1}{P_t} \bigg( \frac{NP_{e}+P_{c}}{ N\beta_{pr}} \nonumber
        \\
        &+ \frac{ \big(2^{\frac{R}{1-\tau} }-1 \big)\sigma^2  }{\beta_{pu} \beta_{ura} \varphi^2 \big(\frac{N\pi^2}{16-\pi^2}-1 \big )\big(\frac{N\pi^2}{16-\pi^2}-2\big)} \bigg),
    \end{align}
    where ${P_t}=\frac{ \tau }{1-\tau }P_s\eta $. 
\end{proposition}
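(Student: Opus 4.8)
The plan is to start from the decomposition $P_{jop}^{\Rmnum{1}}=A+B^{\Rmnum{1}}-AB^{\Rmnum{1}}$ in (\ref{JOPi2}) and extract the leading-order term of each factor as $P_t=\frac{\tau}{1-\tau}P_s\eta\to\infty$. I will show below that both $A$ and $B^{\Rmnum{1}}$ decay like $1/P_t$, so the cross term $AB^{\Rmnum{1}}$ is $O(1/P_t^2)$ and is discarded at first order, leaving $P_{jop,asy}^{\Rmnum{1}}\approx A+B^{\Rmnum{1}}$. The two summands of (\ref{JOP1_asy}) will then be identified with the first-order expansions of $A$ and $B^{\Rmnum{1}}$, respectively.

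First I would treat the energy-outage factor $A={\rm Pr}[E_{ris}<Q]$. Since $\sum_{n=1}^{N}h_{pr,n}\sim\mathcal{CN}(0,N)$, the RV $\lvert\sum_{n=1}^{N}h_{pr,n}\rvert^2$ is exponential with mean $N$; substituting (\ref{Eris}) and (\ref{Q}) and using $P_t=\frac{\tau}{1-\tau}\eta P_s$ gives $A=1-\exp\!\big(-\frac{NP_e+P_c}{NP_t\beta_{pr}}\big)$, which is exactly one minus the exponential prefactor appearing in Theorem \ref{theorem1}. A first-order Taylor expansion $1-e^{-x}\approx x$ for the small argument then yields $A\approx\frac{NP_e+P_c}{NP_t\beta_{pr}}$, the first summand of (\ref{JOP1_asy}).

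Next I would treat the data-outage factor $B^{\Rmnum{1}}={\rm Pr}[{\rm SNR}_a^{\Rmnum{1}}<\epsilon]$ with $\epsilon=2^{\frac{R}{1-\tau}}-1$. Writing $S=\sum_{n=1}^{N}\lvert h_{ur,n}\rvert\,\lvert h_{ra,n}\rvert$, the moment matching used to establish Theorem \ref{theorem1} makes $S$ a Gamma RV with shape $v=\frac{N\pi^2}{16-\pi^2}$ and scale $\varphi=\frac{16-\pi^2}{4\pi}$ (from $\mathbb{E}[\lvert h\rvert]=\sqrt{\pi}/2$ and $\mathbb{E}[\lvert h\rvert^2]=1$). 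Conditioning on $S$ and integrating out $\lvert h_{pu}\rvert^2$ (exponential with unit mean) in (\ref{SNRa11}) gives $B^{\Rmnum{1}}=\mathbb{E}_S\big[1-\exp(-\tfrac{\varsigma}{\rho_t S^2})\big]$ with $\varsigma=\epsilon/(\beta_{pu}\beta_{ura})$ and $\rho_t=P_t/\sigma^2$. As $\rho_t\to\infty$ the exponent vanishes, so I expand $1-e^{-\varsigma/(\rho_t S^2)}\approx\frac{\varsigma}{\rho_t S^2}$ to obtain $B^{\Rmnum{1}}\approx\frac{\varsigma}{\rho_t}\mathbb{E}[S^{-2}]$. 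The negative moment follows from $\int_0^\infty s^{v-3}e^{-s/\varphi}\,ds=\Gamma(v-2)\varphi^{v-2}$, giving $\mathbb{E}[S^{-2}]=\frac{1}{(v-1)(v-2)\varphi^2}$; substituting $\varsigma$, $\rho_t$, and $v$ reproduces the second summand of (\ref{JOP1_asy}).

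The hard part will be justifying the termwise expansion $1-e^{-\varsigma/(\rho_t S^2)}\approx\frac{\varsigma}{\rho_t S^2}$ inside the expectation, since near $S=0$ the argument is not small and the pointwise approximation fails. The rigorous remedy is dominated convergence through the uniform bound $1-e^{-x}\le x$, which gives $\rho_t\big(1-e^{-\varsigma/(\rho_t S^2)}\big)\le \varsigma/S^2$ with $\mathbb{E}[S^{-2}]<\infty$; the finiteness of this negative moment requires the shape parameter $v>2$, i.e. $N$ large enough, which is precisely the regime in which the factors $(v-1)(v-2)=\big(\frac{N\pi^2}{16-\pi^2}-1\big)\big(\frac{N\pi^2}{16-\pi^2}-2\big)$ in (\ref{JOP1_asy}) are positive and the asymptotic expression is meaningful.
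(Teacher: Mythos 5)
Your proposal is correct and follows essentially the same route as the paper: first-order Taylor expansion of the exponentials in $A$ and $B^{\Rmnum{1}}$, evaluation of the resulting negative second moment of the Gamma-distributed cascaded channel via $\Gamma(v-2)\varphi^{v-2}\big/\big(\Gamma(v)\varphi^{v}\big)=1/\big((v-1)(v-2)\varphi^{2}\big)$ (the paper invokes the same integral through \cite[Eq.~(3.381.4)]{10.1115/1.3138251} under the condition $v>2$), followed by discarding the $O(P_t^{-2})$ cross term $AB^{\Rmnum{1}}$. Your dominated-convergence justification of the termwise expansion near $S=0$ is a welcome rigor step that the paper omits, but it does not alter the argument.
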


\begin{proof}
    When the transmission power is large, i.e., $P_s \to \infty$, by invoking the first-order Taylor expansion $\exp(-x)\to 1-x \ (x \to 0)$ into (\ref{A}) and (\ref{B1}), and applying \cite[Eq. (3.381.4)]{10.1115/1.3138251} in the case of $v>2$, we have
    \begin{align}
        \label{A_asy}
    A_{asy}=&\frac{NP_{e}+P_{c}}{P_t N\beta_{pr}},
    \end{align}
    \begin{align}
        \label{B1_asy}
    B^{\Rmnum{1}}_{asy}=\frac{\big(2^{\frac{R}{1-\tau} }-1 \big)\sigma^2   }{\rho_t \beta_{pu} \beta_{ura} \varphi^2 \big(\frac{N\pi^2}{16-\pi^2}-1 \big )\big(\frac{N\pi^2}{16-\pi^2}-2\big)}.
    \end{align}
    Substituting (\ref{A_asy}) and (\ref{B1_asy}) into (\ref{JOPi2}), the result of (\ref{JOP1_asy}) is obtained.
\end{proof}

\begin{proposition}
    \label{proposition2}
    The asymptotic JOP of mode-\Rmnum{2} can be given by
    \begin{align}
        \label{JOP2_asy}
        P_{jop,asy}^{\Rmnum{2}} =  \frac{\big(2^{\frac{R}{1-\tau} }-1 \big)\sigma^2  }{P_t \beta_{pu} \beta_{ura} N} \ln \Bigg(\frac{P_t \beta_{pu} \beta_{ura} N}{\big(2^{\frac{R}{1-\tau} }-1 \big)\sigma^2 }\Bigg).
    \end{align}
\end{proposition}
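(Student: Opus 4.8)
The plan is to pass to the high-power limit $P_s\to\infty$ directly in the exact expression of Theorem \ref{theorem2}, mirroring the strategy of Proposition \ref{proposition1}. It is convenient to write that result as $1-P_{jop}^{\Rmnum{2}}=e^{-(NP_e+P_c)/(P_tN\beta_{pr})}\,xK_1(x)$ with $x=\sqrt{4\varsigma/(\rho_tN)}$, so that the two factors are precisely the energy-survival probability $1-A$ and the data-reliability probability $1-B^{\Rmnum{2}}$ of the decomposition in (\ref{JOPi2}). Since $\rho_t=P_t/\sigma^2\to\infty$, the Bessel argument $x\to0$, and I would expand both factors about their small-argument limits.

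The energy-outage factor is handled exactly as in Proposition \ref{proposition1}: using $\exp(-y)\to1-y$ as $y\to0$ gives $e^{-(NP_e+P_c)/(P_tN\beta_{pr})}\to1-A_{asy}$ with $A_{asy}=(NP_e+P_c)/(P_tN\beta_{pr})=O(1/P_t)$. The heart of the proof is the small-argument behaviour of $xK_1(x)$. Invoking the ascending series $K_1(x)=\tfrac1x+\tfrac{x}{2}\ln\tfrac{x}{2}+O(x)$ yields $xK_1(x)=1+\tfrac{x^2}{2}\ln\tfrac{x}{2}+O(x^2)$; the leading $1/x$ term cancels against the factor $x$, and the surviving correction is the logarithmic one. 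Substituting $x^2=4\varsigma/(\rho_tN)$ and $\ln\tfrac{x}{2}=\tfrac12\ln\tfrac{\varsigma}{\rho_tN}$, I obtain $1-xK_1(x)=B^{\Rmnum{2}}_{asy}=\tfrac{\varsigma}{\rho_tN}\ln\tfrac{\rho_tN}{\varsigma}$, i.e. the asymptotic data-outage probability.

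Combining the two expansions through $P_{jop}^{\Rmnum{2}}=A+B^{\Rmnum{2}}-AB^{\Rmnum{2}}\approx A_{asy}+B^{\Rmnum{2}}_{asy}$, I would then observe that $B^{\Rmnum{2}}_{asy}$ carries an extra factor $\ln(\rho_tN/\varsigma)\propto\ln P_t$ that $A_{asy}$ lacks; hence $B^{\Rmnum{2}}_{asy}$ dominates $A_{asy}$ as $P_s\to\infty$, and only the data-outage term is retained. Inserting $\varsigma=\epsilon/(\beta_{pu}\beta_{ura})$ with $\epsilon=2^{R/(1-\tau)}-1$ and $\rho_t=P_t/\sigma^2$ then reproduces the stated form of $P_{jop,asy}^{\Rmnum{2}}$.

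The main obstacle is securing the correct second-order term of $xK_1(x)$: the naive leading-order approximation $K_1(x)\approx1/x$ gives $xK_1(x)\approx1$ and a spurious vanishing data-outage probability, so one must retain the $\tfrac{x^2}{2}\ln\tfrac{x}{2}$ term whose logarithmic singularity generates the $\ln(\cdot)$ factor. Equally delicate is justifying that this logarithmically enhanced $O(\ln P_t/P_t)$ data-outage term dominates the $O(1/P_t)$ energy-outage term, which is precisely why a single term survives here, in contrast to the two-term result of Proposition \ref{proposition1}. The logarithm itself is the analytical fingerprint of mode-\Rmnum{2}'s lack of beamforming gain toward U, the phases being matched to J.
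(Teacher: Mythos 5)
Your proposal is correct and follows essentially the same route as the paper: the paper likewise applies the small-argument approximation $\vartheta K_1(\vartheta)\to 1+\tfrac{\vartheta^2}{2}\ln(\tfrac{\vartheta}{2})$ to the Bessel factor in Theorem~\ref{theorem2} to extract $B^{\Rmnum{2}}_{asy}$, and then discards $A_{asy}$ and $A_{asy}B^{\Rmnum{2}}_{asy}$ on the grounds that the logarithmically enhanced data-outage term dominates. Your derivation of that approximation from the ascending series of $K_1$ and your explicit $O(\ln P_t/P_t)$ versus $O(1/P_t)$ comparison simply make explicit what the paper states more tersely.
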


\begin{proof}
    When the transmission power is large, i.e., $P_s \to \infty$, by invoking the approximation $\vartheta K_1 (\vartheta) \to 1 + \frac{\vartheta^2}{2} \ln(\frac{\vartheta}{2}) \ (\vartheta \to 0)$ into (\ref{B2}), we have
    \begin{align}
        \label{B2_asy}
    B^{\Rmnum{2}}_{asy}=\frac{\big(2^{\frac{R}{1-\tau} }-1 \big)\sigma^2  }{P_t \beta_{pu} \beta_{ura} N} \ln \Bigg(\frac{P_t \beta_{pu} \beta_{ura} N}{\big(2^{\frac{R}{1-\tau} }-1 \big)\sigma^2 }\Bigg).
    \end{align}
    Substituting (\ref{A_asy}) and (\ref{B2_asy}) into (\ref{JOPi2}), we find that when $P_s \to \infty$, $A_{asy}$ is much smaller than $B^{\Rmnum{2}}_{asy}$.
    Hence, the terms of $A_{asy}$ and $A_{asy}B^{\Rmnum{2}}_{asy}$ can be omitted, and the result of (\ref{JOP2_asy}) is obtained.
\end{proof}

\begin{proposition}
    \label{proposition3}
    The asymptotic JOP of mode-\Rmnum{3} can be given by
    \begin{align}
        \label{JOP3_asy}
        P_{jop,asy}^{\Rmnum{3}} =  \frac{1}{P_t} \bigg( \frac{NP_{e}+P_{c}}{ N\beta_{pr}} + \frac{\big(2^{\frac{R}{1-\tau} }-1 \big)\sigma^2  }{\beta_{pu} \beta_{ura}\theta_1 (k_1 - 1)} \bigg).
    \end{align}
\end{proposition}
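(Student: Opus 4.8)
The plan is to mirror the structure of Propositions \ref{proposition1} and \ref{proposition2}, exploiting the decomposition $P_{jop}^{\Rmnum{3}}=A+B^{\Rmnum{3}}-AB^{\Rmnum{3}}$ from (\ref{JOPi2}) and reusing the energy-outage asymptotic $A_{asy}=\frac{NP_e+P_c}{P_t N\beta_{pr}}$ already established in (\ref{A_asy}). What remains is to extract the high-power behaviour of the data-outage term $B^{\Rmnum{3}}$ from the exact expression of Theorem \ref{theorem3}. Since $1-P_{jop}^{\Rmnum{3}}=(1-A)(1-B^{\Rmnum{3}})$ with $1-A=e^{-(NP_e+P_c)/(P_t N\beta_{pr})}$, Theorem \ref{theorem3} isolates $1-B^{\Rmnum{3}}=\frac{2}{\Gamma(k_1)}\big(\frac{\varsigma}{\rho_t\theta_1}\big)^{k_1/2}K_{k_1}\big(2\sqrt{\varsigma/(\rho_t\theta_1)}\big)$, so the whole task reduces to a single-argument asymptotic.

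First I would observe that as $P_s\to\infty$ one has $\rho_t\to\infty$, hence the argument $z:=\varsigma/(\rho_t\theta_1)\to 0$. The core step is the small-argument expansion of $g(z):=\frac{2}{\Gamma(k_1)}z^{k_1/2}K_{k_1}(2\sqrt{z})$. Setting $x=2\sqrt z$ so that $z^{k_1/2}=(x/2)^{k_1}$ and invoking the ascending series $K_{k_1}(x)=\frac{\pi}{2\sin(k_1\pi)}[I_{-k_1}(x)-I_{k_1}(x)]$ with $I_\mu(x)=\sum_{m\ge 0}\frac{(x/2)^{2m+\mu}}{m!\,\Gamma(m+\mu+1)}$, the factor $(x/2)^{k_1}I_{-k_1}(x)$ contributes the constant $1/\Gamma(1-k_1)$ and the first correction $(x/2)^2/\Gamma(2-k_1)$, while the $I_{k_1}$ branch is of order $(x/2)^{2k_1}$ and is negligible for $k_1>1$. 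Applying the reflection formula $\Gamma(k_1)\Gamma(1-k_1)=\pi/\sin(k_1\pi)$ and $\Gamma(2-k_1)=(1-k_1)\Gamma(1-k_1)$ collapses the prefactor, giving $g(z)\to 1-\frac{z}{k_1-1}$. Hence $B^{\Rmnum{3}}_{asy}=\frac{\varsigma}{\rho_t\theta_1(k_1-1)}$, and substituting $\varsigma=\frac{2^{R/(1-\tau)}-1}{\beta_{pu}\beta_{ura}}$ and $\rho_t=P_t/\sigma^2$ yields $B^{\Rmnum{3}}_{asy}=\frac{(2^{R/(1-\tau)}-1)\sigma^2}{P_t\beta_{pu}\beta_{ura}\theta_1(k_1-1)}$.

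Finally I would substitute $A_{asy}$ and $B^{\Rmnum{3}}_{asy}$ into (\ref{JOPi2}). Because both terms are $O(1/P_t)$, the product $A_{asy}B^{\Rmnum{3}}_{asy}$ is $O(1/P_t^2)$ and is discarded in the leading-order asymptotic, so $P_{jop,asy}^{\Rmnum{3}}=A_{asy}+B^{\Rmnum{3}}_{asy}$, which on factoring out $1/P_t$ is precisely (\ref{JOP3_asy}).

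The main obstacle is the Bessel asymptotic: one must retain the first correction beyond the leading constant $1$ in $g(z)$, which forces use of the non-integer ascending series rather than the integer-order, logarithmic form employed for mode-\Rmnum{2} in Proposition \ref{proposition2}, and it hinges on $k_1>1$ so that the $(x/2)^2$ term dominates the $(x/2)^{2k_1}$ term and the coefficient $1/(k_1-1)$ is positive and finite. I would therefore verify that the moment-matched shape parameter $k_1$ supplied by Lemma \ref{lemma1} indeed satisfies $k_1>1$ in the regime of interest, which guarantees the expansion is valid and the resulting expression well-defined.
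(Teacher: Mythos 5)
Your proposal is correct and reaches the paper's expression by the same overall architecture: the decomposition $P_{jop}^{\Rmnum{3}}=A+B^{\Rmnum{3}}-AB^{\Rmnum{3}}$ from (\ref{JOPi2}), reuse of $A_{asy}$ from (\ref{A_asy}), dropping the $O(1/P_t^2)$ cross term, and the same condition $k_1>1$. The one place you diverge is the extraction of $B^{\Rmnum{3}}_{asy}$: the paper applies the first-order Taylor expansion $e^{-\varsigma/(\rho_t z)}\approx 1-\varsigma/(\rho_t z)$ \emph{inside} the integral on the penultimate line of (\ref{B3}) and then evaluates the resulting Gamma integral via \cite[Eq.\ (3.381.4)]{10.1115/1.3138251}, using $\Gamma(k_1)=(k_1-1)\Gamma(k_1-1)$ to land on $\varsigma/(\rho_t\theta_1(k_1-1))$; you instead start from the closed Bessel form of Theorem \ref{theorem3} and expand $z^{k_1/2}K_{k_1}(2\sqrt z)$ by the ascending series. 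Both are valid and give the identical first correction $-z/(k_1-1)$, but the paper's route is slightly more robust: your use of $K_{k_1}=\tfrac{\pi}{2\sin(k_1\pi)}(I_{-k_1}-I_{k_1})$ presupposes non-integer $k_1$ (the moment-matched $k_1$ of Lemma \ref{lemma1} is generically non-integer, and the limit extends by continuity, so this is a cosmetic rather than substantive restriction), whereas the expansion under the integral sign works uniformly for all $k_1>1$ without case analysis. Your closing check that $k_1>1$ is exactly the hypothesis the paper also invokes, so nothing is missing.
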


\begin{proof}
    When the transmission power is large, i.e., $P_s \to \infty$, by invoking the first-order Taylor expansion $\exp(-x)\to 1-x \ (x \to 0)$ into the penultimate line of (\ref{B3}), and applying \cite[Eq. (3.381.4)]{10.1115/1.3138251} in the case of $k_1>1$, we have
    \begin{align}
        \label{B3_asy}
    B^{\Rmnum{3}}_{asy}=\frac{\big(2^{\frac{R}{1-\tau} }-1 \big)\sigma^2   }{P_t \beta_{pu} \beta_{ura} \theta_1 (k_1 - 1)}.
    \end{align}
    Similar to the proof of Proposition \ref{proposition1}, the result of (\ref{JOP3_asy}) is obtained.
\end{proof}

\begin{remark}
    \label{remark2}
    Proposition \ref{proposition1} and Proposition \ref{proposition3} show that the asymptotic JOP is the sum of the probabilities of energy outage event (i.e., the first term of the sum) and data outage event (i.e., the second term of the sum), which is different from outage probability performance of conventional RIS.
    Moreover, as the transmission power increases, the impact of small-scale fading caused by the randomness of the channel is eliminated, and then the two events are treated as independent.
\end{remark}

\begin{remark}
    \label{remark3}
    Proposition \ref{proposition1}, Proposition \ref{proposition2}, and Proposition \ref{proposition3} show that all the diversity gains of three modes are 1, since the channels between PS and zeRIS and the channel between PS and U/J follow the uniform phase distribution, so that the energy signal is randomly reflected in space, and the harvested energy of U/J/zeRIS scales equally with transmission power.
    In addition, the asymptotic JOP is further improved by increasing the number of zeRIS elements $N$.
\end{remark}

\begin{corollary}
    \label{corollary1}
    When $P_s \to \infty$ and $N \to \infty$, the asymptotic JOP of mode-\Rmnum{1} is further given by
    \begin{align}
        \label{JOP1_asyN}
        P_{jop,asy}^{\Rmnum{1}} =  \frac{P_{e}}{ \beta_{pr}} P_t^{-1}.
    \end{align}
\end{corollary}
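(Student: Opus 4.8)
The plan is to derive Corollary~\ref{corollary1} by taking the additional limit $N \to \infty$ directly in the already-asymptotic (in $P_s$) expression (\ref{JOP1_asy}) of Proposition~\ref{proposition1}. Since the prefactor $P_t^{-1}$ carries no dependence on $N$, I would factor it out and analyze the two summands inside the bracket separately: the normalized energy-outage contribution $\frac{NP_e+P_c}{N\beta_{pr}}$ and the normalized data-outage contribution, whose denominator contains the product $\big(\frac{N\pi^2}{16-\pi^2}-1\big)\big(\frac{N\pi^2}{16-\pi^2}-2\big)$.

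For the energy-outage summand, I would split it as $\frac{NP_e+P_c}{N\beta_{pr}}=\frac{P_e}{\beta_{pr}}+\frac{P_c}{N\beta_{pr}}$ and observe that the controller-power term $\frac{P_c}{N\beta_{pr}}$ vanishes as $N\to\infty$, leaving exactly $\frac{P_e}{\beta_{pr}}$. Intuitively, once the element count is large the aggregate per-element consumption $NP_e$ dominates the fixed controller power $P_c$ in the normalized energy budget, so only the per-element term survives.

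For the data-outage summand, the crucial observation is that the scale parameter $\varphi$ is the scale of the Gamma fit to the cascaded channel amplitude sum obtained by the moment-matching of Lemma~\ref{lemma1}, and is therefore a constant independent of $N$; only the shape parameter $\frac{N\pi^2}{16-\pi^2}$ grows with $N$. Hence the denominator $\varphi^2\big(\frac{N\pi^2}{16-\pi^2}-1\big)\big(\frac{N\pi^2}{16-\pi^2}-2\big)$ grows like $N^2$ while the numerator stays fixed, so this summand decays as $O(N^{-2})$ and drops out in the limit. Combining the two contributions and restoring $P_t^{-1}$ then gives $\frac{1}{P_t}\cdot\frac{P_e}{\beta_{pr}}$, which is precisely (\ref{JOP1_asyN}). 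The step deserving the most care is this last one: I must confirm that $\varphi$ does not itself scale with $N$, so that the data-outage term genuinely vanishes at rate $N^{-2}$ rather than being slowed by a growing scale parameter, and, more conceptually, justify that applying $N\to\infty$ to an expression already evaluated in the regime $P_s\to\infty$ yields the intended double-limit value. Both points are benign, since (\ref{JOP1_asy}) is an explicit closed form with only elementary dependence on $N$.
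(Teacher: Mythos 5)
Your proposal is correct and follows essentially the same route as the paper: the authors likewise observe that $A_{asy}=\frac{NP_e+P_c}{P_tN\beta_{pr}}$ tends to $\frac{P_e}{\beta_{pr}}P_t^{-1}$ while $B^{\Rmnum{1}}_{asy}$ becomes negligible as $N\to\infty$, and your explicit $O(N^{-2})$ accounting of the data-outage term (with $\varphi=\frac{16-\pi^2}{4\pi}$ constant in $N$) just makes that observation precise. The only nitpick is that $\varphi$ and $v$ come from the cited external lemma for the full cascaded channel sum, not from Lemma~\ref{lemma1} of this paper, but this does not affect the argument.
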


\begin{proof}
    It is observed from (\ref{A_asy}) and (\ref{B1_asy}) that when $N \to \infty$, $A_{asy}$ tends to a constant, and $B^{\Rmnum{1}}_{asy}$ continues to decrease to a negligible degree.
    Then, the result of (\ref{JOP1_asyN}) is obtained.
\end{proof}

\begin{corollary}
    \label{corollary2}
    When $P_s \to \infty$ and $N \to \infty$, the asymptotic JOP of mode-\Rmnum{2} is further given by
    \begin{align}
        \label{JOP2_asyN}
        P_{jop,asy}^{\Rmnum{2}} =  \frac{P_{e}}{ \beta_{pr}} P_t^{-1}.
    \end{align}
\end{corollary}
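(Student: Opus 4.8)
The plan is to mirror the argument used for Corollary \ref{corollary1}, but to pay close attention to which of the two constituent terms dominates the union expression (\ref{JOPi2}) in the combined limit. The essential point is that one must \emph{not} simply send $N\to\infty$ in the final Proposition \ref{proposition2} expression (\ref{JOP2_asy}): that expression already discarded the energy-outage term $A_{asy}$ on the grounds that it is negligible for fixed $N$ as $P_s\to\infty$, so plugging $N\to\infty$ into it would incorrectly yield zero. Instead, I would return to the full high-power form $P_{jop,asy}^{\Rmnum{2}}=A_{asy}+B^{\Rmnum{2}}_{asy}-A_{asy}B^{\Rmnum{2}}_{asy}$, keeping both (\ref{A_asy}) and (\ref{B2_asy}), and only then evaluate the limit $N\to\infty$.

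First I would split the energy-outage term (\ref{A_asy}) as $A_{asy}=\frac{P_e}{P_t\beta_{pr}}+\frac{P_c}{P_t N\beta_{pr}}$, so that as $N\to\infty$ the controller contribution $\frac{P_c}{P_t N\beta_{pr}}$ vanishes and $A_{asy}\to\frac{P_e}{\beta_{pr}}P_t^{-1}$, a nonzero constant independent of $N$. Next I would examine the data-outage term (\ref{B2_asy}): writing it as $\frac{D}{N}\ln\frac{N}{D}$ with $D=\frac{(2^{\frac{R}{1-\tau}}-1)\sigma^2}{P_t\beta_{pu}\beta_{ura}}$, the factor $\frac{\ln N}{N}$ drives $B^{\Rmnum{2}}_{asy}\to 0$ as $N\to\infty$, so the cross term $A_{asy}B^{\Rmnum{2}}_{asy}$ vanishes as well. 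Collecting the one surviving term then gives exactly (\ref{JOP2_asyN}).

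The main obstacle is conceptual rather than computational: the dominant contribution \emph{reverses} between the two regimes. For fixed $N$ with $P_s\to\infty$, the logarithmic factor in (\ref{B2_asy}) makes the data-outage term dominate $A_{asy}$ (which is precisely why Proposition \ref{proposition2} retained only $B^{\Rmnum{2}}_{asy}$), whereas once $N\to\infty$ the data-outage term decays like $\frac{\ln N}{N}$ while the energy-outage term settles to the constant $\frac{P_e}{\beta_{pr}}P_t^{-1}$. Hence the roles swap and the energy outage governs the limit, yielding the same expression as in mode-\Rmnum{1} (Corollary \ref{corollary1}); the only difference from that proof is the slower $\frac{\ln N}{N}$ decay of the data-outage term here versus the $N^{-2}$ decay of $B^{\Rmnum{1}}_{asy}$ there, but both are negligible against the constant. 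This common limit carries the natural interpretation that, with infinitely many elements, reliability is bottlenecked entirely by the per-element harvesting deficit captured in the energy-outage event.
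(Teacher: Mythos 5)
Your proposal is correct and follows essentially the same route as the paper: the paper likewise observes that as $P_s \to \infty$ and $N \to \infty$ the data-outage term $B^{\Rmnum{2}}_{asy}$ tends to zero and becomes negligible next to $A_{asy}$, whose controller contribution $\frac{P_c}{P_t N\beta_{pr}}$ vanishes, leaving $\frac{P_e}{\beta_{pr}}P_t^{-1}$. Your added remark that one must return to the full union expression rather than send $N\to\infty$ in (\ref{JOP2_asy}) directly is a correct and worthwhile clarification of a step the paper leaves implicit, but it does not change the argument.
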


\begin{proof}
    It is observed from (\ref{B2_asy}) that when $P_s \to \infty$ and $N \to \infty$, $B^{\Rmnum{2}}_{asy}$ tends to zero, and is much smaller than $A_{asy}$.
    Hence, the result of (\ref{JOP2_asyN}) is obtained.
\end{proof}

\begin{corollary}
    \label{corollary3}
    When $P_s \to \infty$ and $N \to \infty$, the asymptotic JOP of mode-\Rmnum{3} is further given by
    \begin{align}
        \label{JOP3_asyN}
        P_{jop,asy}^{\Rmnum{3}} =  \frac{P_{e}}{ \beta_{pr}} P_t^{-1}.
    \end{align}
\end{corollary}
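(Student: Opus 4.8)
The plan is to start from the asymptotic expression in Proposition \ref{proposition3} and push the additional limit $N\to\infty$ through its two additive contributions. Recall from the proof of Proposition \ref{proposition3} that $P_{jop,asy}^{\Rmnum{3}}$ is the sum of the energy-outage term $A_{asy}$ in (\ref{A_asy}) and the data-outage term $B^{\Rmnum{3}}_{asy}$ in (\ref{B3_asy}), the cross term $A_{asy}B^{\Rmnum{3}}_{asy}$ already having been discarded as a higher-order quantity. It therefore suffices to take the $N\to\infty$ limit of each of these two terms separately and add the surviving parts.

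First I would treat the energy-outage term. Writing $A_{asy}=\frac{P_e}{P_t\beta_{pr}}+\frac{P_c}{P_t N\beta_{pr}}$, the controller-overhead summand $\frac{P_c}{P_t N\beta_{pr}}$ is of order $1/N$ and hence vanishes as $N\to\infty$, leaving the constant $\frac{P_e}{P_t\beta_{pr}}$. This is precisely the term that survives in (\ref{JOP3_asyN}), and the mechanism is identical to the one already used for mode-\Rmnum{1} in the proof of Corollary \ref{corollary1}.

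Next I would show that the data-outage term $B^{\Rmnum{3}}_{asy}$ vanishes. Its denominator contains the factor $\theta_1(k_1-1)$; by the moment matching of Lemma \ref{lemma1} together with (\ref{k1theta1}) we have $k_1\theta_1=u_{\Delta_1}$, so that $\theta_1(k_1-1)=u_{\Delta_1}-\theta_1$. As $N\to\infty$, and hence $N_1\to\infty$, the first $N_1$ summands of $\Delta_1$ combine coherently since they carry no residual random phase, so the mean $u_{\Delta_1}=\mathbb{E}\{\Delta_1\}$ grows like $N_1^2$ and dominates the scale parameter $\theta_1$. Consequently $\theta_1(k_1-1)\to\infty$, which forces $B^{\Rmnum{3}}_{asy}\to 0$, consistent with the $N_1^2$ scaling of the mode-\Rmnum{3} JOP.

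Adding the two limits yields $P_{jop,asy}^{\Rmnum{3}}=\frac{P_e}{\beta_{pr}}P_t^{-1}$, which is (\ref{JOP3_asyN}). The main obstacle is the middle step: establishing rigorously that $\theta_1(k_1-1)$ diverges, which requires the explicit dependence of the moments $u_{\Delta_1}$ and $u_{\Delta_1}^{(2)}$ on $N_1$ derived in Appendix \ref{AppendixA}. The crucial observation that makes this tractable is that the optimally phase-aligned block of $N_1$ elements adds in amplitude rather than in power, producing the $N_1^2$ growth of $u_{\Delta_1}$ that drives $B^{\Rmnum{3}}_{asy}$ to zero while the energy-outage floor $\frac{P_e}{\beta_{pr}}P_t^{-1}$ remains.
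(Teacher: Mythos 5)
Your proposal is correct and follows essentially the same route as the paper: the paper likewise observes from (\ref{k1theta1}), (\ref{EDelta1_final}) and (\ref{EDelta12_final}) that $\theta_1(k_1-1)\approx\theta_1 k_1=u_{\Delta_1}\approx\frac{\pi^2}{16}N_1^2\to\infty$, so $B^{\Rmnum{3}}_{asy}\to 0$, while $A_{asy}$ tends to the constant $\frac{P_e}{\beta_{pr}}P_t^{-1}$ exactly as in Corollary \ref{corollary1}. Your use of the identity $k_1\theta_1=u_{\Delta_1}$ is a marginally cleaner way to establish the divergence, but it is the same mechanism the paper relies on.
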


\begin{proof}
    It is observed from (\ref{k1theta1}), (\ref{EDelta1_final}) and (\ref{EDelta12_final}) that when $N \to \infty$, $\theta_1 (k_1 - 1)\approx \theta_1k_1 \approx \frac{\pi^2}{16}N_1^2$.
    Similar to the proof of Corollary \ref{corollary1}, the result of (\ref{JOP3_asyN}) is obtained.
\end{proof}

\begin{remark}
    \label{remark4}
    Corollary \ref{corollary1}, Corollary \ref{corollary2}, and Corollary \ref{corollary3} show that by increasing the number of zeRIS elements $N$, the asymptotic JOP for all three modes approaches the same performance ceiling, which is determined by the energy outage event.
    This is because under high transmission power $P_s$ and large $N$, data outage event hardly occurs, while the probability of energy outage event reaches a floor as $N$ increases.
\end{remark}

\begin{remark}
    \label{remark5}
    As the increases of transmission power $P_s$ and the number of zeRIS elements $N$, the JOP mainly depends on the power consumption of each reflecting element and the path loss of channel between PS and zeRIS.
    Particularly, the energy consumption and energy harvesting of zeRIS increase in parallel as $N$ increases, and the energy consumption proportion of the zeRIS controller gradually decreases to zero.
    Moreover, the path loss of channel between PS and zeRIS becomes a bottleneck in the effects of all channels as the increases of $P_s$ and $N$, which is due to the constraint of energy outage at the zeRIS.  
\end{remark}

\subsection{JIP Analysis}
Similarly, it is observed from (\ref{Eris}), (\ref{SNRe1}), and (\ref{SNRe22}) that the RVs in energy sufficiency event and data interception event are independent of each other.
Therefore, the definition of JIP in (\ref{JIPi1}) can be rewritten as
\begin{align}\label{JIPi2}
    P_{jip}^i=CD^{i},
\end{align}
where $C={\rm Pr}[E_{ris} \geq Q]$ and $D^{i}={\rm Pr}[C_{e}^i \geq R]$ denote the probability of energy sufficiency event and data interception event, respectively.

\subsubsection{Accurate Analysis}
\begin{theorem}
    \label{theorem4}
    The JIP of mode-\Rmnum{1} is given by
    \begin{align}
        \label{JIP1}
        P_{jip}^{\Rmnum{1}}=&-  \frac{\pi^2 \beta_{pu}}{4L \epsilon N^2 \beta_{pj} \beta_{ure} \beta_{jre}}  e^{-\frac{NP_{e}+P_{c}}{P_t N\beta_{pr}}} \sum_{l = 1}^{L} \sqrt{1-\varpi_l^2} \nonumber
        \\
        &\times \sec^2u_l \tan u_l \exp \bigg( \frac{\beta_{pu}\beta_{ure} - \epsilon \beta_{pj}\beta_{jre}}{\epsilon N \beta_{pj} \beta_{jre} \beta_{ure}}\tan u_l  \nonumber
        \\
        & \hspace{1em} - \frac{\epsilon}{\rho_t \beta_{pu} \tan u_l}  \bigg) \times {\rm Ei} \bigg( -\frac{\beta_{pu} \tan u_l}{\epsilon N \beta_{pj} \beta_{jre}}  \bigg),
    \end{align}
where $\rm Ei(\cdot)$ is the exponential integral function \cite[Eq. (8.211)]{10.1115/1.3138251}.
\end{theorem}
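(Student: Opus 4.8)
The plan is to exploit the product decomposition $P_{jip}^{\Rmnum{1}}=CD^{\Rmnum{1}}$ from (\ref{JIPi2}) and evaluate the two factors separately. For the energy-sufficiency factor $C={\rm Pr}[E_{ris}\geq Q]$, I would first observe from (\ref{Eris}) that $\left\lvert\sum_{n=1}^N h_{pr,n}\right\rvert^2$ is the squared magnitude of a zero-mean complex Gaussian RV with variance $N$, hence exponentially distributed with mean $N$. Rewriting the event $E_{ris}\geq Q$ with (\ref{Q}) and $P_t=\frac{\tau}{1-\tau}P_s\eta$ reduces it to $\left\lvert\sum_{n=1}^N h_{pr,n}\right\rvert^2\geq\frac{NP_e+P_c}{P_t\beta_{pr}}$, so the exponential tail directly delivers the factor $C=e^{-\frac{NP_e+P_c}{P_t N\beta_{pr}}}$ that multiplies the sum in (\ref{JIP1}).

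For the interception factor $D^{\Rmnum{1}}={\rm Pr}[C_e^{\Rmnum{1}}\geq R]={\rm Pr}[{\rm SNR}_e^{\Rmnum{1}}\geq\epsilon]$ with $\epsilon=2^{\frac{R}{1-\tau}}-1$, I would characterize the RVs in (\ref{SNRe1}). Because mode-\Rmnum{1} aligns the phases to the U--AP cascade, the residual phases in both the numerator sum $\sum_n h_{ur,n}h_{re,n}e^{j\phi_{\Rmnum{1},n}}$ and the jamming sum $\sum_n h_{jr,n}h_{re,n}e^{j\phi_{\Rmnum{1},n}}$ are uniform on $(-\pi,\pi)$ (the same random-phase argument used for $\omega_{n_1},\omega_{n_2}$ in mode-\Rmnum{3}); by the central-limit argument each squared sum, call them $W_1$ and $W_2$, is exponential with mean $N$, while $\left\lvert h_{pu}\right\rvert^2,\left\lvert h_{pj}\right\rvert^2$ are exponential with mean $1$. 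Writing $c_1=\rho_t\beta_{pu}\beta_{ure}$ and $c_2=\rho_t\beta_{pj}\beta_{jre}$, I would integrate the four RVs in sequence: (i) averaging over the exponential numerator term $W_1$ gives ${\rm Pr}=\exp(-\frac{\epsilon(c_2\left\lvert h_{pj}\right\rvert^2 W_2+1)}{c_1 N\left\lvert h_{pu}\right\rvert^2})$; (ii) averaging over the exponential jamming term $W_2$ produces the rational factor $\frac{c_1\left\lvert h_{pu}\right\rvert^2}{c_1\left\lvert h_{pu}\right\rvert^2+\epsilon c_2\left\lvert h_{pj}\right\rvert^2}$ times the residual $\exp(-\frac{\epsilon}{c_1 N\left\lvert h_{pu}\right\rvert^2})$; (iii) averaging over $\left\lvert h_{pj}\right\rvert^2$ converts the rational factor into an exponential integral via $\int_0^\infty\frac{e^{-q}}{1+\beta q}\,dq=-\frac{1}{\beta}e^{1/\beta}{\rm Ei}(-\frac{1}{\beta})$, which is the origin of the ${\rm Ei}(\cdot)$ in (\ref{JIP1}).

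This leaves a single integral over $p=\left\lvert h_{pu}\right\rvert^2\sim{\rm Exp}(1)$ whose integrand is a product of a linear factor, two exponentials (one in $p$ and one in $1/p$), and ${\rm Ei}(-\frac{c_1 p}{\epsilon c_2})$; this integral admits no elementary closed form, which is the main obstacle. I would resolve it by the rescaling $y=N\beta_{ure}\,p$ followed by $y=\tan u$ to map $(0,\infty)$ onto $(0,\pi/2)$, and then Gauss--Chebyshev quadrature over $[-1,1]$ with nodes $\varpi_l=\cos(\frac{2l-1}{2L}\pi)$ and $u_l=\frac{(\varpi_l+1)}{4}\pi$. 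The Jacobian-cum-quadrature prefactor $\frac{\pi^2}{4LN\beta_{ure}}\sec^2 u_l$ combined with the surviving linear prefactor $\frac{\beta_{pu}\tan u_l}{\epsilon N\beta_{pj}\beta_{jre}}$ assembles the constant $-\frac{\pi^2\beta_{pu}}{4L\epsilon N^2\beta_{pj}\beta_{ure}\beta_{jre}}$, while the three exponent contributions (from $e^{c_1 p/(\epsilon c_2)}$, $e^{-\epsilon/(c_1 N p)}$, and $e^{-p}$) collapse, under $p=\frac{\tan u_l}{N\beta_{ure}}$, into the displayed argument, and the $\rm Ei$ argument becomes $-\frac{\beta_{pu}\tan u_l}{\epsilon N\beta_{pj}\beta_{jre}}$, giving (\ref{JIP1}).

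The secondary care-point is bookkeeping: the mean-$N$ scaling of $W_1,W_2$ and the constants $c_1,c_2$ must be tracked consistently through the nested averages so that the $\beta_{ure}$ and $N$ factors cancel and reappear exactly as in (\ref{JIP1}); in particular it is the rescaling by $N\beta_{ure}$ that simultaneously removes $\beta_{ure}$ from the $\rm Ei$ argument and injects the extra power of $N$ into the leading constant.
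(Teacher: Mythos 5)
Your proposal is correct and follows essentially the same route as the paper's Appendix C-\Rmnum{1}: the product decomposition $P_{jip}^{\Rmnum{1}}=CD^{\Rmnum{1}}$, the exponential tail for the energy-sufficiency factor, sequential averaging over the exponential RVs with the identity $\int_0^\infty\frac{e^{-q}}{1+\beta q}\,dq=-\frac{1}{\beta}e^{1/\beta}{\rm Ei}(-\frac{1}{\beta})$ producing the ${\rm Ei}$ term, and Gaussian--Chebyshev quadrature for the final non-elementary integral. The only (immaterial) difference is the integration order --- you take the tail over the cascaded U--zeRIS--E power first and leave $\lvert h_{pu}\rvert^2$ for the quadrature, whereas the paper takes the tail over $\beta_{pu}\lvert h_{pu}\rvert^{2}$ first and quadratures over $Y_{\Rmnum{1},1}$ --- but since both are exponential factors of the same product, your rescaling $y=N\beta_{ure}p$ reproduces the paper's single integral and leading constant exactly.
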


\begin{proof}
    See Appendix \ref{AppendixC}-\Rmnum{1}.
\end{proof}

\begin{theorem}
    \label{theorem5}
    The JIP of mode-\Rmnum{2} is given by
    \begin{align}
        \label{JIP2}
        P_{jip}^{\Rmnum{2}}=&\frac{\pi^2}{8L N \beta_{ure} (v-1) \varphi^v} \bigg(\frac{\beta_{pu}}{\epsilon \beta_{pj} \beta_{jre}} \bigg)^{\frac{v}{2}} e^{-\frac{NP_{e}+P_{c}}{P_t N\beta_{pr}}}  \nonumber
        \\
        &\times \sum_{l = 1}^{L} \sqrt{1-\varpi_l^2} \sec^2u_l (\tan u_l)^{\frac{v}{2}} e^{-\frac{\epsilon}{\rho_t \beta_{pu} \tan u_l}-\frac{\tan u_l}{N \beta_{ure}}} \nonumber
        \\
        &\times \bigg[  \exp\left (\alpha_1  +i\frac{(v-2)\pi}{2}  \right ) \Gamma\Big(2-v, \alpha_1 \Big) \nonumber
        \\
        & + \exp\left (-\alpha_1 -i\frac{(v-2)\pi}{2}  \right )  \Gamma\Big(2-v, -\alpha_1 \Big) \bigg],
    \end{align}
where $\alpha_1 =\frac{i}{\varphi}\sqrt{\frac{\beta_{pu}\tan u_l}{\epsilon\beta_{pj}\beta_{jre}} }$, and $\Gamma(\cdot,\cdot)$ is the upper incomplete gamma function \cite[Eq. (8.350.2)]{10.1115/1.3138251}.
\end{theorem}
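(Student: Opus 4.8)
The plan is to exploit the factorization $P_{jip}^{\Rmnum{2}}=C\,D^{\Rmnum{2}}$ already recorded in (\ref{JIPi2}), where $C={\rm Pr}[E_{ris}\geq Q]$ is the energy-sufficiency probability and $D^{\Rmnum{2}}={\rm Pr}[{\rm SNR}_e^{\Rmnum{2}}\geq\epsilon]$ with $\epsilon=2^{R/(1-\tau)}-1$. First I would dispatch $C$: since the entries of ${\bf h}_{pr}$ are i.i.d.\ $\mathcal{CN}(0,1)$, the variable $\left\lvert\sum_{n=1}^N h_{pr,n}\right\rvert^2$ is exponential with mean $N$, so using (\ref{Eris}), (\ref{Q}) and ${P_t}=\frac{\tau}{1-\tau}P_s\eta$ the threshold $Q/(\eta\tau T P_s\beta_{pr})$ collapses to $\frac{NP_e+P_c}{P_t\beta_{pr}}$, giving $C=\exp\!\big(-\frac{NP_e+P_c}{P_tN\beta_{pr}}\big)$, which is exactly the leading exponential in (\ref{JIP2}).

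The bulk of the work is $D^{\Rmnum{2}}$. Writing (\ref{SNRe22}) as ${\rm SNR}_e^{\Rmnum{2}}=\frac{aXZ}{b\tilde X S^2+1}$ with $a=\rho_t\beta_{pu}\beta_{ure}$, $b=\rho_t\beta_{pj}\beta_{jre}$, $X=\lvert h_{pu}\rvert^2$, $\tilde X=\lvert h_{pj}\rvert^2$, $Z=\lvert{\bf h}_{re}^{T}{\bf\Theta}{\bf h}_{ur}\rvert^2$, and $S=\sum_{n=1}^N\lvert h_{jr,n}\rvert\lvert h_{re,n}\rvert$, I would identify the relevant laws: $X,\tilde X$ are unit-mean exponentials; because in mode-\Rmnum{2} the phases $\phi_{\Rmnum{2},n}$ are matched to the J--E cascade and are therefore independent of the U--E cascade, each summand of $Z$ is circularly symmetric with unit second moment, so $Z$ is (to the usual central-limit accuracy) exponential with mean $N$; and $S$ follows the Gamma law with shape $v$ and scale $\varphi$ established for the coherently combined cascade (the $N_2=0$ counterpart of Lemma \ref{lemma1}). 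Conditioning on $(X,\tilde X,S)$ and using $\Pr[Z\geq z]=e^{-z/N}$ peels off $Z$; the resulting expectation over the unit-mean exponential $\tilde X$ is elementary and turns the $b\tilde X S^2$ term into the rational factor $\frac{aXN}{aXN+\epsilon bS^2}$, leaving a double integral over $S$ and $X$ with the separated factor $e^{-\epsilon/(aXN)}$.

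The analytic heart is the inner integral over $S$, namely $\int_0^\infty\frac{\mu^2}{\mu^2+s^2}\frac{s^{v-1}}{\Gamma(v)\varphi^v}e^{-s/\varphi}\,ds$ with $\mu^2=\frac{aXN}{\epsilon b}$. I would split $\frac{\mu^2}{\mu^2+s^2}$ into the conjugate partial fractions $\frac{\mu}{2i}\big(\frac{1}{s-i\mu}-\frac{1}{s+i\mu}\big)$ and apply $\int_0^\infty\frac{s^{v-1}e^{-s/\varphi}}{s+\beta}\,ds=\beta^{v-1}e^{\beta/\varphi}\Gamma(v)\Gamma(1-v,\beta/\varphi)$ with $\beta=\mp i\mu$. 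The order of the incomplete gamma then has to be lowered from $1-v$ to $2-v$ via the recurrence $\Gamma(1-v,z)=\frac{1}{1-v}\big[\Gamma(2-v,z)-z^{1-v}e^{-z}\big]$; the two spurious $z^{1-v}e^{-z}$ contributions are equal and cancel, and collecting the branch factors $i^{v-1}/(2i)=\tfrac12 e^{i(v-2)\pi/2}$ and its conjugate produces precisely the bracket $\exp(\pm\alpha_1\pm i\frac{(v-2)\pi}{2})\Gamma(2-v,\pm\alpha_1)$ with $\alpha_1=\frac{i}{\varphi}\sqrt{\frac{\beta_{pu}\tan u_l}{\epsilon\beta_{pj}\beta_{jre}}}$, together with the prefactor $\frac{1}{(v-1)\varphi^v}$. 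The remaining outer integral over $X$ is handled by the substitution $\tan u=\beta_{ure}NX$, which sends $e^{-\epsilon/(aXN)}\mapsto e^{-\epsilon/(\rho_t\beta_{pu}\tan u)}$, the density $e^{-x}\mapsto e^{-\tan u/(N\beta_{ure})}$ and $\mu^2\mapsto\frac{\beta_{pu}\tan u}{\epsilon\beta_{pj}\beta_{jre}}$, after which Gauss--Chebyshev quadrature on $(0,\pi/2)$ supplies the $\frac{\pi^2}{4L}\sum_l\sqrt{1-\varpi_l^2}\sec^2u_l$ weights. Multiplying by $C$ delivers (\ref{JIP2}). I expect the $S$-integral --- specifically keeping the complex branches of $\beta^{v-1}$ and $\Gamma(1-v,\beta/\varphi)$ mutually consistent and verifying the cancellation that yields the clean $\Gamma(2-v,\cdot)$ form --- to be the main obstacle, with a secondary subtlety being the justification of the exponential approximation for $Z$, on which the entire closed form rests.
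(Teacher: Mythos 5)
Your proposal is correct and follows essentially the same route as the paper's Appendix C-\Rmnum{2}: the factorization $P_{jip}^{\Rmnum{2}}=C\,D^{\Rmnum{2}}$, the exponential law (mean $N$) for the non-coherent U--E cascade, the Gamma$(v,\varphi)$ law for the coherent J--E cascade, elimination of the jammer's power gain by an elementary exponential expectation, and a final Gauss--Chebyshev quadrature. The only cosmetic differences are that you peel off the cascade gain $Z$ rather than $\lvert h_{pu}\rvert^{2}$ (equivalent to the paper's ordering after the substitution $\tan u=\beta_{ure}Nx$, which maps your outer integral onto the paper's $y_1$-integral) and that you derive the inner integral $\int_{0}^{\infty}\mu^{2}s^{v-1}e^{-s/\varphi}(\mu^{2}+s^{2})^{-1}\,ds$ from first principles via partial fractions and the incomplete-gamma recurrence, where the paper directly invokes the tabulated result Eq.~(3.389.6) of Gradshteyn--Ryzhik; your claimed cancellation of the spurious $z^{1-v}e^{-z}$ terms does check out and reproduces the stated bracket and the $1/(v-1)$ prefactor.
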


\begin{proof}
    See Appendix \ref{AppendixC}-\Rmnum{2}.
\end{proof}

\begin{theorem}
    \label{theorem6}
    The JIP of mode-\Rmnum{3} is given by
    \begin{align}
        \label{JIP3}
        P_{jip}^{\Rmnum{3}}=&\frac{\pi^2}{4L N \beta_{ure} \theta_2^{k_2}} \bigg(\frac{\beta_{pu}}{\epsilon \beta_{pj} \beta_{jre}} \bigg)^{k_2} e^{-\frac{NP_{e}+P_{c}}{P_t N\beta_{pr}}}  \nonumber
        \\
        &\times \sum_{l = 1}^{L} \sqrt{1-\varpi_l^2} \sec^2u_l  \Gamma \bigg (1-k_2,\frac{\beta_{pu} \tan u_l}{\epsilon \beta_{pj} \beta_{jre} \theta_2 } \bigg) \nonumber
        \\
        &\times (\tan u_l)^{k_2} \exp \bigg[ \bigg( \frac{\beta_{pu}}{\epsilon \beta_{pj} \beta_{jre} \theta_2} - \frac{1}{N\beta_{ure}} \bigg) \tan u_l \nonumber
        \\
        &\hspace{1em} - \frac{\epsilon}{\rho_t \beta_{pu} \tan u_l} \bigg].
    \end{align}
\end{theorem}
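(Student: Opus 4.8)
The plan is to start from the factorization $P_{jip}^{\Rmnum{3}}=CD^{\Rmnum{3}}$ given in (\ref{JIPi2}), where $C={\rm Pr}[E_{ris}\geq Q]$ is the energy-sufficiency probability and $D^{\Rmnum{3}}={\rm Pr}[C_e^{\Rmnum{3}}\geq R]={\rm Pr}[{\rm{SNR}}_e^{\Rmnum{3}}\geq\epsilon]$ with $\epsilon=2^{R/(1-\tau)}-1$. The factor $C$ is inherited from the JOP analysis: since $\sum_{n=1}^{N}h_{pr,n}$ is a zero-mean complex Gaussian with variance $N$, the quantity $|\sum_n h_{pr,n}|^2$ appearing in (\ref{Eris}) is exponential with mean $N$, so from (\ref{Eris}) and (\ref{Q}) one obtains $C=\exp(-\tfrac{NP_e+P_c}{NP_t\beta_{pr}})$, which is precisely the common prefactor of (\ref{JIP3}). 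The remaining effort is the evaluation of $D^{\Rmnum{3}}$, and the whole derivation parallels that of Theorem \ref{theorem5} with the coherent-sum Gamma parameters $(v,\varphi)$ replaced by the segmented parameters $(k_2,\theta_2)$ of $\Delta_2$.

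First I would identify the relevant random variables in (\ref{SNRe3}). Writing $W=|h_{pu}|^2$ and $V=|h_{pj}|^2$ (both unit-mean exponential) and recalling that the interference term $\Delta_2$ is Gamma$(k_2,\theta_2)$ by Lemma \ref{lemma1}, the one subtle point is the numerator cascade $X=|{\bf h}_{re}^{T}{\bf\Theta}{\bf h}_{ur}|^2$. In mode-\Rmnum{3} the zeRIS phases are matched either to the U--AP cascade or to the J--E cascade, so each summand of $X$ retains only a residual phase such as $\phi_{h_{re,n}}-\phi_{h_{ra,n}}$ or $\phi_{h_{ur,n}}-\phi_{h_{jr,n}}$, each uniform on $(-\pi,\pi)$. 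Hence $X$ is a random-phase sum and is modeled as exponential with mean $N$, exactly as for the mode-\Rmnum{2} numerator. This reduces $D^{\Rmnum{3}}$ to a four-fold expectation over $W,V,\Delta_2,X$.

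Then I would integrate the four variables in a fixed order. Conditioning on the rest and using ${\rm Pr}[W\geq c]=e^{-c}$ peels off the factor $\exp(-\tfrac{\epsilon}{\rho_t\beta_{pu}\beta_{ure}X})$ and a coupling term $\exp(-\tfrac{\epsilon\beta_{pj}\beta_{jre}}{\beta_{pu}\beta_{ure}X}V\Delta_2)$; integrating the unit-mean exponential $V$ collapses the latter to the rational factor $(1+\tfrac{\epsilon\beta_{pj}\beta_{jre}}{\beta_{pu}\beta_{ure}X}\Delta_2)^{-1}$. The crucial analytic step is the $\Delta_2$ integral against the Gamma PDF (\ref{PDF1}): it has the form $\int_0^\infty\frac{\delta^{k_2-1}}{\delta+a}e^{-\delta/\theta_2}d\delta$ with $a=\tfrac{\beta_{pu}\beta_{ure}X}{\epsilon\beta_{pj}\beta_{jre}}$, which is evaluated in closed form by the table integral $\int_0^\infty\frac{x^{\nu-1}}{x+\beta}e^{-\mu x}dx=\beta^{\nu-1}\Gamma(\nu)e^{\beta\mu}\Gamma(1-\nu,\beta\mu)$ \cite[Eq. (3.383.10)]{10.1115/1.3138251}, producing the upper incomplete gamma $\Gamma(1-k_2,a/\theta_2)$ along with the factors $(a/\theta_2)^{k_2}$ and $e^{a/\theta_2}$ visible in (\ref{JIP3}).

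Finally, I would substitute $t=\beta_{ure}X$ (which yields the $\tfrac{1}{N\beta_{ure}}$ prefactor and the $e^{-t/(N\beta_{ure})}$ factor from the exponential law of $X$) and then $t=\tan u$, $u\in(0,\pi/2)$. The surviving single integral over $X$ is not elementary, since it mixes $e^{-\epsilon/(\rho_t\beta_{pu}t)}$, a power $t^{k_2}$, an exponential in $t$, and $\Gamma(1-k_2,\cdot)$; I would therefore approximate it by Gauss--Chebyshev quadrature with nodes $u_l=\tfrac{(\varpi_l+1)}{4}\pi$ and weights involving $\sqrt{1-\varpi_l^2}$, which converts it into the finite sum $\tfrac{\pi^2}{4L}\sum_l\sqrt{1-\varpi_l^2}\,\sec^2u_l(\cdots)$ and, after multiplying by $C$, reproduces (\ref{JIP3}) term by term. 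The main obstacle is the $\Delta_2$ integral: it requires recognizing the incomplete-gamma table identity and checking its parameter conditions ($k_2>0$, positive arguments); the secondary difficulty is justifying the exponential model for the random-phase numerator $X$ and confirming that the residual $X$-integral admits no elementary closed form, which is exactly what forces the quadrature step.
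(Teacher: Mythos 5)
Your proposal is correct and follows essentially the same route as the paper's Appendix C-III: the factorization $P_{jip}^{\Rmnum{3}}=CD^{\Rmnum{3}}$ with $C$ from the exponential law of $|\sum_n h_{pr,n}|^2$, the modeling of the numerator cascade as exponential with mean $N\beta_{ure}$ and of the jamming cascade as Gamma$(k_2,\theta_2)$ via Lemma \ref{lemma1}, the evaluation of the inner $M_2$--$\Delta_2$ integral by \cite[Eq. (3.383.10)]{10.1115/1.3138251} yielding $\Gamma(1-k_2,\cdot)$, and the final Gauss--Chebyshev quadrature. Your explicit justification of the exponential model for $X$ and your ordering of the $W$, $V$, $\Delta_2$ integrations reproduce exactly the paper's $I_3$ computation, so no gap remains.
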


\begin{proof}
    See Appendix \ref{AppendixC}-\Rmnum{3}.
\end{proof}

\begin{remark}
    \label{remark6}
    It is concluded from (\ref{D1}), (\ref{D2}) and (\ref{D3}) that the probability of data interception event in mode-\Rmnum{3} is higher than that in mode-\Rmnum{2} but lower than that in mode-\Rmnum{1}, such that mode-\Rmnum{1} achieves the worst JIP, while mode-\Rmnum{2} achieves the best JIP, and mode-\Rmnum{3} achieves a security trade-off between mode-\Rmnum{1} and mode-\Rmnum{2}.
    The reason behind the results is that the number of zeRIS elements that adopt an optimal phase shift design with respect to cooperative jamming user is 0, $N$, and $N_2$ in the three modes, respectively, thus mode-\Rmnum{2} achieves the most robust secrecy transmission capability among the three modes.
\end{remark}

\subsubsection{Asymptotic Analysis}
\begin{proposition}
    \label{proposition4}
    The asymptotic expression of $P_{jip}^{\Rmnum{1}}$ is expressed as 
    \begin{align}
        \label{JIP1_asy}
        P_{jip,asy}^{\Rmnum{1}}= -\mu_1 (\ln \mu_1 +\psi (2)),
    \end{align}
    where $\mu_1=\frac{\beta_{pu}\beta_{ure}}{ \big(2^{\frac{R}{1-\tau} }-1 \big) \beta_{pj}\beta_{jre}} $, and $\psi(\cdot)$ is the psi function \cite[Eq. (8.36)]{10.1115/1.3138251}.
\end{proposition}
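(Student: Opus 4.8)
The plan is to start from the product decomposition $P_{jip}^{\Rmnum{1}}=C\,D^{\Rmnum{1}}$ in (\ref{JIPi2}) and to show that, as $P_s\to\infty$, the energy-sufficiency factor saturates while the data-interception factor converges to the stated closed form. For the first factor, $C=1-A=e^{-(NP_e+P_c)/(P_t N\beta_{pr})}$, so by (\ref{A_asy}) and $P_t=\frac{\tau}{1-\tau}P_s\eta\to\infty$ we get $C\to1$; hence $P_{jip,asy}^{\Rmnum{1}}=\lim_{P_s\to\infty}D^{\Rmnum{1}}$, and the task reduces to the interference-limited data-interception probability.

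For $D^{\Rmnum{1}}$ I would let $\rho_t\to\infty$ in (\ref{SNRe1}): the additive ``$+1$'' in the denominator becomes negligible and ${\rm SNR}_e^{\Rmnum{1}}$ collapses to the ratio $\frac{\beta_{pu}\beta_{ure}|h_{pu}|^2 G_s}{\beta_{pj}\beta_{jre}|h_{pj}|^2 G_i}$, with the two cascaded gains $G_s=|\sum_{n}h_{ur,n}h_{re,n}e^{j\phi_{\Rmnum{1},n}}|^2$ and $G_i=|\sum_{n}h_{jr,n}h_{re,n}e^{j\phi_{\Rmnum{1},n}}|^2$. Since in mode-\Rmnum{1} the phases are matched to the U--AP cascade rather than to any link seen by E, both $G_s$ and $G_i$ are random-phase sums and, by the central limit theorem, each is (asymptotically in $N$) exponentially distributed with mean $N$ and independent of the unit-mean exponentials $|h_{pu}|^2,|h_{pj}|^2$. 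Conditioning on $G_s,G_i$ and using the ratio law ${\rm Pr}[|h_{pu}|^2/|h_{pj}|^2\ge t]=1/(1+t)$ gives the conditional interception probability $\mu_1 G_s/(\mu_1 G_s+G_i)$, whose common scale $N$ cancels; thus $D^{\Rmnum{1}}_{asy}=\mathbb{E}\{\mu_1 G_s/(\mu_1 G_s+G_i)\}$ is $N$-free, consistent with $\mu_1$ carrying no $N$-dependence.

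Equivalently, and more directly, the same limit can be read off the exact expression (\ref{JIP1}): as $\rho_t\to\infty$ the factors $e^{-(NP_e+P_c)/(P_t N\beta_{pr})}$ and $\exp(-\epsilon/(\rho_t\beta_{pu}\tan u_l))$ both tend to $1$, and the Gauss--Chebyshev sum reverts to an integral of the form $\int_0^\infty t\,e^{-a t}\,{\rm Ei}(-b t)\,dt$. After the substitution $u=bt$ this collapses to the $N$-free canonical integral $\frac{1}{\mu_1}\int_0^\infty u\,e^{-(1/\mu_1-1)u}\,E_1(u)\,du$, where $E_1(\cdot)=-{\rm Ei}(-\cdot)$. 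Evaluating it through the Laplace transform of $E_1$ together with the small-argument expansion $E_1(u)=-\gamma-\ln u+O(u)$ ($\gamma$ Euler's constant) produces the dominant $-\mu_1\ln\mu_1$ term plus a constant; the constant surfaces by differentiating the gamma integral $\int_0^\infty u^{s-1}e^{-qu}\,du=\Gamma(s)/q^s$ at $s=2$, which is precisely where $\psi(2)=1-\gamma$ enters. Collecting terms and invoking $C\to1$ then yields $P_{jip,asy}^{\Rmnum{1}}=-\mu_1(\ln\mu_1+\psi(2))$.

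The step I expect to be the main obstacle is this last evaluation, namely isolating the finite constant that accompanies the logarithm. The exponential-integral kernel and the Laplace-type integral each contribute logarithmic singularities whose Euler-constant pieces must be tracked carefully, and it is their bookkeeping through the $\psi$ function that fixes the constant as $\psi(2)$ rather than a bare number. As a sanity check I would verify that $P_{jip,asy}^{\Rmnum{1}}$ remains nonnegative and vanishes as $\mu_1\to0$ (strong jamming or large target rate), matching the intuition that a vanishing legitimate-to-jamming ratio drives the interception probability to zero.
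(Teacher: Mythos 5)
Your reduction is the same as the paper's: $C\to 1$ by (\ref{C}), the factor $\exp(-\epsilon/(\rho_t\beta_{pu}y_1))$ in (\ref{D1}) is dropped, and everything hinges on the remaining exponential-integral kernel. The gap sits exactly where you predicted: the constant accompanying $\ln\mu_1$. Write the surviving integral as $-ab\int_0^\infty y\,e^{-(a-b)y}{\rm Ei}(-by)\,dy$ with $a=1/(N\beta_{ure})$ and $b=a\mu_1$. Your route --- substitute $u=by$, expand $E_1(u)=-\gamma-\ln u+O(u)$, and use $\int_0^\infty u e^{-qu}\ln u\,du=q^{-2}\left(\psi(2)-\ln q\right)$ --- makes the two Euler constants cancel: $-\gamma-\psi(2)+\ln q=\ln q-1$, so you land on $-\mu_1(\ln\mu_1+1)$, not $-\mu_1(\ln\mu_1+\psi(2))$. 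Your exact probabilistic computation confirms this: $\mathbb{E}\{\mu_1 U/(\mu_1U+V)\}$ with $U,V$ i.i.d.\ unit exponentials equals $-\mu_1\ln\mu_1/(1-\mu_1)^2-\mu_1/(1-\mu_1)$, whose small-$\mu_1$ expansion is again $-\mu_1\ln\mu_1-\mu_1$. So ``collecting terms'' cannot produce $\psi(2)=1-\gamma$; the $\gamma$ you are counting on from the expansion of $E_1$ is cancelled by the $\gamma$ hidden inside $\psi(2)$.

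The paper obtains $\psi(2)$ by a different, cruder manipulation: it replaces ${\rm Ei}(-x)$ by $e^{-x}\ln x$ (citing Eqs. (8.212.1) and (8.367.4) of the tables), i.e., it drops the Euler constant from the small-argument expansion while inserting an extra $e^{-x}$ damping; the integrand then collapses to $y\,e^{-ay}\ln(by)$, and Eq. (4.352.1) with $\nu=2$ delivers $\psi(2)$ directly. In other words, the $\psi(2)$ in the statement is an artifact of that specific substitution, and your more careful bookkeeping --- which is arguably the more defensible asymptotic, since it matches the exact interference-limited probability to this order --- disagrees with it in the constant term. As a proof of the proposition as stated, your argument therefore does not close: either you adopt the paper's ${\rm Ei}(-x)\approx e^{-x}\ln x$ step explicitly, or you end up proving a slightly different formula that agrees with the stated one only in the dominant $-\mu_1\ln\mu_1$ term as $\mu_1\to 0$.
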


\begin{proof}
    Based on (\ref{C}) and (\ref{D1}), when $P_s \to \infty$, we find that $C=1$, and the term in $D^{\Rmnum{1}}$ whose denominators includes $\rho_t$ can be omitted.
    Besides, according to \cite[Eq. (8.212.1)]{10.1115/1.3138251} and \cite[Eq. (8.367.4)]{10.1115/1.3138251}, we have ${\rm Ei}(-x) \to e^{-x}\ln x $ when $x \to \infty$.
    Then, by using \cite[Eq. (4.352.1)]{10.1115/1.3138251}, Eq. (\ref{JIP1_asy}) is obtained.
    Thus, the proof of Proposition \ref{proposition4} is completed.
\end{proof}

\begin{proposition}
    \label{proposition5}
    The asymptotic expression of $P_{jip}^{\Rmnum{2}}$ is expressed as 
    \begin{align}
        \label{JIP2_asy}
        P_{jip,asy}^{\Rmnum{2}}=&1+\frac{\mu_2 \pi^2}{4L\Gamma(v)\varphi^v} \sum_{l = 1}^{L} \sqrt{1-\varpi_l^2} \sec^2u_l (\tan u_l)^{v+1} \nonumber
        \\
        &\times e^{\mu_2 \tan^2u_l-\frac{\tan u_l}{\varphi} } {\rm Ei} \big( -\mu_2 \tan^2u_l \big),
    \end{align}
    where $\mu_2=\frac{1}{N\mu_1}$.
\end{proposition}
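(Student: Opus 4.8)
The plan is to begin from the product decomposition $P_{jip}^{\Rmnum{2}}=CD^{\Rmnum{2}}$ of (\ref{JIPi2}) and take $P_s\to\infty$ in the two factors separately, mirroring the proof of Proposition~\ref{proposition4}. For the energy factor, $E_{ris}$ in (\ref{Eris}) scales linearly with $P_s$ whereas the demand $Q$ in (\ref{Q}) is fixed, so $C={\rm Pr}[E_{ris}\ge Q]\to1$; equivalently the factor $e^{-(NP_{e}+P_{c})/(P_t N\beta_{pr})}$ of Theorem~\ref{theorem5} tends to $1$ as $P_t\to\infty$. Consequently $P_{jip,asy}^{\Rmnum{2}}=D^{\Rmnum{2}}_{asy}$, and the entire task reduces to the data-interception factor.

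For $D^{\Rmnum{2}}$ I would work directly from the interception event $\{{\rm SNR}_{e}^{\Rmnum{2}}\ge\epsilon\}$ with ${\rm SNR}_{e}^{\Rmnum{2}}$ as in (\ref{SNRe22}). As $\rho_t\to\infty$ the additive unity in the denominator is dominated by the jamming term, so the eavesdropper SINR collapses to the interference-limited ratio; equivalently, in the intermediate expression (\ref{D2}) the summand whose denominator carries $\rho_t$ is discarded, which is exactly what converts the incomplete-gamma functions of Theorem~\ref{theorem5} into the exponential-integral form claimed here. Writing $X=\lvert h_{pu}\rvert^2$, $Y=\lvert h_{pj}\rvert^2$, $W=\lvert{\bf h}_{re}^{T}{\bf\Theta}{\bf h}_{ur}\rvert^2$, and $s=\sum_{n}\lvert h_{jr,n}\rvert\lvert h_{re,n}\rvert$, the surviving condition is $W\ge \mu_2\,(Y/X)\,s^2$ with $\mu_2=\tfrac{1}{N\mu_1}=\tfrac{\epsilon\beta_{pj}\beta_{jre}}{N\beta_{pu}\beta_{ure}}$.

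The heart of the argument is then a layered expectation that I would evaluate by integrating out the Rayleigh gains analytically and leaving only the coherent jamming gain to quadrature. Since $W$ is exponential with mean $N$, averaging over it replaces the event by $\exp(-\mu_2 (Y/X)s^2)$; averaging over the unit-mean exponential $Y$ yields the rational factor $X/(X+\mu_2 s^2)$; and averaging that over the unit-mean exponential $X$ produces, through the standard identity $\int_{0}^{\infty}\tfrac{x}{x+a}e^{-x}\,dx=1+a\,e^{a}{\rm Ei}(-a)$, the structure $1+\mu_2 s^2 e^{\mu_2 s^2}{\rm Ei}(-\mu_2 s^2)$. Finally I would average over $s$, which by Lemma~\ref{lemma1} (for the fully coherent $N$-element jamming path) is Gamma-distributed with shape $v$ and scale $\varphi$: the constant term contributes the leading $1$, while the remainder $\tfrac{\mu_2}{\Gamma(v)\varphi^{v}}\int_{0}^{\infty}s^{v+1}e^{-s/\varphi+\mu_2 s^2}{\rm Ei}(-\mu_2 s^2)\,ds$ admits no elementary antiderivative. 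Applying the tangent substitution $s=\tan u_l$ and Gauss--Chebyshev quadrature on $[0,\pi/2]$ with the same nodes $u_l=\tfrac{(\varpi_l+1)\pi}{4}$, $\varpi_l=\cos\big(\tfrac{(2l-1)\pi}{2L}\big)$ used throughout the paper then reproduces the summation in (\ref{JIP2_asy}), including the factors $(\tan u_l)^{v+1}$, $e^{\mu_2\tan^2 u_l-\tan u_l/\varphi}$, and ${\rm Ei}(-\mu_2\tan^2 u_l)$.

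The main obstacle is the legitimacy and the bookkeeping of the limit rather than any single integral. I would need to justify that dropping the noise term commutes with the fourfold expectation, which follows from dominated convergence because the finite-$\rho_t$ integrand is bounded by its interference-limited counterpart, an integrable dominating function. The remaining care is purely algebraic: tracking the path-loss constants so that the coefficient collapses cleanly to $\mu_2/(\Gamma(v)\varphi^{v})$, and recognizing that the jamming power gain is $s^2$ with $s$ (not $s^2$) Gamma-distributed, which is what yields the $s^{v+1}$ weight and hence the $(\tan u_l)^{v+1}$ and $\tan^2 u_l$ arguments in the final quadrature.
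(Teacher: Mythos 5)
Your proposal is correct and follows essentially the same route as the paper: drop the noise-induced $\rho_t$ term, reduce the interception probability to a layered expectation that is resolved by the identity $\int_0^\infty \frac{x}{x+a}e^{-x}\,dx = 1+a e^{a}{\rm Ei}(-a)$ (the paper's application of Eq.\ (3.352.4) after exchanging the order of integration of $y_1$ and $y_2$ in (\ref{D2})), and finish with Gauss--Chebyshev quadrature over the Gamma-distributed jamming cascade $s$. The only cosmetic difference is that you extract the exponential integral from the $\lvert h_{pu}\rvert^2$ average whereas the paper extracts it from the eavesdropper cascade gain $y_1$; since these two exponential factors enter the interception event only through their product, the two orderings yield the identical expression (\ref{JIP2_asy}).
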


\begin{proof}
    Following similar proof steps with Proposition \ref{proposition4}, we omit the term that contains $\rho_t$ in the denominator, and exchange the order of integration of $y_1$ and $y_2$ in (\ref{D2}). 
    By utilizing \cite[Eq. (3.352.4)]{10.1115/1.3138251} and Gaussian-Chebyshev quadrature, the result of (\ref{JIP2_asy}) is derived.
\end{proof}

\begin{proposition}
    \label{proposition6}
    The asymptotic expression of $P_{jip}^{\Rmnum{3}}$ is expressed as 
    \begin{align}
        \label{JIP3_asy}
        P_{jip,asy}^{\Rmnum{3}}=&1+\frac{\mu_2 \pi^2}{4L\Gamma(k_2)\theta_2^{k_2}} \sum_{l = 1}^{L} \sqrt{1-\varpi_l^2} \sec^2u_l (\tan u_l)^{k_2} \nonumber
        \\
        &\times e^{ \big( \mu_2 -\frac{1}{\theta_2} \big) \tan u_l } {\rm Ei}\big(-\mu_2 \tan u_l \big).
    \end{align}
\end{proposition}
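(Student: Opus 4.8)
The plan is to follow the same route as the proof of Proposition \ref{proposition5}, replacing the fully optimized jamming statistics of mode-\Rmnum{2} by the Gamma statistics of $\Delta_2$ supplied by Lemma \ref{lemma1}. Since $P_{jip}^{\Rmnum{3}}=CD^{\Rmnum{3}}$ by (\ref{JIPi2}), I would first invoke the high-power regime $P_s\to\infty$, which drives $\rho_t,P_t\to\infty$; from (\ref{C}) the energy-sufficiency factor $C=e^{-\frac{NP_e+P_c}{P_t N\beta_{pr}}}\to 1$, so that $P_{jip,asy}^{\Rmnum{3}}=D^{\Rmnum{3}}$ to leading order. Within the integral form of $D^{\Rmnum{3}}$ underlying (\ref{D3}), the additive unit term in the denominator of ${\rm SNR}_{e}^{\Rmnum{3}}$ becomes negligible against the $\rho_t$-scaled interference; equivalently, the factor carrying $-\frac{\epsilon}{\rho_t\beta_{pu}\tan u_l}$ in (\ref{JIP3}) tends to $1$. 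I would therefore discard the $\rho_t$-dependent term, so the interception event collapses to the ratio condition $\beta_{pu}\beta_{ure}|h_{pu}|^2|{\bf h}_{re}^{T}{\bf\Theta}{\bf h}_{ur}|^2\ge\epsilon\beta_{pj}\beta_{jre}|h_{pj}|^2\Delta_2$.

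This reduction leaves a double integral over the two auxiliary variables $y_1$ and $y_2$ that represent the numerator cascaded gain and the Gamma-distributed interference $\Delta_2$, respectively. Mirroring Proposition \ref{proposition5}, I would exchange the order of integration of $y_1$ and $y_2$ so that the inner integral matches a tabulated form. The inner integral is then of the type $\int_0^\infty\frac{e^{-\mu x}}{x+\beta}\,dx=-e^{\beta\mu}{\rm Ei}(-\beta\mu)$, i.e.\ \cite[Eq. (3.352.4)]{10.1115/1.3138251}, which produces exactly the exponential-integral factor ${\rm Ei}(-\mu_2\tan u_l)$ and the exponential $e^{(\mu_2-\frac{1}{\theta_2})\tan u_l}$ appearing in (\ref{JIP3_asy}); the normalization $\frac{1}{\Gamma(k_2)\theta_2^{k_2}}$ and the power $(\tan u_l)^{k_2}$ are inherited from the PDF (\ref{PDF1}) of $\Delta_2$ with shape and scale $k_2,\theta_2$ from Lemma \ref{lemma1}. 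The leading constant $1$ arises as the boundary contribution of the inner integration once the conditioning variable is integrated over its full support, with $\mu_2=\frac{1}{N\mu_1}$ and $\mu_1$ as in Proposition \ref{proposition4}.

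Finally, the remaining single integral over $u$ is evaluated approximately by Gaussian--Chebyshev quadrature under the substitution $u_l=\frac{(\varpi_l+1)}{4}\pi$ with $\varpi_l=\cos\!\big(\frac{2l-1}{2L}\pi\big)$, which introduces the weights $\sqrt{1-\varpi_l^2}\,\sec^2 u_l$ and assembles the finite sum of (\ref{JIP3_asy}). Structurally this is identical to the derivation of (\ref{JIP2_asy}), so the only genuinely new ingredient is carrying the Gamma$(k_2,\theta_2)$ exponent $k_2$ through the quadrature rather than the mode-\Rmnum{2} parameter.

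I expect the main obstacle to be the bookkeeping of the leading $1$ and the sign of the ${\rm Ei}$ correction: after the order swap and the removal of the $\rho_t$-term, one must track precisely which boundary term of the inner integral survives so that the answer is written as $1$ plus the (negative) ${\rm Ei}$-sum rather than having that contribution silently absorbed. A secondary subtlety is confirming that substituting the exact mode-\Rmnum{2} jamming statistics by the Gamma$(k_2,\theta_2)$ moments of $\Delta_2$ still lets \cite[Eq. (3.352.4)]{10.1115/1.3138251} close the inner integral in finite form, so that no residual non-elementary integral remains before the quadrature step.
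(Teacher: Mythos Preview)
Your proposal is correct and follows essentially the same approach as the paper: let $C\to 1$ as $P_s\to\infty$, drop the $\rho_t$-dependent term in $D^{\Rmnum{3}}$, swap the order of integration over $y_1$ and $y_2$, close the inner integral via \cite[Eq.~(3.352.4)]{10.1115/1.3138251} to produce the ${\rm Ei}$ factor, and finish with Gaussian--Chebyshev quadrature. The paper itself simply states that the derivation mirrors Proposition~\ref{proposition5} and omits the details, so your outline is in fact more explicit than the original.
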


\begin{proof}
    Similar to the proof of Proposition \ref{proposition5}, the result in (\ref{JIP3_asy}) is obtained.
    We skip the proof for brevity.
\end{proof}

\begin{remark}
    \label{remark7}
    It can be observed from Proposition \ref{proposition4}, Proposition \ref{proposition5}, and Proposition \ref{proposition6} that as transmission power $P_s$ increases, the JIP will converge to a performance floor.
    It is evident from (\ref{JIP1_asy}) that this performance floor in mode-\Rmnum{1} is independent of the number of zeRIS elements $N$, this is due to the fact that the terms $Y_{\Rmnum{1},1}$ and $Y_{\Rmnum{1},2}$ in (\ref{D1}) follow the exponential distribution whose parameters are $N\beta_{ure}$ and $N\beta_{jre}$, respectively, and then $N$ can be omitted.
    In addition, the performance floors in mode-\Rmnum{2} and mode-\Rmnum{3} can be enhanced by increasing $N$.
\end{remark}

To further obtain the asymptotic expressions of JIP in the case of high transmission power and large number of zeRIS element, the LoS links between PS and U/J are assumed to establish, i.e., $\left\lvert h_{pu}\right\rvert = \left\lvert h_{pj}\right\rvert =1$.

\begin{corollary}
    \label{corollary4}
    When $P_s \to \infty$ and $N \to \infty$, the asymptotic JIP of mode-\Rmnum{2} is further expressed as
    \begin{align}
        \label{JIP2_asyN}
        P_{jip,asy}^{\Rmnum{2}} =  \bigg(\frac{e \mu_3}{2v} \bigg)^{\frac{v}{2} } v^{-\frac{1}{4} } \exp \bigg(\frac{\mu_3}{8} -\sqrt{\frac{v\mu_3}{2} }  \bigg),
    \end{align}
where $\mu_3=\frac{N\mu_1}{\varphi^2}$, and the definition of $\mu_1$ can be found in (\ref{JIP1_asy}).
\end{corollary}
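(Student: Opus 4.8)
The plan is to specialize the mode-\Rmnum{2} intercept probability to the joint regime $P_s\to\infty,\ N\to\infty$ under the LoS hypothesis $|h_{pu}|=|h_{pj}|=1$, starting from the decomposition $P_{jip}^{\Rmnum{2}}=C D^{\Rmnum{2}}$ in (\ref{JIPi2}). Since $P_s\to\infty$ drives the energy-sufficiency factor $C\to1$, and fixing the LoS gains removes the two source-to-node fading averages, the problem reduces to a single expectation $P_{jip}^{\Rmnum{2}}\to\mathbb{E}[e^{-\mu_2 Z}]$ over the coherent cascaded jamming power $Z=S^2$, where $S=\sum_n|h_{jr,n}||h_{re,n}|$ is the coherent amplitude. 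I would model $S$ by the same moment-matched Gamma law that underlies the accurate analysis in Theorem~\ref{theorem5}, for which the shape $v=N\pi^2/(16-\pi^2)$ grows linearly in $N$ while the scale $\varphi=(16-\pi^2)/(4\pi)$ stays $O(1)$; consequently $\mu_3=N\mu_1/\varphi^2=1/(\mu_2\varphi^2)\to\infty$ and $v\mu_3=\Theta(N^2)$. This turns the expectation into the parabolic-cylinder-type integral $\frac{1}{\Gamma(v)\varphi^v}\int_0^\infty s^{v-1}e^{-s/\varphi}e^{-\mu_2 s^2}\,ds$, whose large-order behaviour is what must be extracted.

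I would then evaluate this integral by the steepest-descent/Laplace method in the large parameter $v$. Writing the exponent as $(v-1)\ln s-\mu_2 s^2-s/\varphi$, the saddle solves $2\mu_2 s^2+s/\varphi-(v-1)=0$, so $s^{*}=\sqrt{v/(2\mu_2)}-1/(4\mu_2\varphi)+\cdots$. The leading saddle value $(s^{*})^{v-1}e^{-\mu_2(s^{*})^{2}}\approx e^{-v/2}(v/(2\mu_2))^{(v-1)/2}$, combined with $\varphi^{-v}$ and Stirling's formula $\Gamma(v)\sim\sqrt{2\pi}\,v^{v-1/2}e^{-v}$, collapses the pure power content to $(e\mu_3/(2v))^{v/2}$ (using $2\mu_2\varphi^2=2/\mu_3$). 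Crucially, the $O(1)$ shift in $s^{*}$ must be propagated through all three exponent terms: it contributes $+\mu_3/4$ from $-s^{*}/\varphi$, $-\mu_3/16$ from $-\mu_2(s^{*})^{2}$, and $-\mu_3/16$ from $(v-1)\ln s^{*}$, which sum to the stated $\exp(\mu_3/8)$, while the half-integer powers of $v\mu_3$ from the same three terms combine to $\exp(-\sqrt{v\mu_3/2})$.

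The step I expect to be the real obstacle is pinning down the algebraic prefactor $v^{-1/4}$. A naive Gaussian-Laplace evaluation gives $\phi''(s^{*})\approx-4\mu_2$, an $O(1)$ curvature, and hence only a constant prefactor, not $v^{-1/4}$; the quarter power is a genuine signature of the \emph{uniform} large-order asymptotics of the parabolic cylinder function (equivalently, of the large-$v$ limit of the incomplete-gamma terms $\Gamma(2-v,\pm\alpha_1)$ in Theorem~\ref{theorem5}), which a plain second-order saddle expansion does not resolve. I would therefore either invoke the Olver-type uniform asymptotics of the parabolic cylinder function directly, or, equivalently, take the $v\to\infty$ limit of the exact expression (\ref{JIP2}) using the large-order expansion of $\Gamma(2-v,\cdot)$, and verify that the constant and $v^{-1/4}$ bookkeeping agree. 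A useful sanity check throughout is that a term-by-term expansion of the integrand returns only mutually cancelling algebraic contributions, confirming that the answer is saddle- rather than series-dominated.
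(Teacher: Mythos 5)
Your route and the paper's coincide where it matters. The paper's proof is a three-citation chain: it applies \cite[Eq. (3.462.1)]{10.1115/1.3138251} to exactly the integral you arrive at, giving $\frac{1}{\Gamma(v)\varphi^{v}}\int_{0}^{\infty}s^{v-1}e^{-\mu_{2}s^{2}-s/\varphi}\,ds=\big(\tfrac{\mu_{3}}{2}\big)^{v/2}e^{\mu_{3}/8}D_{-v}\big(\sqrt{\mu_{3}/2}\big)$ (so $e^{\mu_{3}/8}$ enters exactly, with no expansion), then converts $D_{-v}$ to a Whittaker function via Eq. (9.240) and invokes the large-order asymptotic Eq. (9.229.2), whose $z^{1/4}k^{k-1/4}$ structure is the source of both $e^{-\sqrt{v\mu_{3}/2}}$ and the $v^{-1/4}$; the final identity $e^{-\frac{v}{2}\ln\frac{v}{2}}=(\tfrac{v}{2})^{-v/2}$ plus Stirling yields $(e\mu_{3}/(2v))^{v/2}$. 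So your fallback option --- invoke the large-order asymptotics of the parabolic cylinder function --- \emph{is} the paper's proof, while your saddle-point computation is a more self-contained derivation of the same exponential content: your reduction $C\to 1$, $D^{\Rmnum{2}}\to\mathbb{E}[e^{-\mu_{2}S^{2}}]$ is exactly right, and your exponent bookkeeping ($+\mu_{3}/4-\mu_{3}/16-\mu_{3}/16=+\mu_{3}/8$, plus $-u_{0}=-\sqrt{v\mu_{3}/2}$) checks out.

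Your suspicion about the $v^{-1/4}$ should, however, be pushed further rather than deferred to the special-function tables: Laplace's method does not merely fail to resolve it --- it genuinely produces an $O(1)$ prefactor here (the Gaussian width $\sqrt{2\pi/|h''(u^{*})|}=\sqrt{\pi\mu_{3}/2}$ combines with the $\sqrt{2/\mu_{3}}$ left over from Stirling to give $1/\sqrt{2}$, and the higher Laplace corrections are $O(1/v)$). This is consistent with the exact value $D_{-v}(0)=\sqrt{\pi}\,2^{-v/2}/\Gamma\big(\tfrac{v+1}{2}\big)\sim\tfrac{1}{\sqrt{2}}(e/v)^{v/2}$, which carries no $v^{-1/4}$. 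The $v^{-1/4}$ in (\ref{JIP2_asyN}) arises from applying the fixed-argument large-order Whittaker formula to an argument $\mu_{3}/4$ that itself grows with $N$; in the regime $1\ll\mu_{3}\ll v$ assumed in Remark \ref{remark8} the two answers differ by the algebraic factor $\sqrt{2}\,v^{-1/4}$. Since the whole expression is dominated by $(e\mu_{3}/(2v))^{v/2}e^{-\sqrt{v\mu_{3}/2}}$, this is immaterial to the qualitative conclusions drawn from the corollary, but if you carry out your program honestly you should expect to land on the constant prefactor rather than reproduce $v^{-1/4}$, and you should flag the implicit requirement $\mu_{3}/v\to 0$ (not guaranteed by $N\to\infty$ alone, since both scale linearly in $N$) under which the stated exponential factors are the correct leading behavior.
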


\begin{proof}
    When $P_s \to \infty$ and $N \to \infty$, using \cite[Eqs. (3.462.1), (9.240), (9.229.2)]{10.1115/1.3138251} and the expression $e^{-\frac{v}{2}\ln\frac{v}{2} } = (\frac{v}{2})^{-\frac{v}{2}}$, the result of (\ref{JIP2_asyN}) is obtained.
\end{proof}

\begin{corollary}
    \label{corollary5}
    When $P_s \to \infty$ and $N \to \infty$, the asymptotic JIP of mode-\Rmnum{3} is further expressed as
    \begin{align}
        \label{JIP3_asyN}
        P_{jip,asy}^{\Rmnum{3}} =  \bigg( \frac{N\mu_1}{\theta_2 + N\mu_1} \bigg )^{k_2}.
    \end{align}
\end{corollary}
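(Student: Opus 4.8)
The plan is to sidestep the exponential-integral form of Proposition \ref{proposition6} (propagating the $N\to\infty$ limit through that expression would force me to evaluate an $\mathrm{Ei}$-integral and show the Gauss--Chebyshev sum collapses, which is messy) and instead return to the probabilistic definition $P_{jip}^{\Rmnum{3}}=CD^{\Rmnum{3}}$ in (\ref{JIPi2}), evaluating both factors directly in the double limit. First I would dispose of $C=\Pr[E_{ris}\geq Q]$: since $E_{ris}$ scales linearly with $P_s$ while $Q$ is fixed, the energy-sufficiency probability obeys $C\to1$ as $P_s\to\infty$ (the same reasoning that sends $A_{asy}\to0$ in the JOP floor). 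Hence $P_{jip,asy}^{\Rmnum{3}}=D^{\Rmnum{3}}$, and the whole problem reduces to the data-interception probability.

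Next I would simplify ${\rm SNR}_e^{\Rmnum{3}}$ of (\ref{SNRe3}). Under the stated LoS assumption $|h_{pu}|^2=|h_{pj}|^2=1$, together with $P_s\to\infty$ (so $\rho_t\to\infty$ and the additive $1$ in the denominator becomes negligible, which is precisely the floor isolated in Remark \ref{remark7}), the ratio collapses to the $\rho_t$-free form $\frac{\beta_{pu}\beta_{ure}X}{\beta_{pj}\beta_{jre}\Delta_2}$, where $X:=|{\bf h}_{re}^{T}{\bf \Theta}{\bf h}_{ur}|^{2}$. Therefore $D^{\Rmnum{3}}=\Pr[X\geq \frac{1}{\mu_1}\Delta_2]$, with $\mu_1$ as defined in (\ref{JIP1_asy}).

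The core step is the large-$N$ statistics. In mode-\Rmnum{3} the $N$ phase shifts applied to the U$\to$E cascade are matched to \emph{other} links, leaving independent residual phases that are uniform on $(-\pi,\pi)$; by the central limit theorem the sum $\sum_{n}|h_{ur,n}||h_{re,n}|e^{j\psi_{n}}$ converges to a circularly symmetric complex Gaussian of variance $N$, so $X$ is exponential with mean $N$ and $\Pr[X\geq x]=e^{-x/N}$ (consistent with the exponential gains already used in Remark \ref{remark7}). Conditioning on $\Delta_2$ and averaging over its Gamma law from Lemma \ref{lemma1} and (\ref{PDF1}) then gives $D^{\Rmnum{3}}=\int_{0}^{\infty}e^{-z/(N\mu_1)}f_{\Delta_2}(z)\,dz=\mathbb{E}[e^{-\mu_2\Delta_2}]$ with $\mu_2=\frac{1}{N\mu_1}$. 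I would recognize this as the moment generating function of a Gamma$(k_2,\theta_2)$ variable evaluated at $-\mu_2$, namely $(1+\theta_2\mu_2)^{-k_2}$; substituting $\mu_2=\frac{1}{N\mu_1}$ and rearranging yields $\left(\frac{N\mu_1}{\theta_2+N\mu_1}\right)^{k_2}$, as claimed.

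The hardest part will be justifying the third step rigorously: beyond invoking the CLT for the exponential limit of $X$, I must treat $X$ and $\Delta_2$ as independent even though both are built from the shared channel ${\bf h}_{re}$. I would argue that the contribution of any single $|h_{re,n}|$ is $O(1/N)$ of each sum, so the residual coupling vanishes as $N\to\infty$; this is exactly the independence already assumed throughout the JIP analysis, which makes the moment-generating-function evaluation asymptotically exact and legitimizes the closed form.
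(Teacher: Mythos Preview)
Your proposal is correct and follows essentially the same route as the paper: the paper also starts from $P_{jip}^{\Rmnum{3}}=CD^{\Rmnum{3}}$, uses $C\to1$ and the LoS assumption $|h_{pu}|=|h_{pj}|=1$ together with $\rho_t\to\infty$ to reduce $D^{\Rmnum{3}}$ to $\int_0^\infty e^{-y_2/(N\mu_1)}\,\frac{y_2^{k_2-1}}{\Gamma(k_2)\theta_2^{k_2}}e^{-y_2/\theta_2}\,dy_2$, and then evaluates this via \cite[Eq.~(3.381.4)]{10.1115/1.3138251}, which is exactly your Gamma-MGF computation. Your extra remark on the residual dependence between $X$ and $\Delta_2$ through ${\bf h}_{re}$ is a point the paper tacitly assumes away throughout the JIP analysis, so your heuristic justification is consistent with (and slightly more careful than) the paper's treatment.
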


\begin{proof}
    When $P_s \to \infty$ and $N \to \infty$, applying \cite[Eq. (3.381.4)]{10.1115/1.3138251}, the result of (\ref{JIP3_asyN}) is obtained.
\end{proof}

\begin{remark}
    \label{remark8}
    It can be observed that in Corollary \ref{corollary4}, $v=\frac{N\pi^2}{16-\pi^2} \gg \mu_3$, and in Corollary \ref{corollary5}, $k_2$ is related to $N_2$, such that the asymptotic JIP of mode-\Rmnum{2} and mode-\Rmnum{3} can be further improved by increasing the number of zeRIS elements $N$.
    This result can be explained by the fact that the number of zeRIS elements that adopting the optimal phase shift with respect to cooperative jamming user in mode-\Rmnum{2} and mode-\Rmnum{3} is $N$ and $N_2$, respectively.
    Under high transmission power, as the number of zeRIS elements increases, the artificial noise at eavesdropper gets stronger than the information intercepted by eavesdropper, making the SNR of eavesdropper worse, thus improving the security of the system.
\end{remark}

\begin{remark}
    \label{remark9}
    Corollary \ref{corollary4} and Corollary \ref{corollary5} show that the asymptotic JIP of mode-\Rmnum{2} and mode-\Rmnum{3} is independent of the channel between PS and zeRIS, since the energy sufficiency event is almost certain to occur under high transmission power.
    In addition, the asymptotic JIP performance floor can be further improved by correctly designing the time switching factor, predefined data rate, and the distance between the corresponding nodes.
\end{remark}

\subsection{SEE Analysis}
Substituting (\ref{JOP1}), (\ref{JOP2}), (\ref{JOP3}), (\ref{JIP1}), (\ref{JIP2}) and (\ref{JIP3}) into (\ref{SEE}), the SEE in (\ref{SEE}) of three modes is derived.

\begin{remark}
\label{remark10}
Due to the inclusion of JOP, JIP, predefined data rate and transmission power in the SEE definition, we can quantitatively evaluate the performance of wireless-powered zeRIS communication system in terms of security, reliability and energy consumption.
\end{remark}

\section{Simulation Results and Discussion}
In this section, we employ Monte Carlo simulations to validate the theoretical analysis.
The concept of security-reliability trade-off (SRT) refers to treating JOP as a function of JIP, and SRT performance is also reflected in normalized joint intercept and outage probability (JIOP), which is the mean of the sum of JOP and JIP.
To validate the performance improvement of the friendly cooperative jamming user J in terms of system security, a wireless-powered zeRIS communication system without the aid of J is considered as benchmark-\Rmnum{1}.
Moreover, to demonstrate the superiority of proposed three modes in terms of security, reliability and energy efficiency, a wireless-powered zeRIS communication system without optimal phase shift design is considered as benchmark-\Rmnum{2}.
Unless otherwise stated, we assume the distances from zeRIS to PS, U, J, AP and E, and the distances from PS to U and J are $10$m, respectively.
Moreover, the other simulation parameters are set as $a_0 =2.7$, $T=1$, $\tau=0.4$, $\eta =0.8$, $R=1.5$ BPCU, $N=30$, $N_1=N_2=\frac{N}{2}=15$, $P_e=2$ $\mu$W, $P_c=50$ mW, $\sigma ^2=-45$ dBm, and $L=1500$.

Fig. \ref{fig2} shows that the JOP versus transmission power $P_s$ for proposed modes.
It is observed from the figure that the accurate and asymptotic closed-form expressions of JOP match well with the simulation results, verifying the correctness of theoretical analysis of JOP.
Moreover, the figure shows that under the same condition, mode-\Rmnum{1} achieves the best JOP, while mode-\Rmnum{2} achieves the worst JOP, and mode-\Rmnum{3} is in the middle of this comparison, which confirms the conclusion in \textit{Remark \ref{remark1}}.
This is attributed to the fact that the number of zeRIS elements that adopt an optimal phase shift design with respect to U is $N$, 0, and $N_1$ in the three modes, respectively, which enhances the information transmission quality and the reliability of the system in various degrees.
In addition, it is clear that the asymptotic JOP lines are straight lines with the slope of 1, which means that all the diversity gains of three modes are 1, confirming the conclusion in \textit{Remark \ref{remark3}}.
As can be seen from the figure, the increase in energy transfer distances $d_{pr},d_{pu}$ or information transmission distances $d_{ur},d_{ar}$ leads to an increase in JOP.
Differently, in mode-\Rmnum{1}, the increased $d_{pr}$ and $d_{pu}$ have a stronger deterioration effect than the increased $d_{ur}$ and $d_{ar}$, while the conclusion is opposite in mode-\Rmnum{2} and mode-\Rmnum{3}.
It can be interpreted that the fading effect caused by increasing distance on information transmission over the zeRIS cascaded channels in mode-\Rmnum{1} is weaker than that in mode-\Rmnum{2} and mode-\Rmnum{3}.

\begin{figure}[t]
    \centering
    \includegraphics[width=0.6\linewidth]{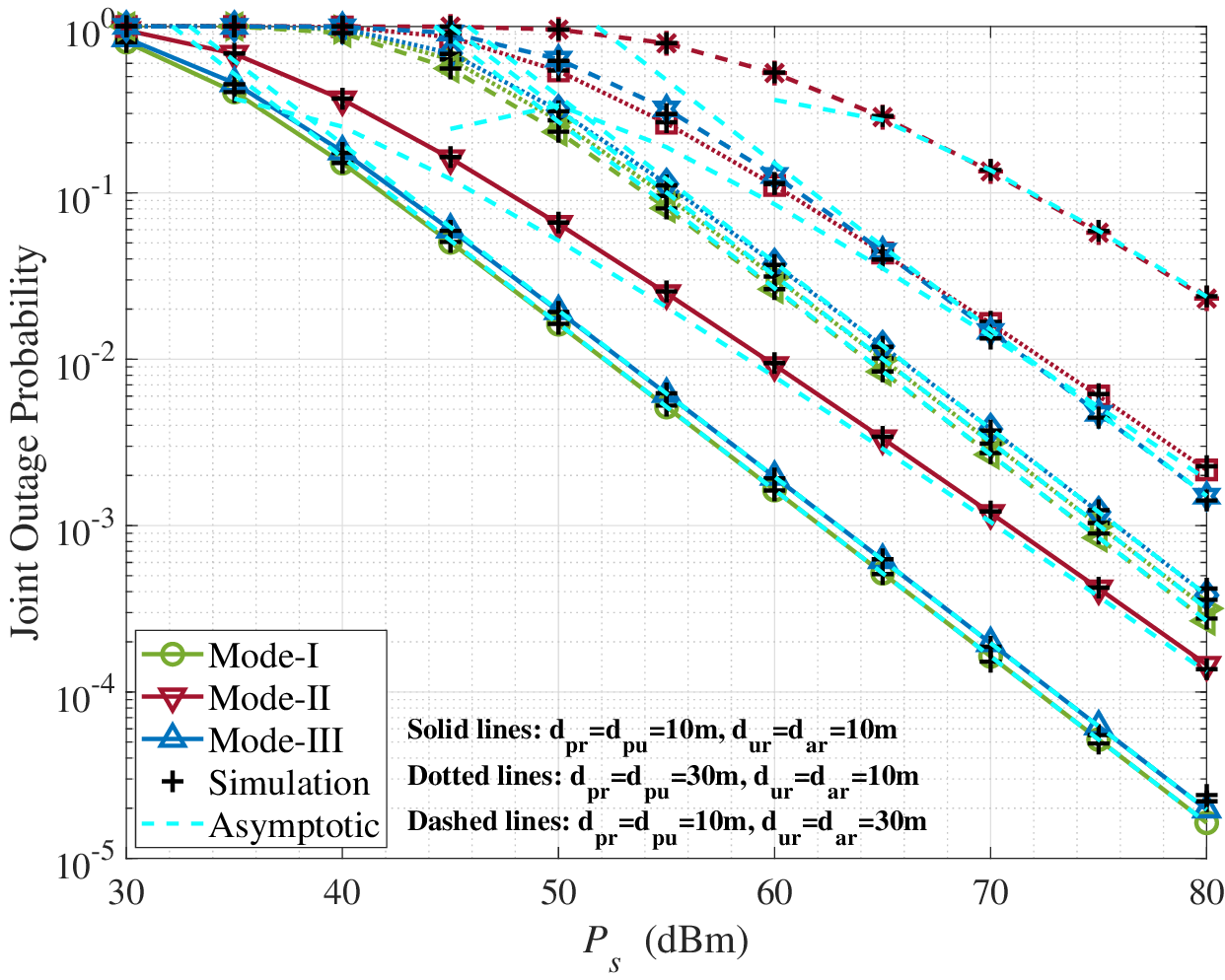}
    \caption{The JOP versus $P_s$ for proposed modes, where $N=30$.}
    \label{fig2}
\end{figure}

\begin{figure}[t]
    \centering
    \includegraphics[width=0.6\linewidth]{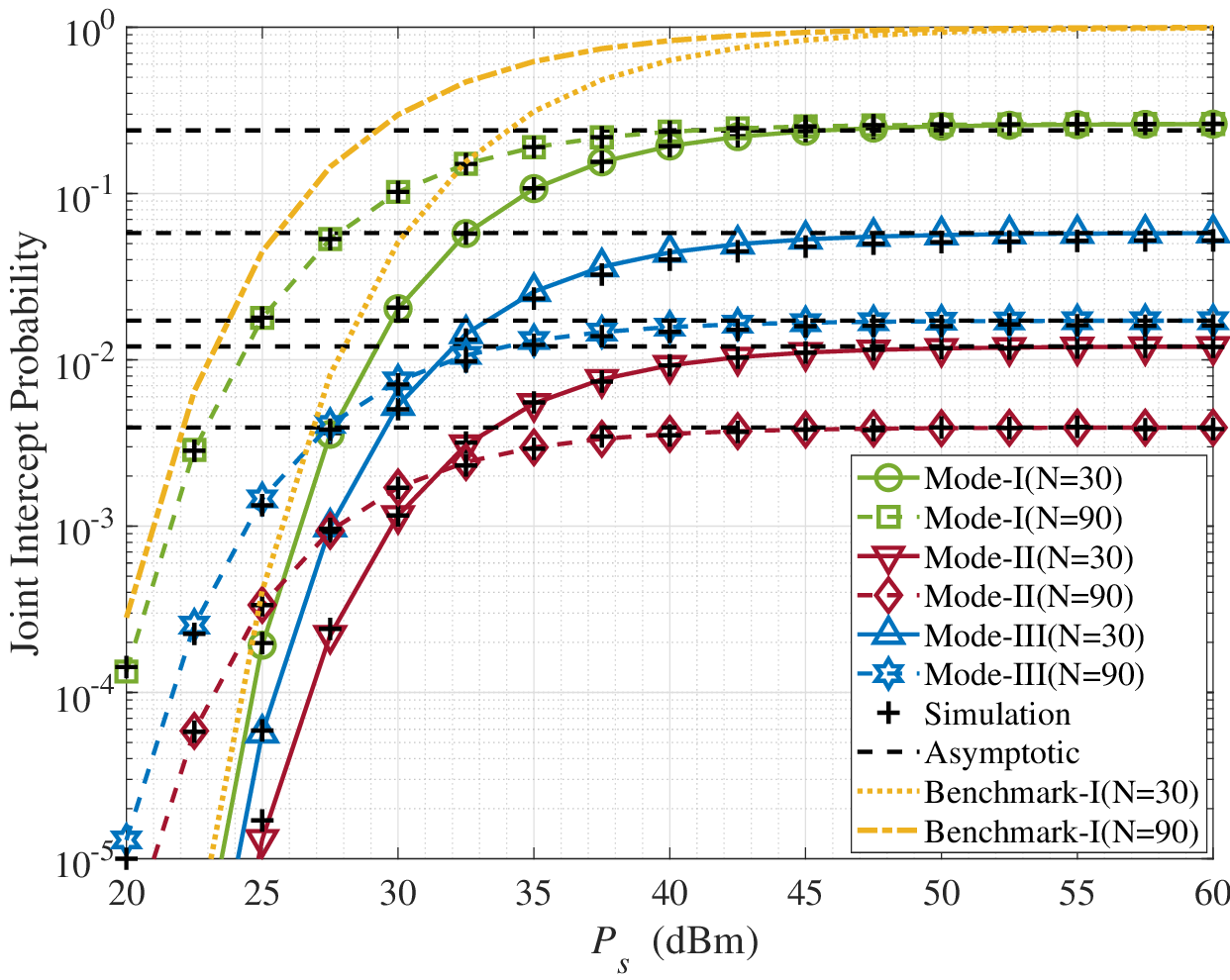}
    \caption{The JIP versus $P_s$ for proposed modes under different $N$.}
    \label{fig3}
\end{figure}

Fig. \ref{fig3} shows that the JIP versus transmission power $P_s$ for proposed modes.
From this figure, we see that the JIPs of proposed three modes are lower than benchmark-\Rmnum{1} under the same conditions, demonstrating the importance of the friendly J in improving system security.
It can be seen that the JIP of mode-\Rmnum{3} is higher than that of mode-\Rmnum{2} but lower than that of mode-\Rmnum{1}, since the number of zeRIS elements that adopt an optimal phase shift design with respect to J is 0, $N$, and $N_2$ in the three modes, respectively, which validates the conclusion in \textit{Remark \ref{remark6}}.
Besides, it is observed from the figure that in mode-\Rmnum{1}, increasing the number of zeRIS elements $N$ degrades JIP performance.
This result can be explained by the fact that increasing $N$ reduces the power consumption of the zeRIS controller shared by each zeRIS element $\frac{P_c}{N}$ in (\ref{C}) and enhances the eavesdropping signal at E, which increases the probability of energy sufficiency event and data interception event.
In addition, in mode-\Rmnum{2} and mode-\Rmnum{3}, increasing $N$ leads to an increase in JIP under low $P_s$, while increasing $N$ leads to a decrease in JIP under high $P_s$.
As discussed, this is due to the fact that under low $P_s$, the impact of increasing $N$ on reducing $\frac{P_c}{N}$ plays a dominant role in the JIP, which indicates that the probability of energy sufficiency event grows to a greater extent than the probability of data interception event, thus increasing the JIP.
Nevertheless, under high $P_s$, increasing $N$ makes the artificial noise at E stronger, and further decreases the probability of data interception event.
This enhancement outweighs the negative effect of an increase in the probability of energy sufficiency event, thus improving security of the system.
Moreover, it should be mentioned that the asymptotic performance of mode-\Rmnum{1} is independent of $N$, while the performance floors in mode-\Rmnum{2} and mode-\Rmnum{3} can be enhanced by increasing $N$, corroborating the conclusion in \textit{Remark \ref{remark7}}.

\begin{figure}[t]
    \centering
    \includegraphics[width=0.6\linewidth]{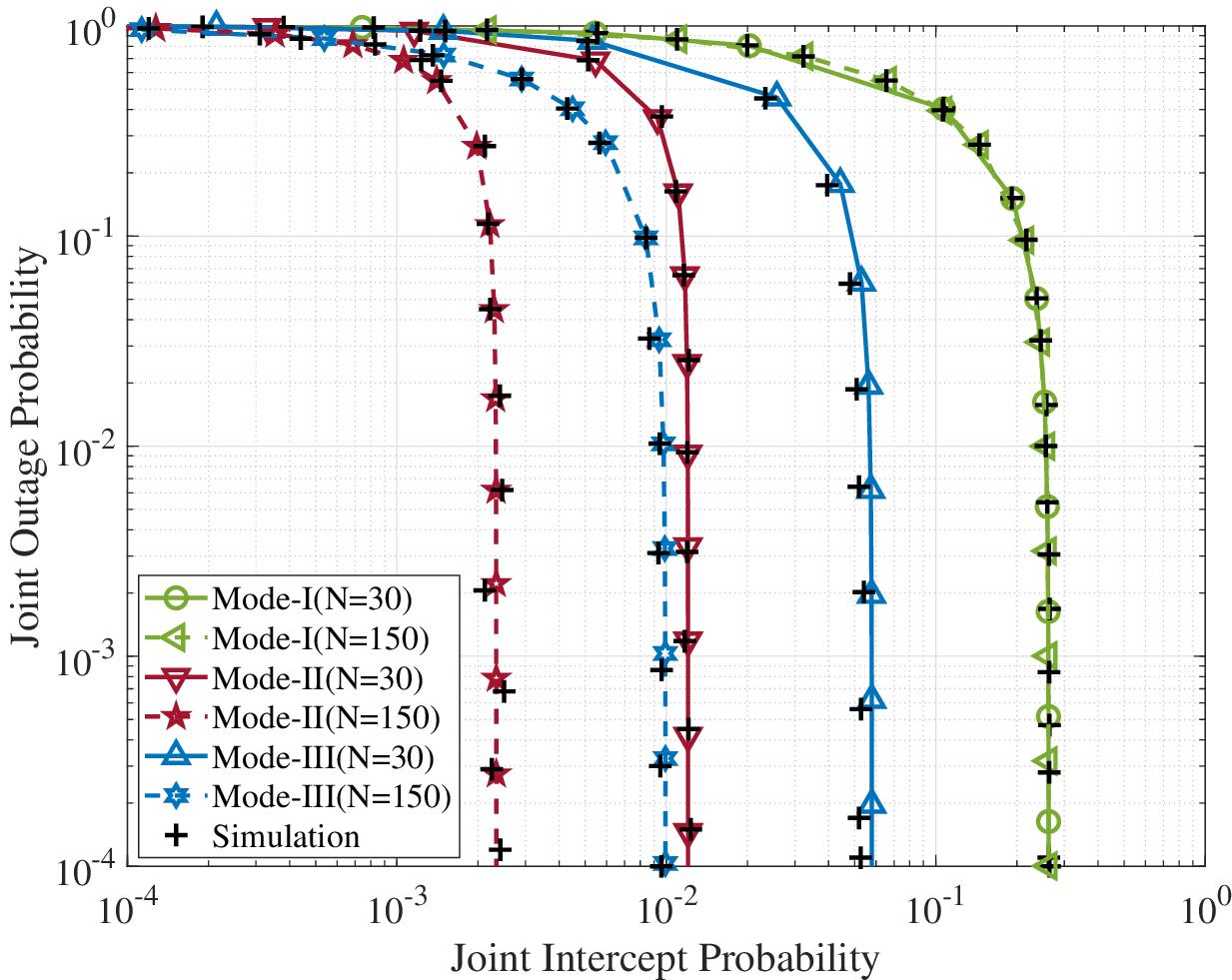}
    \caption{The JOP versus JIP for proposed modes under different $N$.}
    \label{fig6}
\end{figure}

\begin{figure}[t]
    \centering
    \includegraphics[width=0.6\linewidth]{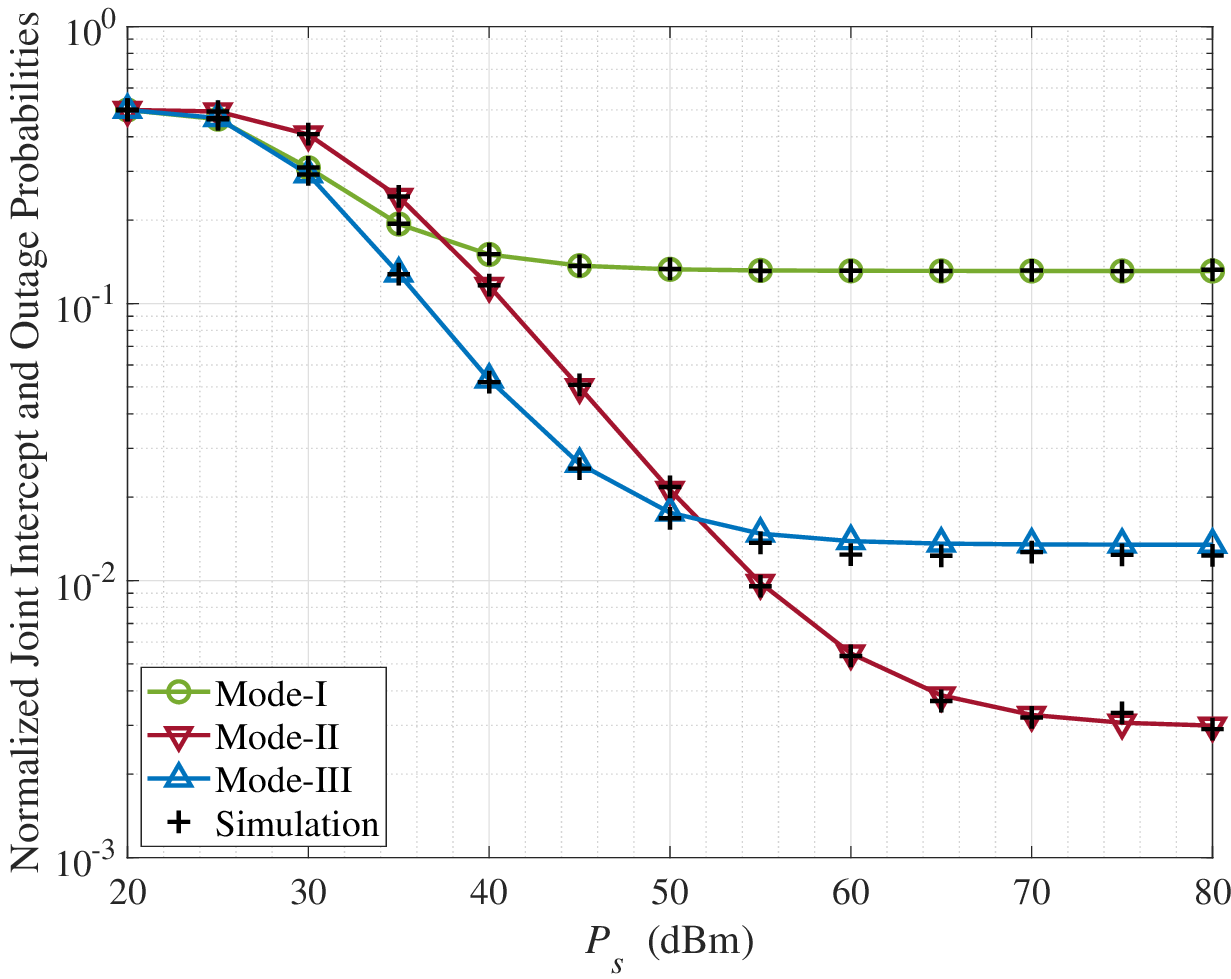}
    \caption{Normalized JIOP versus $P_s$ for proposed modes, where $N=60$.}
    \label{fig7}
\end{figure}

\begin{figure}[t]
    \centering
    \includegraphics[width=0.6\linewidth]{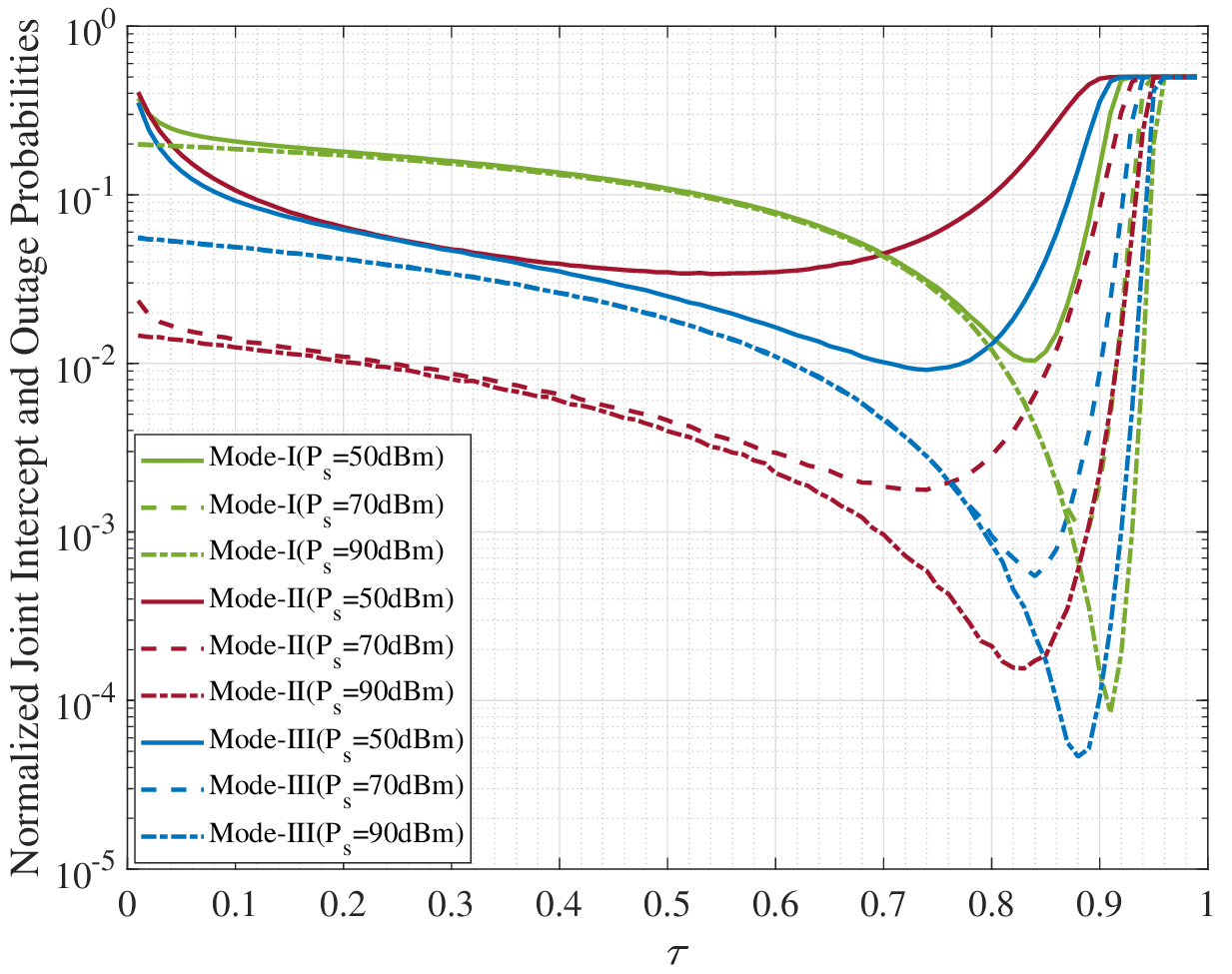}
    \caption{Normalized JIOP versus $\tau$ for proposed modes under different $P_s$.}
    \label{fig8}
\end{figure}

Fig. \ref{fig6} plots that the JOP versus JIP for proposed modes under different number of zeRIS elements $N$.
It is worth noting that the JOPs of all modes decrease as the JIPs increase, and vice versa.
This inverse relationship emphasizes the trade-off between the security and reliability of the system, this metric hereinafter referred to as the SRT, which is related to the fact that the signals received by both legitimate user and eavesdropper are enhanced by the increase in $P_s$.
Under the same condition, the SRT curve of mode-\Rmnum{2} is lower than the other two modes, followed by mode-\Rmnum{3} and mode-\Rmnum{1}.
This result means that mode-\Rmnum{2} achieves the best SRT performance, but this is achieved at the cost of higher system overhead, where all zeRIS elements $N$ in mode-\Rmnum{2} are required to acquire the CSI of E, while only $N_1$ zeRIS elements in mode-\Rmnum{3} are required to acquire the CSI of E, suggesting a trade-off between the SRT performance and the system overhead.
Furthermore, more pre-knowledge of the continuously updated CSI significantly increases system overhead.
It can be observed from the figure that the SRT curve of mode-\Rmnum{1} remains nearly constant with increasing $N$, while that of the other two modes become lower.
Additionally, mode-\Rmnum{3} with $N=150$ achieves a better SRT than mode-\Rmnum{2} with $N=30$, implying that the performance gap between modes can be bridged by increasing $N$.
The above results confirm the conclusions in Fig. \ref{fig2} and Fig. \ref{fig3} once again.

Fig. \ref{fig7} reveals that normalized JIOP, another performance metric to illustrate SRT performance, versus transmission power $P_s$ for proposed modes.
As the normalized JIOP decreases, the SRT performance increases.
Based on the results in Fig. \ref{fig2} and Fig. \ref{fig3}, normalized JIOPs of all three modes are equal to $\frac{1}{2}$ at low $P_s$ and half of their respective JIPs at high $P_s$, where the JIPs and JOPs can be ignored respectively at low and high $P_s$ due to their exponentially small values.
An unexpected phenomenon is that among the three modes, mode-\Rmnum{3} has the lowest normalized JIOP when $P_s$ is below 51.7 dBm, while normalized JIOP of mode-\Rmnum{2} is the lowest when $P_s$ is above 51.7 dBm, demonstrating the superiority of mode-\Rmnum{3} and mode-\Rmnum{2} at low and high $P_s$, respectively.
Moreover, normalized JIOP of mode-\Rmnum{2} is even higher than that of mode-\Rmnum{1} with $P_s$ below 37.6 dBm.
The reason for these results is that when $P_s$ is less than 51.7 dBm, normalized JIOP is dominated by JOP, while normalized JIOP is dominated by JIP conversely.

\begin{figure}[t]
    \centering
    \includegraphics[width=0.6\linewidth]{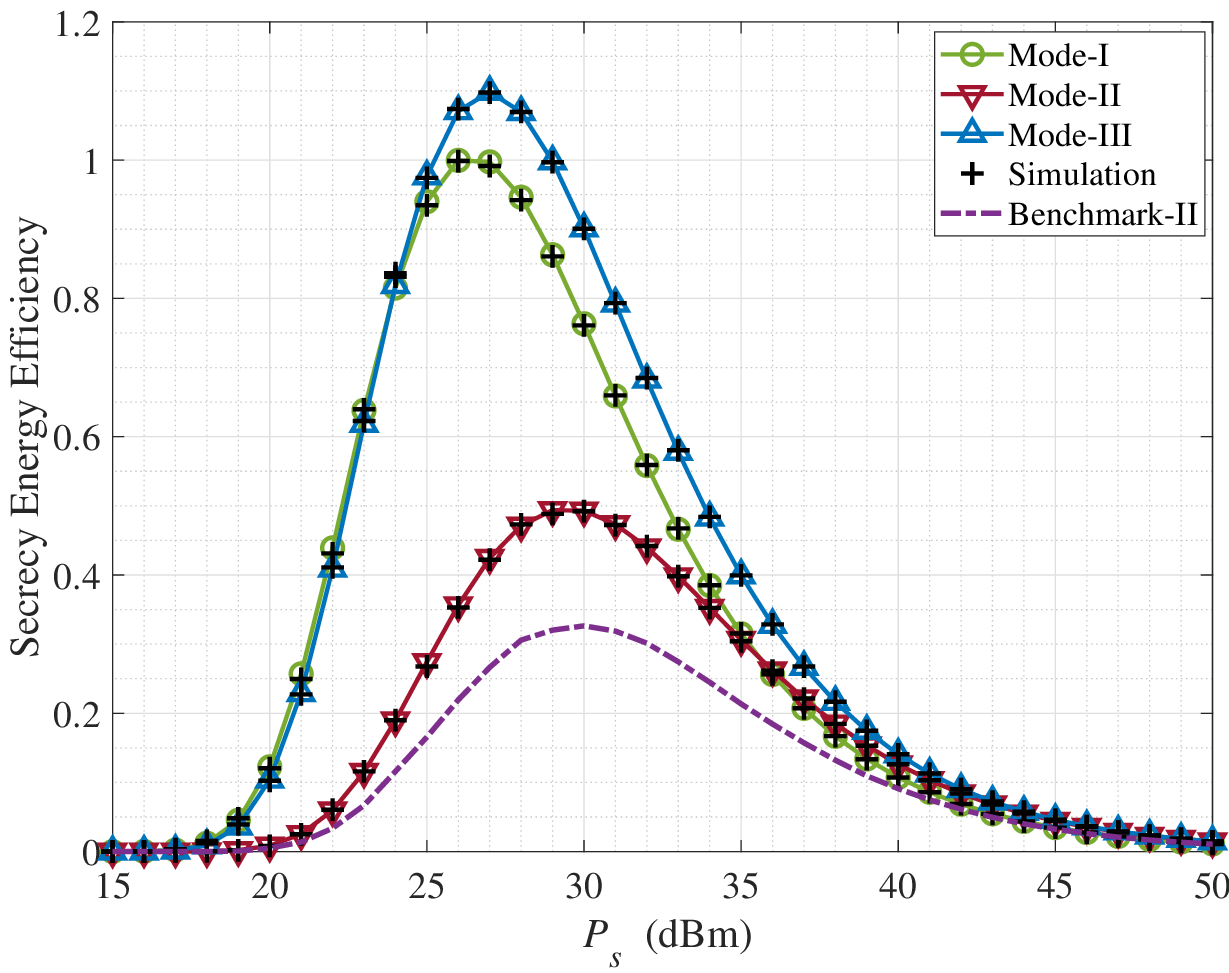}
    \caption{The SEE versus $P_s$ for proposed modes, where $N=100$.}
    \label{fig9}
\end{figure}

Fig. \ref{fig8} demonstrates that normalized JIOP versus time allocation factor $\tau$ for proposed modes under different $P_s$.
In particular, normalized JIOPs of proposed three modes first fall to their respective minimums and then start to grow in the cases of $P_s=50,70,90$ dBm, indicating that each mode possesses an optimal time allocation factor $\tau^{opt}$ for minimizing normalized JIOPs between ET stage and IT stage, respectively.
One can observe from the figure that under the same $P_s$, normalized JIOP corresponding to $\tau^{opt}$ of mode-\Rmnum{3} is the lowest, followed by mode-\Rmnum{1} and mode-\Rmnum{2} orderly, implying that mode-\Rmnum{3} can achieve the best SRT performance, and thus mode-\Rmnum{3} is the optimal mode among the three modes.
Besides, when $P_s$ varies from low to high, the above conclusion is still valid, and the optimal time allocation factors of all three modes appear later with increasing $\tau$.
The underlying reason for this result is that under high $P_s$, the JOP is low enough, and normalized JIOP is dominated by JIP, so that a smaller proportion of time needs to be allocated to IT stage $(1-\tau)T$ to reduce the eavesdropping signal transmission time, thereby decreasing the JIP and guaranteeing the security of the system.
As it can be observed, when $\tau$ is too high, all curves converge to $\frac{1}{2}$.
This result is in light of the fact that in the high $\tau$, the most of time is allocated to ET stage, so the probability of energy outage event and energy sufficiency event converge to 0 and 1, respectively.
In addition, almost no time is allocated to the IT stage, resulting in the probability of data outage event and data interception event converging to 1 and 0, respectively.

Fig. \ref{fig9} depicts that the SEE versus transmission power $P_s$ for proposed modes.
As shown in the figure, the SEE of proposed three modes is higher than benchmark-\Rmnum{2}, showing the superiority of proposed three modes in terms of security, reliability and energy efficiency.
Mode-\Rmnum{3} has the highest SEE among three modes, followed by mode-\Rmnum{1} and mode-\Rmnum{2} orderly, showing the superiority of mode-\Rmnum{3}.
Besides, the SEE of all three modes first increases and then decreases with the increase of $P_s$.
This is because that, as $P_s$ increases, JOP decreases while JIP increases.
At low $P_s$, the enhanced effect of decreasing JOP on SEE is stronger than the deteriorated effect of increasing JIP and $P_s$ on SEE, while the opposite is true at high $P_s$.

\section{Conclusion}
In this paper, we introduced the concept of wireless-powered zeRIS, and presented a wireless-powered zeRIS aided communication system in terms of security, reliability and energy efficiency.
We proposed three new wireless-powered zeRIS modes and three new metrics, i.e., JOP, JIP, and SEE, and derived their closed-form expressions, respectively.

Particularly, our work unveils that under high transmission power, all the diversity gains of three modes are 1, and the JOPs of mode-\Rmnum{1}, mode-\Rmnum{2} and mode-\Rmnum{3} are improved by increasing the number of zeRIS elements, which are related to $N^2$, $N$, and $N_1^2$, respectively.
In addition, from the perspective of reliability, mode-\Rmnum{1} achieves the best JOP, while mode-\Rmnum{2} achieves the best JIP from the perspective of security.
We exploit two SRT metrics, i.e., JOP versus JIP and normalized JIOP, to analyze the SRT performance of the proposed three modes.
For the JOP versus JIP, mode-\Rmnum{2} outperforms the other two modes.
Moreover, for normalized JIOP, mode-\Rmnum{3} and mode-\Rmnum{2} achieve the best SRT performance at low and high transmission power, respectively.
Surprisingly, the SEE of mode-\Rmnum{3} is the highest with increasing transmission power.


\renewcommand\thesubsectiondis{\Roman{subsection}.}

\setcounter{equation}{0}
\renewcommand{\theequation}{\thesection.\arabic{equation}}

\begin{appendices}
\section{Proof of Lemma 1}\label{AppendixA}

Let us denote $N_{1}+1=\lambda$, $\Delta_{1}$ can be rewritten as
\begin{align}
\label{Delta1}
&\Delta_{1}={\left\lvert \sum_{n_1 = 1}^{N_1}  \xi_{n_1} + \sum_{n_2 = \lambda}^{N} \xi_{n_2}    e^{j\omega_{n_2}} \right\rvert }^{2} \nonumber
\\
&=\underbrace{\sum_{n_1 = 1}^{N_1} \xi_{n_1}^2}_{\Lambda_1} + \underbrace{\sum_{n_1 = 1}^{N_1}\sum_{\substack{m_1 = 1 \\ m_1\neq n_1}}^{N_1} \xi_{n_1} \xi_{m_1}}_{\Lambda_2}  +\underbrace{\sum_{n_1 = 1}^{N_1}\sum_{n_2 = \lambda}^{N} \xi_{n_1} \xi_{n_2}   \cos\omega_{n_2}}_{\Lambda_3} \nonumber
\\
& + \underbrace{\sum_{n_2 = \lambda}^{N}   \xi_{n_2}^2}_{\Lambda_4} +\underbrace{\sum_{n_2 = \lambda}^{N} \sum_{\substack{m_2 = \lambda\\m_2\neq n_2}}^{N}  \xi_{n_2}   \xi_{m_2} \cos (\omega_{n_2}-\omega_{m_2}) }_{\Lambda_5}.
\end{align}

Hence, the first moment of $\Delta_{1}$ can be calculated as
\begin{align}
    \label{EDelta1}
    &u_{\Delta_{1}}=\mathbb{E} \{\Delta_{1}\} \nonumber
    \\
    &=\sum_{n_1 = 1}^{N_1} \mathbb{E} \{\xi_{n_1}^2\} + \sum_{n_1 = 1}^{N_1}\sum_{\substack{m_1 = 1 \\ m_1\neq n_1}}^{N_1} \mathbb{E} \{\xi_{n_1}\} \mathbb{E} \{\xi_{m_1}\} \nonumber
    \\
    &+2\sum_{n_1 = 1}^{N_1}\sum_{n_2 = \lambda}^{N} \mathbb{E} \{\xi_{n_1}\} \mathbb{E} \{\xi_{n_2}\}  \mathbb{E} \{ \cos\omega_{n_2}\} +\sum_{n_2 = \lambda}^{N}  \mathbb{E} \{ \xi_{n_2}^2\} \nonumber
    \\
    &+\sum_{n_2 = \lambda}^{N} \sum_{\substack{m_2 = \lambda\\m_2\neq n_2}}^{N} \mathbb{E} \{ \xi_{n_2} \} \mathbb{E} \{ \xi_{m_2}\}\mathbb{E} \{ \cos (\omega_{n_2}-\omega_{m_2}) \}.
\end{align}

\vspace{-0.5em}
Moreover, $\Delta_{1}^2$ can be expressed as
\begin{align}
    \label{Delta12}
    \Delta_{1}^2=&(\Lambda_1+\Lambda_2+\Lambda_3+\Lambda_4+\Lambda_5)^2 \nonumber
    \\
    =&\Lambda_1^2 + \Lambda_2^2 + \Lambda_3^2 + \Lambda_4^2 + \Lambda_5^2 + 2\Big(\Lambda_1\Lambda_2 + \Lambda_1\Lambda_3 + \Lambda_1\Lambda_4 \nonumber
    \\
    &+ \Lambda_1\Lambda_5 + \Lambda_2\Lambda_3 + \Lambda_2\Lambda_4 + \Lambda_2\Lambda_5 + \Lambda_3\Lambda_4 + \Lambda_3\Lambda_5 \nonumber
    \\
    &+ \Lambda_4\Lambda_5\Big).
\end{align}

\vspace{-0.5em}
The expectation of each term in (\ref{Delta12}) is given by
\begin{align}
    \label{1*1}
    &\mathbb{E}\{ \Lambda_1^2 \}=\sum_{n_1 = 1}^{N_1} \mathbb{E} \{ \xi_{n_1}^4 \} + \sum_{n_1 = 1}^{N_1}\sum_{\substack{m_1 = 1 \\ m_1\neq n_1}}^{N_1} \hspace{-0.4em} \mathbb{E} \{\xi_{n_1}^2\} \mathbb{E} \{\xi_{m_1}^2\},
    \\
    \label{2*2}
    &\mathbb{E}\{ \Lambda_2^2 \}= 2\sum_{n_1 = 1}^{N_1}\sum_{\substack{m_1 = 1 \\ m_1\neq n_1}}^{N_1} \mathbb{E} \{\xi_{n_1}^2\} \mathbb{E} \{\xi_{m_1}^2\} \nonumber
    \\
    &+ 2\sum_{n_1 = 1}^{N_1}\sum_{\substack{m_1 = 1 \\ m_1\neq n_1}}^{N_1} \sum_{\substack{p_1 = 1 \\ p_1\neq n_1, m_1}}^{N_1} \mathbb{E} \{\xi_{n_1}^2\} \mathbb{E} \{\xi_{m_1}\} \mathbb{E} \{\xi_{p_1}\} \nonumber
    \\
    &+ 2\sum_{n_1 = 1}^{N_1}\sum_{\substack{m_1 = 1 \\ m_1\neq n_1}}^{N_1} \sum_{\substack{p_1 = 1 \\ p_1\neq n_1, m_1}}^{N_1} \mathbb{E} \{\xi_{n_1}\} \mathbb{E} \{\xi_{m_1}^2\} \mathbb{E} \{\xi_{p_1}\} \nonumber
    \\
    &+ \sum_{n_1 = 1}^{N_1}\sum_{\substack{m_1 = 1 \\ m_1\neq n_1}}^{N_1} \sum_{\substack{p_1 = 1 \\ p_1\neq n_1, m_1}}^{N_1} \sum_{\substack{q_1 = 1 \\ q_1\neq n_1, m_1,p_1}}^{N_1} \mathbb{E} \{\xi_{n_1}\} \mathbb{E} \{\xi_{m_1}\} \mathbb{E} \{\xi_{p_1}\} \nonumber 
    \\
    &\hspace{1em}\times \mathbb{E} \{\xi_{q_1}\},
    \\
    \label{3*3}
    &\mathbb{E}\{ \Lambda_3^2 \}=4\Bigg( \sum_{n_1 = 1}^{N_1} \sum_{n_2 = \lambda}^{N}  \mathbb{E} \{\xi_{n_1}^2\}\mathbb{E} \{\xi_{n_2}^2\} \mathbb{E} \{\cos^{2} \omega_{n_2}\} \nonumber
    \\
    &+ \sum_{n_1 = 1}^{N_1} \sum_{\substack{m_1 = 1 \\ m_1\neq n_1}}^{N_1} \sum_{n_2 = \lambda}^{N}  \mathbb{E} \{\xi_{n_1}\} \mathbb{E} \{\xi_{m_1}\}  \mathbb{E} \{\xi_{n_2}^2\} \mathbb{E} \{\cos^{2} \omega_{n_2}\} \nonumber
    \\
    &+ \sum_{n_1 = 1}^{N_1} \sum_{n_2 = \lambda}^{N} \sum_{\substack{m_2 = \lambda \\ m_2\neq n_2}}^{N}  \mathbb{E} \{\xi_{n_1}^2\} \mathbb{E} \{\xi_{n_2}\}  \mathbb{E} \{\xi_{m_2}\} \mathbb{E} \{\cos \omega_{n_2}\}  \nonumber
    \\
    &\hspace{1em} \times \mathbb{E} \{\cos \omega_{m_2}\} \nonumber
    \\
    &+ \sum_{n_1 = 1}^{N_1} \sum_{\substack{m_1 = 1 \\ m_1\neq n_1}}^{N_1} \sum_{n_2 = \lambda}^{N} \sum_{\substack{m_2 = \lambda \\ m_2\neq n_2}}^{N}  \mathbb{E} \{\xi_{n_1}\} \mathbb{E} \{\xi_{m_1}\}  \mathbb{E} \{\xi_{n_2}\} \mathbb{E} \{\xi_{m_2}\}  \nonumber
    \\
    &\hspace{1em} \times  \mathbb{E} \{\cos \omega_{n_2}\} \mathbb{E} \{\cos \omega_{m_2}\} \Bigg),
    \\
    \label{4*4}
    &\mathbb{E}\{ \Lambda_4^2 \}= \sum_{n_2 = \lambda}^{N} \mathbb{E} \{\xi_{n_2}^4\} + \sum_{n_2 = \lambda}^{N} \sum_{\substack{m_2 = \lambda \\ m_2\neq n_2}}^{N} \hspace{-0.5em} \mathbb{E} \{\xi_{n_2}^2\}\mathbb{E} \{\xi_{m_2}^2\},
    \\
    \label{5*5}
    &\mathbb{E}\{ \Lambda_5^2 \}=2\sum_{n_2 = \lambda}^{N} \sum_{\substack{m_2 = \lambda \\ m_2\neq n_2}}^{N} \mathbb{E} \{\xi_{n_2}^2\}\mathbb{E} \{\xi_{m_2}^2\} \mathbb{E} \{\cos^2 (\omega_{n_2}-\omega_{m_2})\} \nonumber
    \\
    &+2\sum_{n_2 = \lambda}^{N} \sum_{\substack{m_2 = \lambda \\ m_2\neq n_2}}^{N} \sum_{\substack{p_2 = \lambda \\ p_2\neq n_2,m_2}}^{N} \mathbb{E} \{\xi_{n_2}^2\}\mathbb{E} \{\xi_{m_2}\}\mathbb{E} \{\xi_{p_2}\} \nonumber
    \\
    &\hspace{1em} \times \mathbb{E} \{\cos (\omega_{n_2}-\omega_{m_2}) \cos (\omega_{n_2}-\omega_{p_2})\} \nonumber
    \\
    &+2\sum_{n_2 = \lambda}^{N} \sum_{\substack{m_2 = \lambda \\ m_2\neq n_2}}^{N} \sum_{\substack{p_2 = \lambda \\ p_2\neq n_2,m_2}}^{N} \mathbb{E} \{\xi_{n_2}\}\mathbb{E} \{\xi_{m_2}^2\}\mathbb{E} \{\xi_{p_2}\} \nonumber
    \\
    &\hspace{1em} \times \mathbb{E} \{\cos (\omega_{n_2}-\omega_{m_2}) \cos (\omega_{m_2}-\omega_{p_2})\} \nonumber
    \\
    &+\sum_{n_2 = \lambda}^{N} \hspace{-0.1em} \sum_{\substack{m_2 = \lambda \\ m_2\neq n_2}}^{N} \hspace{-0.1em} \sum_{\substack{p_2 = \lambda \\ p_2\neq n_2,m_2}}^{N} \hspace{-0.1em}\sum_{\substack{q_2 = \lambda \\ q_2\neq n_2,m_2,p_2}}^{N} \hspace{-1.5em} \mathbb{E} \{\xi_{n_2}\}\mathbb{E} \{\xi_{m_2}\}\mathbb{E} \{\xi_{p_2}\} \mathbb{E} \{\xi_{q_2}\}  \nonumber
    \\
    &\hspace{1em} \times \mathbb{E} \{\cos (\omega_{n_2}-\omega_{m_2})\} \mathbb{E} \{\cos (\omega_{p_2}-\omega_{q_2})\},
    \\
    \label{1*2}
    &\mathbb{E}\{ \Lambda_1\Lambda_2 \}= \sum_{n_1 = 1}^{N_1}\sum_{\substack{m_1 = 1 \\ m_1\neq n_1}}^{N_1} \mathbb{E} \{\xi_{n_1}^3\} \mathbb{E} \{\xi_{m_1}\} \nonumber
    \\
    & + \sum_{n_1 = 1}^{N_1}\sum_{\substack{m_1 = 1 \\ m_1\neq n_1}}^{N_1} \sum_{\substack{p_1 = 1 \\ p_1\neq n_1,m_1}}^{N_1} \mathbb{E} \{\xi_{n_1}^2\} \mathbb{E} \{\xi_{m_1}\} \mathbb{E} \{\xi_{p_1}\},
    \\
    \label{1*3}
    &\mathbb{E}\{ \Lambda_1\Lambda_3 \}=2\Bigg( \sum_{n_1 = 1}^{N_1} \sum_{n_2 = \lambda}^{N} \mathbb{E} \{\xi_{n_1}^3\} \mathbb{E} \{\xi_{n_2}\} \mathbb{E} \{\cos \omega_{n_2}\} \nonumber
    \\
    &+ \sum_{n_1 = 1}^{N_1}\sum_{\substack{m_1 = 1 \\ m_1\neq n_1}}^{N_1} \sum_{n_2 = \lambda}^{N} \mathbb{E} \{\xi_{n_1}^2\} \mathbb{E} \{\xi_{m_1}\} \mathbb{E} \{\xi_{n_2}\} \mathbb{E} \{\cos \omega_{n_2}\} \Bigg),
    \\
    \label{1*4}
    &\mathbb{E}\{ \Lambda_1\Lambda_4 \}=\sum_{n_1 = 1}^{N_1} \sum_{n_2 = \lambda}^{N}\mathbb{E} \{\xi_{n_1}^2\}\mathbb{E} \{\xi_{n_2}^2\},
    \\
    \label{1*5}
    &\mathbb{E}\{ \Lambda_1\Lambda_5 \}=\sum_{n_1 = 1}^{N_1} \sum_{n_2 = \lambda}^{N} \sum_{\substack{m_2 = \lambda \\ m_2\neq n_2}}^{N} \hspace{-0.35em} \mathbb{E} \{\xi_{n_1}^2\}  \mathbb{E} \{\xi_{n_2}\} \mathbb{E} \{\xi_{m_2}\},
    \\
    \label{2*3}
    &\mathbb{E}\{ \Lambda_2\Lambda_3 \}=2\Bigg( \sum_{n_1 = 1}^{N_1}\sum_{\substack{m_1 = 1 \\ m_1\neq n_1}}^{N_1} \sum_{\substack{p_1 = 1 \\ p_1\neq n_1,m_1}}^{N_1} \sum_{n_2 = \lambda}^{N}  \mathbb{E} \{\xi_{n_1}\} \mathbb{E} \{\xi_{m_1}\}  \nonumber
    \\
    &\hspace{1em} \times \mathbb{E} \{\xi_{p_1}\} \mathbb{E} \{\xi_{n_2}\}  \mathbb{E} \{\cos \omega_{n_2}\} \nonumber
    \\
    &+2 \sum_{n_1 = 1}^{N_1}\sum_{\substack{m_1 = 1 \\ m_1\neq n_1}}^{N_1} \sum_{n_2 = \lambda}^{N} \mathbb{E} \{\xi_{n_1}^2\} \mathbb{E} \{\xi_{m_1}\} \mathbb{E} \{\xi_{n_2}\}  \mathbb{E} \{\cos \omega_{n_2}\} \Bigg),
    \\
    \label{2*4}
    &\mathbb{E}\{ \Lambda_2\Lambda_4 \}= \sum_{n_1 = 1}^{N_1}\sum_{\substack{m_1 = 1 \\ m_1\neq n_1}}^{N_1} \sum_{n_2 = \lambda}^{N} \mathbb{E} \{\xi_{n_1}\} \mathbb{E} \{\xi_{m_1}\} \mathbb{E} \{\xi_{n_2}^2\},
    \\
    \label{2*5}
    &\mathbb{E}\{ \Lambda_2\Lambda_5 \}= \sum_{n_1 = 1}^{N_1}\sum_{\substack{m_1 = 1 \\ m_1\neq n_1}}^{N_1} \sum_{n_2 = \lambda}^{N} \sum_{\substack{m_2 = \lambda \\ m_2\neq n_2}}^{N} \mathbb{E} \{\xi_{n_1}\} \mathbb{E} \{\xi_{m_1}\} \mathbb{E} \{\xi_{n_2}\} \nonumber
    \\
    & \hspace{1em} \times \mathbb{E} \{\xi_{m_2}\} \mathbb{E} \{\cos (\omega_{n_2}-\omega_{m_2})\},
    \\
    \label{3*4}
    &\mathbb{E}\{ \Lambda_3\Lambda_4 \}=2\Bigg(  \sum_{n_1 = 1}^{N_1} \sum_{n_2 = \lambda}^{N} \mathbb{E} \{\xi_{n_1}\} \mathbb{E} \{\xi_{n_2}^3\} \mathbb{E} \{\cos\omega_{n_2}\} \nonumber
    \\
    &+ \sum_{n_1 = 1}^{N_1}  \sum_{n_2 = \lambda}^{N} \sum_{\substack{m_2 = \lambda \\ m_2\neq n_2}}^{N} \mathbb{E} \{\xi_{n_1}\} \mathbb{E} \{\xi_{n_2}\} \mathbb{E} \{\xi_{m_2}^2\} \mathbb{E} \{\cos\omega_{n_2}\} \Bigg),
    \\
    \label{3*5}
    &\mathbb{E}\{ \Lambda_3\Lambda_5 \}=2\Bigg( \sum_{n_1 = 1}^{N_1} \sum_{n_2 = \lambda}^{N} \sum_{\substack{m_2 = \lambda \\ m_2\neq n_2}}^{N} \sum_{\substack{p_2 = \lambda \\ p_2\neq n_2,m_2}}^{N} \mathbb{E} \{\xi_{n_1}\} \mathbb{E} \{\xi_{n_2}\} \nonumber
    \\
    &\hspace{1em} \times \mathbb{E} \{\xi_{m_2}\} \mathbb{E} \{\xi_{p_2}\} \mathbb{E} \{\cos\omega_{n_2}\} \mathbb{E} \{\cos (\omega_{m_2}-\omega_{p_2})\} \nonumber 
    \\
    &+ \sum_{n_1 = 1}^{N_1}  \sum_{n_2 = \lambda}^{N} \sum_{\substack{m_2 = \lambda \\ m_2\neq n_2}}^{N} \mathbb{E} \{\xi_{n_1}\} \mathbb{E} \{\xi_{n_2}^2\} \mathbb{E} \{\xi_{m_2}\} \mathbb{E} \{\cos\omega_{n_2}\} \nonumber
    \\
    &\hspace{1em} \times \mathbb{E} \{\cos (\omega_{n_2}-\omega_{m_2})\}  \nonumber
    \\
    &+ \sum_{n_1 = 1}^{N_1}  \sum_{n_2 = \lambda}^{N} \sum_{\substack{m_2 = \lambda \\ m_2\neq n_2}}^{N} \mathbb{E} \{\xi_{n_1}\} \mathbb{E} \{\xi_{n_2}\} \mathbb{E} \{\xi_{m_2}^2\} \mathbb{E} \{\cos\omega_{m_2}\} \nonumber
    \\
    &\hspace{1em} \times \mathbb{E} \{\cos (\omega_{n_2}-\omega_{m_2})\}   \Bigg),
    \\
    \label{4*5}
    &\mathbb{E}\{ \Lambda_4\Lambda_5 \}=\sum_{n_2 = \lambda}^{N} \sum_{\substack{m_2 = \lambda \\ m_2\neq n_2}}^{N} \mathbb{E} \{\xi_{n_2}^3\} \mathbb{E} \{\xi_{m_2}\}\mathbb{E} \{\cos (\omega_{n_2}-\omega_{m_2})\} \nonumber
    \\
    &+\sum_{n_2 = \lambda}^{N} \sum_{\substack{m_2 = \lambda \\ m_2\neq n_2}}^{N} \mathbb{E} \{\xi_{n_2}\} \mathbb{E} \{\xi_{m_2}^3\}\mathbb{E} \{\cos (\omega_{n_2}-\omega_{m_2})\}.
\end{align}

Due to the fact that $\left\lvert h_{ur,n_1} \right\rvert$, $\left\lvert h_{ra,n_1} \right\rvert$, $\left\lvert h_{ur,n_2} \right\rvert$ and $\left\lvert h_{ra,n_2} \right\rvert$ follow the identical Rayleigh distributions, the moments of $\xi_{n_1}$ and $\xi_{n_2}$ are given by
\begin{align}
    \label{R1}
    \mathbb{E} \{\xi_{n_1}\} &= \mathbb{E} \{\xi_{n_2}\} = \frac{\pi}{4}, \ \ \forall n_{1}, n_{2} ,
\\
    \label{R2}
    \mathbb{E} \{\xi_{n_1}^2\} &= \mathbb{E} \{\xi_{n_2}^2\} = 1, \ \ \forall n_{1}, n_{2} ,
\\
    \label{R3}
    \mathbb{E} \{\xi_{n_1}^3\} &= \mathbb{E} \{\xi_{n_2}^3\} = \frac{9}{16}\pi, \ \ \forall n_{1}, n_{2} ,
\\
    \label{R4}
    \mathbb{E} \{\xi_{n_1}^4\} &= \mathbb{E} \{\xi_{n_2}^4\} = 4, \ \ \forall n_{1}, n_{2} .
\end{align}

According to the fact that $\omega_{n_2},\omega_{m_2},\omega_{p_2} \sim U(-\pi, \pi)$ and \cite[Lemma 6]{9599656}, we have
\begin{align}
    \label{cos1}
    &\mathbb{E} \{\cos\omega_{n_2}\} = 0,
    \\
    \label{cos2}
    &\mathbb{E} \{\cos(\omega_{n_2}-\omega_{m_2})\} = 0,\ \ \forall m_2 \neq n_2 ,
    \\
    \label{cos3}
    &\mathbb{E} \{\cos^{2}(\omega_{n_2}-\omega_{m_2})\} = \frac{1}{2},\ \ \forall m_2 \neq n_2 ,
    \\
    &\mathbb{E} \{\cos(\omega_{n_2}-\omega_{m_2})\cos(\omega_{n_2}-\omega_{p_2})\} = 0,\nonumber
    \\
    \label{cos4}
    &\hspace{10em}\forall m_2 \neq n_2, p_2 \neq n_2, m_2 .
\end{align}

Based on (\ref{R1})-(\ref{cos4}), the expectations of $\Delta_{1}$ and $\Delta_{1}^2$ are respectively derived as
\begin{align}
    \label{EDelta1_final}
    &\mathbb{E}\{ \Delta_{1} \} = N + N_{1}(N_{1}-1)\frac{\pi^2}{16},
    \\
    \label{EDelta12_final}
    &\mathbb{E}\{ \Delta_{1}^2 \} =\frac{1}{256}\bigg( \Big (32 \pi^2 N_{1}+512 \Big ) N_2^2 + \Big (64 \pi^2 N_1^2 + 512 \nonumber
    \\
    &+ \big (1024-96\pi^2 \big ) N_1 \Big) N_2 + \pi^4 N_1^4 + \Big(96\pi^2-6\pi^4 \Big) N_1^3 \nonumber
    \\
    & + \Big( 11\pi^4 -216\pi^2+768 \Big)N_1^2 - \Big(6\pi^4 -120\pi^2 -256 \Big)N_1   \bigg). 
\end{align}

Therefore, substituting (\ref{EDelta1_final}) and (\ref{EDelta12_final}) into (\ref{k1theta1}), the shape parameter $k_1$ and scale parameter $\theta_1$ are obtained.

Similarly, exchanging $N_1$ and $N_2$, the distribution of $\Delta_{2}$ is also obtained, and the proof of Lemma \ref{lemma1} is completed.

\setcounter{equation}{0}

\section{JOP of Proposed Modes}\label{AppendixB}
\subsection{Proof of Theorem \ref{theorem1}}
Based on (\ref{JOPi2}), we can prove this theorem in two steps.

    First, we derive $A$.
    Due to the fact that the sum of complex Gaussian RVs still follows complex Gaussian distribution, $X={\beta }_{pr}\left\lvert\sum_{n = 1}^{N}{h}_{pr,n} \right\rvert^{2}$ has an exponential distribution with mean $N{\beta }_{pr}$.
    Therefore, utilizing (\ref{Eris}) and (\ref{Q}), we have
    \begin{align}\label{A}
        A=&{\rm Pr} \bigg[ X < \frac{NP_{e}+P_{c}}{P_t}  \bigg] \nonumber
        \\
        =&1-e^{-\frac{NP_{e}+P_{c}}{P_t N\beta_{pr}}}.
    \end{align}

    Then, before deriving $B^{\Rmnum{1}}$, we need to determine the statistic of zeRIS cascaded channels $Z_{\Rmnum{1}}=\sum_{n = 1}^{N} \left\lvert {h}_{ur,n} \right\rvert \left\lvert {h}_{ra,n} \right\rvert$.
    According to \cite[Lemma 1]{10130095}, the probability density function (PDF) of $Z_{\Rmnum{1}}$ can be denoted as
    \begin{align}
        \label{PDF3}
        f_{Z_{\Rmnum{1}}}(z)=\frac{z^{v-1}}{\Gamma(v)\varphi^{v}}e^{-\frac{z}{\varphi} },
    \end{align}
    where $v=\frac{N\pi^2}{16-\pi^2}$, $\varphi=\frac{16-\pi^2}{4\pi}$, and $\Gamma(\cdot)$ is the Gamma function \cite[Eq. (8.31)]{10.1115/1.3138251}.

    Utilizing (\ref{SNRa11}) and (\ref{PDF3}), we have
    \begin{align}
        \label{B1}
        B^{\Rmnum{1}}=&{\rm Pr}[(1-\tau)\log_2(1+{\rm{SNR}}_{a}^{\Rmnum{1}})<R] \nonumber
        \\
        =&{\rm Pr}\Big[ M_1 Z_{\Rmnum{1}}^2 < \frac{\epsilon}{\rho_t \beta_{ura}} \Big] \nonumber
        \\
        =&1-\int_{0}^{\infty} e^{-\frac{\varsigma}{\rho_t z^2}} \frac{z^{v-1}}{\Gamma(v)\varphi^{v}}e^{-\frac{z}{\varphi} } \,dz \nonumber
        \\
        \overset{(a)}{=}&1-\frac{\pi^2}{4L\Gamma(v)\varphi^v} \sum_{l = 1}^{L} \sqrt{1-\varpi_l^2}\sec^2u_l (\tan u_l)^{v-1} \nonumber
        \\
        &\times \exp \bigg( -\frac{\varsigma }{\rho_t \tan^2u_l}-\frac{\tan u_l}{\varphi}  \bigg),
    \end{align}
    where $M_1=\beta_{pu}\left\lvert h_{pu}\right\rvert^2 $, and step $(a)$ uses Gaussian-Chebyshev quadrature by replacing $z$ with $\tan u_l$ \cite{9858871}.

    Substituting (\ref{A}) and (\ref{B1}) into (\ref{JOPi2}), we derive the last result in (\ref{JOP1}).
    Thus, the proof of Theorem \ref{theorem1} is completed.

\subsection{Proof of Theorem \ref{theorem2}}
    The derivation of $A$ is the same as (\ref{A}).
    In order to determine $B^{\Rmnum{2}}$, the statistic of zeRIS cascaded channels $\sum_{n = 1}^{N} h_{ur,n} h_{ra,n} e^{j\phi_{\Rmnum{2},n}} $ needs to be concluded.
    By referring to \cite[Lemma 2]{9964281}, we can deduce the fact that $\sum_{n = 1}^{N} h_{ur,n} h_{ra,n} e^{j\phi_{\Rmnum{2},n}} $ follows the complex Gaussian distribution with zero mean and variance of $N$.
    Hence, the PDF of $Z_{\Rmnum{2}}={\beta}_{ura}{\left\lvert \sum_{n = 1}^{N} h_{ur,n} h_{ra,n} e^{j\phi_{\Rmnum{2},n}}  \right\rvert }^{2}$ can be denoted as
    \begin{align}
        \label{PDF4}
        f_{Z_{\Rmnum{2}}}(z)&=\frac{1}{N\beta_{ura}} e^{-\frac{z}{N\beta_{ura}}}.
    \end{align}

    Using (\ref{SNRa2}) and (\ref{PDF4}), we have
    \begin{align}
        \label{B2}
        B^{\Rmnum{2}}=&{\rm Pr}[(1-\tau)\log_2(1+{\rm{SNR}}_{a}^{\Rmnum{2}})<R] \nonumber
        \\
        =&{\rm Pr}\Big[ M_1 Z_{\Rmnum{2}} < \frac{\epsilon}{\rho_t} \Big] \nonumber
        \\
        =& 1 - \frac{1}{N\beta_{ura}} \int_{0}^{\infty} e^{-\frac{\epsilon}{\rho_t \beta_{pu} z}-\frac{z}{N\beta_{ura}}} \,dz \nonumber
        \\
        \overset{(b)}{=}&1-\sqrt{\frac{4\varsigma}{\rho_t N} } K_1 \Bigg(\sqrt{\frac{4\varsigma}{\rho_t N} } \Bigg),
    \end{align}
    where step $(b)$ uses \cite[Eq. (3.324.1)]{10.1115/1.3138251}.
    
    Referring to (\ref{A}) and (\ref{B2}), the final result in (\ref{JOP2}) is derived.
    Therefore, we conclude the proof of Theorem \ref{theorem2}.

\subsection{Proof of Theorem \ref{theorem3}}
    The derivation of $A$ is the same as (\ref{A}).
    In mode-\Rmnum{3}, $Z_{\Rmnum{3}}$ follows the same distribution of $\Delta_1$ in Lemma \ref{lemma1}.
    Employing (\ref{SNRa3}) and (\ref{PDF1}), we have
    \begin{align}
        \label{B3}
        &B^{\Rmnum{3}}={\rm Pr}[(1-\tau)\log_2(1+{\rm{SNR}}_{a}^{\Rmnum{3}})<R] \nonumber
        \\
        &={\rm Pr}\Big[ M_1 Z_{\Rmnum{3}} < \frac{\epsilon}{\rho_t \beta_{ura}} \Big] \nonumber
        \\
        &=1 - \int_{0}^{\infty} e^{-\frac{\varsigma}{\rho_t  z}} \frac{z^{k_{1}-1}}{\Gamma(k_1)\theta_{1}^{k_1}}e^{-\frac{z}{\theta_{1}} } \,dz \nonumber
        \\
        &\overset{(c)}{=} 1 - \frac{2}{\Gamma(k_1)} \Big(\frac{\varsigma}{\rho_t \theta_1} \Big)^{\frac{k_1}{2} } K_{k_1}\Big(2\sqrt{\frac{\varsigma}{\rho_t \theta_1}}\Big).
    \end{align}
    where step $(c)$ utilizes \cite[Eq. (3.471.9)]{10.1115/1.3138251}. 
    
    Based on the results of (\ref{A}) and (\ref{B3}), the final result in (\ref{JOP3}) is obtained.
    Hence, the proof of Theorem \ref{theorem3} is completed.

\setcounter{equation}{0}
\section{JIP of Proposed Modes}\label{AppendixC}
\subsection{Proof of Theorem \ref{theorem4}}
Based on (\ref{A}), we have
    \begin{align}
        \label{C}
        C=&{\rm Pr} \bigg[ X \geq \frac{NP_{e}+P_{c}}{P_t}  \bigg] \nonumber
        \\
        =&e^{-\frac{NP_{e}+P_{c}}{P_t N\beta_{pr}}}.
    \end{align}

    Define $Y_{\Rmnum{1},1}=\beta_{ure}{\left\lvert \sum_{n = 1}^{N} h_{ur,n} h_{re,n} e^{j\phi_{\Rmnum{1},n}}  \right\rvert }^{2}$, $Y_{\Rmnum{1},2}=\beta_{jre}{\left\lvert \sum_{n = 1}^{N} h_{jr,n} h_{re,n} e^{j\phi_{\Rmnum{1},n}}  \right\rvert }^{2}$, and $M_2=\beta_{pj}\left\lvert h_{pj}\right\rvert^2$.
    The PDFs of $Y_{\Rmnum{1},1}$ and $Y_{\Rmnum{1},2}$ are similar to (\ref{PDF3}), whose parameters are $N\beta_{ure}$ and $N\beta_{jre}$, respectively.
    Thus, we have
    \begin{align}
        \label{D1}
        D^{\Rmnum{1}}=&{\rm Pr}[(1-\tau)\log_2(1+{\rm{SNR}}_{e}^{\Rmnum{1}}) \geq R] \nonumber
        \\
        =&{\rm Pr}\Big[ M_1 Y_{\Rmnum{1},1} \geq \frac{\epsilon \rho_t M_2 Y_{\Rmnum{1},2} + \epsilon }{\rho_t } \Big] \nonumber
        \\
        =&\frac{1}{N^2 \beta_{pj} \beta_{ure} \beta_{jre}} \int_{0}^{\infty} e^{-\frac{\epsilon}{\rho_t \beta_{pu} y_1}-\frac{y_1}{N\beta_{ure}}} 
        \\
        &\times \underbrace{\int_{0}^{\infty} \int_{0}^{\infty} e^{-\frac{\epsilon y_2 \beta_{pj} + \beta_{pu} y_1}{\beta_{pu} \beta_{pj} y_1} m_2 -\frac{y_2}{N\beta_{jre}} }   \,dm_2 \,dy_2 }_{I_1}\,dy_1. \nonumber
    \end{align}

    By exploiting \cite[Eq. (3.352.4)]{10.1115/1.3138251}, $I_1$ in (\ref{D1}) is derived as
    \begin{align}
        \label{I1}
        I_1=-\frac{\beta_{pu} y_1}{\epsilon} e^{\frac{\beta_{pu} y_1}{\epsilon N \beta_{pj}\beta_{jre}}} {\rm Ei} \bigg( -\frac{\beta_{pu} y_1}{\epsilon N \beta_{pj} \beta_{jre}}\bigg).
    \end{align}

    Then, substituting (\ref{I1}) into (\ref{D1}), and exploiting Gaussian-Chebyshev quadrature, the closed-form expression of $D^{\Rmnum{1}}$ is obtained, and the proof of Theorem \ref{theorem4} is completed.

\subsection{Proof of Theorem \ref{theorem5}}
The derivation of $C$ is the same as (\ref{C}).
    Let us denote $Y_{\Rmnum{2},1}=\beta_{ure}{\left\lvert \sum_{n = 1}^{N} h_{ur,n} h_{re,n} e^{j\phi_{\Rmnum{1},n}}  \right\rvert }^{2}$ and $Y_{\Rmnum{2},2}= \sum_{n = 1}^{N} \left\lvert {h}_{jr,n} \right\rvert \left\lvert {h}_{re,n} \right\rvert$.
    It can be observed that $Y_{\Rmnum{2},2}$ follows the same distribution in (\ref{PDF3}).
    Hence, we have
    \begin{align}
        \label{D2}
        D^{\Rmnum{2}}=&{\rm Pr}[(1-\tau)\log_2(1+{\rm{SNR}}_{e}^{\Rmnum{1}}) \geq R] \nonumber
        \\
        =&{\rm Pr}\Big[ M_1 Y_{\Rmnum{2},1} \geq \frac{\epsilon \rho_t M_2 \beta_{jre} Y_{\Rmnum{2},2}^2 + \epsilon }{\rho_t } \Big] \nonumber
        \\
        =&\frac{1}{N \beta_{pj} \beta_{ure} \Gamma(v) \varphi^v } \int_{0}^{\infty} e^{-\frac{\epsilon}{\rho_t \beta_{pu} y_1}-\frac{y_1}{N\beta_{ure}}} 
        \\
        & \hspace{-1.5em}\times \underbrace{\int_{0}^{\infty} \int_{0}^{\infty} e^{-\frac{\epsilon  \beta_{pj} \beta_{jre} y_2^2 + \beta_{pu} y_1}{\beta_{pu} \beta_{pj} y_1} m_2  }  y_2^{v-1} e^{-\frac{y_2}{\varphi}}  \,dm_2 \,dy_2 }_{I_2}\,dy_1.\nonumber
    \end{align}

    According to \cite[Eq. (3.389.6)]{10.1115/1.3138251}, $I_2$ in (\ref{D2}) is expressed as
    \begin{align}
        \label{I2}
        &I_2= \frac{\beta_{pj} \Gamma(v-1)}{2} \bigg(\frac{\beta_{pu} y_1}{\epsilon \beta_{pj} \beta_{jre}} \bigg)^{\frac{v}{2}} \bigg[  \exp \left (\alpha_2  +i\frac{(v-2)\pi}{2}  \right )  \nonumber
        \\
        &\times \Gamma\Big(2-v, \alpha_2 \Big)  + \exp\left (-\alpha_2 -i\frac{(v-2)\pi}{2}  \right )  \Gamma\Big(2-v, -\alpha_2 \Big) \bigg],
    \end{align}
    where $\alpha_2=\frac{i}{\varphi}\sqrt{\frac{\beta_{pu}y_1}{\epsilon\beta_{pj}\beta_{jre}} }$.

    By inserting (\ref{I2}) into (\ref{D2}), we use Gaussian-Chebyshev quadrature to yield the closed-form expression of $D^{\Rmnum{2}}$, and the proof of Theorem \ref{theorem5} is completed.

\subsection{Proof of Theorem \ref{theorem6}}
We find that $Y_{\Rmnum{3},1}=Y_{\Rmnum{1},1}$, and $Y_{\Rmnum{3},2}$ follows the same distribution of $\Delta_2$, so that the PDF of $Y_{\Rmnum{3},2}$ is also obtained from (\ref{PDF1}).
    Thus, we have
    \begin{align}
        \label{D3}
        D^{\Rmnum{3}}=&{\rm Pr}[(1-\tau)\log_2(1+{\rm{SNR}}_{e}^{\Rmnum{3}}) \geq R] \nonumber
        \\
        =&{\rm Pr}\Big[ M_1 Y_{\Rmnum{3},1} \geq \frac{\epsilon \rho_t M_2 \beta_{jre} Y_{\Rmnum{3},2} + \epsilon }{\rho_t } \Big] \nonumber
        \\
        =&\frac{1}{N \beta_{pj} \beta_{ure} \Gamma(k_2) \theta_2^{k_2}} \int_{0}^{\infty} e^{-\frac{\epsilon}{\rho_t \beta_{pu} y_1}-\frac{y_1}{N\beta_{ure}}} 
        \\
        & \hspace{-2.5em}\times \underbrace{\int_{0}^{\infty} \int_{0}^{\infty} e^{-\frac{\epsilon  \beta_{pj} \beta_{jre} y_2 + \beta_{pu} y_1}{\beta_{pu} \beta_{pj} y_1} m_2  }  y_2^{k_2-1} e^{-\frac{y_2}{\theta_2}}  \,dm_2 \,dy_2 }_{I_3} \,dy_1. \nonumber
    \end{align}

    Based on \cite[Eq. (3.383.10)]{10.1115/1.3138251}, $I_3$ in (\ref{D3}) is given by
    \begin{align}
        \label{I3}
        I_3=&\beta_{pj} \bigg(\frac{\beta_{pu} y_1}{\epsilon \beta_{pj} \beta_{jre}} \bigg)^{k_2} e^{\frac{\beta_{pu} y_1}{\epsilon \beta_{pj} \beta_{jre} \theta_2}} \Gamma(k_2) \nonumber
        \\
        &\times \Gamma \bigg (1-k_2,\frac{\beta_{pu} y_1}{\epsilon \beta_{pj} \beta_{jre} \theta_2} \bigg).
    \end{align} 
    
    Substituting (\ref{I3}) into (\ref{D3}), and using Gaussian-Chebyshev quadrature, we derive the closed-form expression of $D^{\Rmnum{3}}$, and the proof of Theorem \ref{theorem6} is completed.

\end{appendices}

\vspace{-0.5em}
\tiny
\bibliographystyle{IEEEtran}
\bibliography{IEEEabrv,citation}
\end{document}